\pgfplotsset{compat=1.7}
\numberwithin{equation}{section}
\numberwithin{figure}{section}
\theoremstyle{definition}
\newtheorem{definition}{Definition}[section]
\newtheorem{observation}{Observation}[section]
\newtheorem{theorem}{Theorem}[section]
\newtheorem{example}{Example}[section]
\title{Discrete Linear Canonical Evolution}
\author[1]{Jakub Káninský}
\affil[1]{Charles University, Faculty of Mathematics and Physics, Institute of Theoretical Physics. E-mail address: jakubkaninsky@seznam.cz}
\date{\today}
\titleformat{\section}{\normalfont\scshape\large}{\thesection}{2em}{}
\titleformat{\subsection}{\normalfont\scshape\normalsize}{\thesubsection}{1em}{}
\titleformat{\subsubsection}{\itshape\normalsize}{\thesubsubsection}{1em}{}
\renewenvironment{abstract}
 {\small
  \begin{center}
  \end{center}
  \list{}{
    \setlength{\leftmargin}{15mm}%
    \setlength{\rightmargin}{\leftmargin}%
  }
  \item\relax}
 {\endlist}
\begin{document}

\maketitle

\begin{abstract}
This work builds on an existing model of discrete canonical evolution and applies it to the general case of a linear dynamical system, i.e., a finite-dimensional system with configuration space isomorphic to $ \mathbb{R}^{q} $ and linear equations of motion. The system is assumed to evolve in discrete time steps. The most distinctive feature of the model is that the equations of motion can be irregular. After an analysis of the arising constraints and the symplectic form, we introduce adjusted coordinates on the phase space which uncover its internal structure and result in a trivial form of the Hamiltonian evolution map. For illustration, the formalism is applied to the example of massless scalar field on a two-dimensional spacetime lattice.
\end{abstract}

\vspace{3\baselineskip}

\tableofcontents

\newpage

\section{Introduction}

Discrete dynamical systems have wide use in many areas of science including physics, engineering, biology, demography, finance and economics \cite{Zhang2020, Zhang2006, Allen2004}. Sometimes this is because the problem at hand is naturally described in the discrete setting, sometimes rather because such formulation is far easier and better suited for an automated computation. Over the past decades, the latter reason have given rise to a vast body of mathematics and computer science concerned with discretization \cite{Zienkiewicz2013}. In the most general regard, we can define a discrete dynamical system to be any system whose evolution happens in a series of discrete time steps. This is often realized by iterations of a single fixed function on the phase space; where the function may be linear or non-linear in the phase-space coordinates and the phase space may have one or more dimensions \cite{Galor2007}. However, one may consider a slightly richer setting, in which one allows the evolution to be governed by a set of parameters which change in time.\\

In this work we study discrete dynamical systems of arbitrary finite dimension with linear evolution mappings and time-varying parameters. In doing so, we adopt an approach which is profoundly physical. The evolution of our system will be defined by an action functional, which gives rise to equations of motion in the canonical, Hamiltonian language, and a phase space endowed with canonical coordinates as well as the symplectic form. The \textit{discrete linear canonical evolution} advertised by the title is meant in exactly this context: as a classical discrete Hamiltonian evolution of a linear dynamical system well known from physics. Our interest in this particular setting is motivated by discrete models of spacetime which represent an important complement to the standard formulation of general relativity based on continuous differential manifolds. These can be used to study the behavior of classical and quantum fields on simplicial manifolds \cite{Hamber2009, McDonald2010, Sorkin1975, Foster2004, Brower2016, Brower2018}. Discrete spacetime models have also proven powerful for studying certain aspects of gravitation and are central to a few well-established approaches to quantum gravity \cite{Oriti2009, Hamber2009, Jha2018, Mikovic2018, Loll2019, Dittrich2011}.\\

Our starting point is the model of \textit{discrete canonical evolution} introduced by B.~Dittrich and P.~A.~H\"{o}hn in a series of articles \cite{Dittrich2011, Dittrich2012, Dittrich2013} motivated by an application to simplicial gravity. Among other things, the authors provide two versions of the discrete canonical evolution, a global and a local one, and a key theorem about the conservation of the symplectic form. Notably, the formalism allows for irregular evolution which need not always provide one-to-one correspondence between initial and final states. Thanks to that, it extends to systems with degenerate action or time-varying number of degrees of freedom. The irregularity induces constraints as well as non-uniqueness of the classical evolution: a feature which is otherwise generally not very common in physics. Consequently, conservation of the symplectic form is only limited. The authors give an analysis of the arising constraints \cite{Dittrich2013} which follows the path of the original Dirac's classification \cite{Dirac2001}.\\

The problematics of discrete linear evolution was subsequently addressed in the article \cite{Hoehn2014} where the author performed a detailed classification of constraints for this special case. Therein, both constraints and degrees of freedom were classified into eight types according to their dynamical behavior, and a new basis on the phase space was defined which separates first class and second class constraints. A notion of \textit{reduced phase space} was introduced, followed by an analysis and classification of so-called \textit{effective actions}. All these concepts are very useful in helping us understand the Hamiltonian evolution of discrete linear systems. The present work has the same aim, it only takes a different path in pursuing it. Instead of constraints, we put the symplectic structure into the foreground. We also limit our analysis to a single time-step, which allows for a much simpler viewpoint. There are number of places where the parallel between our work and \cite{Hoehn2014} becomes significant, these will be pointed out in the text. Eventually, let us mention that the article \cite{Hoehn2014} treats in its last section also the quantum case which is beyond the scope of the present work.\\

Within the paper, we only consider the global version of the discrete canonical evolution. Its brief review can be found in Sec. 2. It will be applied strictly to the special case of \textit{linear dynamical system}: a system of finite dimension, vector configuration space and linear equations of motion. On one hand, the assumption of linearity is very restrictive; on the other hand, it allows us to analyze the evolution efficiently by means of standard linear-algebraic tools. We recall the definition of linear dynamical system along with a couple of important details in Sec. 3. The only thing we alter about this classical definition is that we exchange the (implicit assumption of) continuous time for discrete time steps. The next Sec. 4 summarizes some essentials from linear algebra. We will take advantage of these throughout the work.\\

Sec. 5 contains the main body of the work. As in \cite{Hoehn2014}, we use the assumption of linearity to further develop the formalism given in the original papers \cite{Dittrich2012, Dittrich2013} and explore the impact of irregularity on the dynamics of the system. However, unlike in \cite{Hoehn2014}, we employ singular value decomposition to describe the Hamiltonian evolution map and all the constraints explicitly. We provide an elementary rewriting of the linear canonical evolution in terms of matrices, followed by an analysis of constraints with respect to the symplectic structure. We further take the opportunity to build two special coordinate frames on the phase space which are fundamentally different from the frame described in \cite{Hoehn2014}. These are adapted to the constraint surfaces of the time-step in question and result in a trivial evolution prescription. Both the previously existing and newly introduced notions are given an explicit matrix form, and together constitute an effective framework well suited for an immediate implementation. In the final part of the section, we discuss global solutions.\\

Eventually, in Sec. 6 we provide a simple yet physically sound illustration of the theoretical concepts introduced before. In order to do that, we consider a particular instance of a discrete linear system: massless scalar field on a fixed two-dimensional spacetime lattice. Similar models \cite{McDonald2010, Hamber2009, Brower2018} have been studied since the pioneering work by Regge \cite{Regge1961}. In our classical case, the formalism of discrete linear evolution applies straightforwardly. After a brief general description of the system, we offer a few mini-examples of time-slice lattices and work out the corresponding evolution moves in detail. It is shown how the dynamics of the field is shaped by the geometry and causal structure of the underlying spacetime, here described solemnly by its lattice representation.\\

\vspace{\baselineskip}

\section{Discrete Canonical Evolution}

For a treatment of the canonical evolution of a discrete system we refer to the formalism originally built to describe the evolution of simplicial gravity \cite{Dittrich2011, Dittrich2012, Dittrich2013}. Its functioning is briefly reviewed in this section. For more details, we suggest the reader consult the original articles.\\

Let $ \mathcal{Q}_n $ be the configuration space of a discrete system at time-slice $ n $ with coordinates $ x_{n A} $ (we will occasionally omit the index $ A $). Note that the configuration spaces at different time-slices need not be of the same dimension $ \dim(\mathcal{Q}_n) \equiv q_{n} $. The dynamics of the system shall be described by the action
\begin{equation}\label{action}
S = \sum_{n=0}^{t-1} S_{n+1}(x_{n},x_{n+1})
\end{equation}
where the sum ranges over the individual time-slices $ n $. The action contribution $ S_{n+1} $ governs the discrete time evolution move during the time-step between $ n $ and $ n+1 $. We assume that the action is additive, so that the sum in \eqref{action} makes sense. (This is a restrictive condition since one could in principle consider non-additive actions; for instance those that include interactions between more than two consecutive time-slices. These are ruled out by our additivity requirement.)\\

Let us treat \emph{global} time evolution moves, i.e., such moves that each of the variables at a given time-slice is involved in the move and only occurs at this one time-slice, so that neighboring time-slices do not overlap. For example, evolution moves in simplicial gravity which evolve between disjoint spacial hypersurfaces are global.\\

Consider three consecutive time-slices $ n-1, n, n+1 $ and the boundary value problem defined by the data at times $ n-1 $ and $ n+1 $. That is, we are given the boundary data $ x_{n-1} $ and $ x_{n+1} $ and ought to extremize $ S = S_n(x_{n-1},x_{n}) + S_{n+1}(x_{n},x_{n+1}) $ with respect to $ x_n $. This yields the equations of motion
\begin{equation}
0 = \frac{\partial S_n}{\partial x_n} + \frac{\partial S_{n+1}}{\partial x_n}
\end{equation}
which may or may not be uniquely solvable for $ x_n $ as a function of $ x_{n-1}, x_{n+1} $, depending on whether the system under consideration is regular or irregular. An initial value problem can be treated in analogy by computing $ x_{n+1} $ from $ x_{n-1},x_{n} $; then the equations of motion provide the Lagrangian time evolution $ \mathbb{L}_n : \mathcal{Q}_{n-1} \times \mathcal{Q}_{n} \rightarrow \mathcal{Q}_{n} \times \mathcal{Q}_{n+1} $. It may not, however, be defined on all of $ \mathcal{Q}_{n-1} \times \mathcal{Q}_{n} $, nor map to all of $ \mathcal{Q}_{n} \times \mathcal{Q}_{n+1} $, nor be unique in the presence of constraints.\\

\vspace{\baselineskip}

In order to describe the dynamics in canonical language, one may introduce \emph{discrete Legendre transformations}. For an arbitrary time-slice $ n $, we have $ S_n : \mathcal{Q}_{n-1} \times \mathcal{Q}_{n} \rightarrow \mathbb{R} $ where $ \mathcal{Q}_{n-1} \times \mathcal{Q}_{n} $ is a fibre bundle. Pick a point $ q_{n-1} $ in $ \mathcal{Q}_{n-1} $. We denote the fibre over $ q_{n-1} $ by $ \mathcal{F}_{n} (q_{n-1}) $. Notice that $  \mathcal{F}_{n} \cong \mathcal{Q}_{n} $. Choose a point $ f_n $ in $ \mathcal{F}_{n} $ and a curve $ \gamma(\varepsilon) $ in $ \mathcal{F}_{n} $ such that $ \gamma(0) = f_n $. This allows us to define the \emph{post-Legendre transform} $ \mathbb{F}^{+} S_n : \mathcal{Q}_{n-1} \times \mathcal{Q}_{n} \rightarrow T^{\ast} \mathcal{Q}_n $ by
\begin{equation}
\mathbb{F}^{+} S_n (f_n) \cdot \left. \frac{d \gamma(\varepsilon)}{d \varepsilon} \right\rvert_{\varepsilon = 0} = \left. \frac{d S_n(\gamma(\varepsilon)) }{d \varepsilon} \right\rvert_{\varepsilon = 0}
\end{equation}
Here it should be understood that $ \left. \frac{d \gamma(\varepsilon)}{d \varepsilon} \right\rvert_{\varepsilon = 0} $ is a vector in the tangent space $ T_{f_{n}} \mathcal{Q}_n $, and is being contracted with the covector $ \mathbb{F}^{+} S_n (f_n) $ belonging to the cotangent space $ T^{\ast}_{f_{n}} \mathcal{Q}_n $. The point $ q_{n-1} $ is implicit in the equation, entering both the $ f_{n} $ and $ \gamma(\varepsilon) $. Now exchange the roles of $ \mathcal{Q}_{n-1} $ and $ \mathcal{Q}_{n} $ and choose $ f_{n-1} $ in $ \mathcal{F}_{n-1}(q_{n}) $. Let $ \eta(\varepsilon) $ be a curve in $ \mathcal{F}_{n-1} $ such that $ \eta(0) = f_{n-1} $. In analogy, we define the \emph{pre-Legendre transform} $ \mathbb{F}^{-} S_n : \mathcal{Q}_{n-1} \times \mathcal{Q}_{n} \rightarrow T^{\ast} \mathcal{Q}_{n-1} $ by
\begin{equation}
\mathbb{F}^{-} S_n (f_n) \cdot \left. \frac{d \eta(\varepsilon)}{d \varepsilon} \right\rvert_{\varepsilon = 0} = - \left. \frac{d S_n(\eta(\varepsilon)) }{d \varepsilon} \right\rvert_{\varepsilon = 0}
\end{equation}

The cotangent bundles $ \mathcal{P}_{n-1} = T^{\ast} \mathcal{Q}_{n-1} $ and $ \mathcal{P}_{n} = T^{\ast} \mathcal{Q}_{n} $ are the phase spaces upon which we base the canonical formalism. The transforms are depicted in the following diagram.
\begin{equation*}
\begin{tikzcd}[column sep=0.2pc,row sep=0.5pc]
\mathcal{Q}_{n} \arrow[dashrightarrow]{rd} & & \mathcal{P}_{n} = T^{\ast} \mathcal{Q}_{n} \\
& \mathcal{Q}_{n-1} \times \mathcal{Q}_{n}  \arrow{ru}{\mathbb{F}^{+} S_n} \arrow{rd}{\mathbb{F}^{-} S_n} & \\
\mathcal{Q}_{n-1} \arrow[dashrightarrow]{ru} & & \mathcal{P}_{n-1} = T^{\ast} \mathcal{Q}_{n-1} \\
\end{tikzcd}
\end{equation*}
The coordinate versions of the post- and pre-Legendre transformations are
\begin{equation}\label{Legtracoord}
\begin{aligned}
\mathbb{F}^{+} S_{n} (x_{n-1}, x_{n}) & = ( x_{n}, ^{+}p_{n} ) & \qq{ where } & ^{+}p_{n} = \frac{\partial S_{n}}{\partial x_{n}} \\
\mathbb{F}^{-} S_{n} (x_{n-1}, x_{n}) & = ( x_{n-1}, ^{-}p_{n-1} ) & \qq{ where } & ^{-}p_{n-1} = - \frac{\partial S_{n}}{\partial x_{n-1}}
\end{aligned}
\end{equation}
Note that in context of discrete evolution, the time-step action contribution $ S_{n+1} $ has the role of Lagrangian. Then, its variation enables one to define so-called \textit{Lagrange two-forms}. They are given as
\begin{equation}
\Omega_{n+1}(x_{n},x_{n+1}) = - \frac{\partial^{2} S_{n+1}}{\partial x_{n A} \partial x_{n+1 B}} d x_{n A} \wedge d x_{n+1 B}
\end{equation}
We remark that this form is not fully analogical to the classical Lagrange two-form, see e.g. Chap. 7 of \cite{Marsden2010}, since it is built purely out of coordinates and does not contain velocities (at least not explicitly). This is convenient, since there is no canonical notion of velocity to start with. One can check that the Lagrange two-forms arise from pulling back the canonical two-forms
\begin{equation}\label{cantwoform}
\omega_{n} = \sum_{A = 1}^{q_{n}} d x_{n A} \wedge d p_{n A}
\end{equation}
with the Legendre transformation,
\begin{equation}
\Omega_{n} = \mathbb{F}^{+} S_{n}^{\ast} \omega_{n}, \qquad \Omega_{n+1} = \mathbb{F}^{-} S_{n+1}^{\ast} \omega_{n}
\end{equation}
In \eqref{cantwoform}, $ p_{n} $ stands either for $ p_{n}^{+} $ (in case we use $ \mathbb{F}^{+} S_{n}^{\ast} $) or $ p_{n}^{-} $ (in case we use $ \mathbb{F}^{-} S_{n+1}^{\ast} $).\\

Now that we have passed to the canonical language, we may go on and use it to give a formulation of the equations of motion. To do that, we only need the above given equations
\begin{equation}\label{prepostmoneta}
^{-}p_{n-1} = - \frac{\partial S_{n}}{\partial x_{n-1}}, \qquad ^{+}p_{n} = \frac{\partial S_{n}}{\partial x_{n}} 
\end{equation}
We shall refer to $ ^{-}p $ as \emph{pre-momenta} and to $ ^{+}p $ as \emph{post-momenta}. By virtue of the equations \eqref{prepostmoneta}, we implicitly define the global one-step Hamiltonian evolution map $ \mathbb{H}_{n}: \mathcal{P}_{n-1} \rightarrow \mathcal{P}_{n} $, acting as
\begin{equation}\label{H}
\mathbb{H}_{n} ( x_{n-1}, ^{-}p_{n-1} ) = ( x_{n}, ^{+}p_{n} )
\end{equation}
In practice, one can use the equation for pre-momenta in \eqref{prepostmoneta} to determine $ x_{n} $ (if possible) and use this result in the post-momenta equation in \eqref{prepostmoneta} to determine $ ^{+}p_{n} $. The definition of $ \mathbb{H}_{n} $ does not require equations of motion. However, the equations of motion are present in the canonical picture as the \emph{momentum-matching} of pre- and post-momenta,
\begin{equation}\label{mommatch}
^{+}p_{n} = ^{-}p_{n}
\end{equation}
We can omit the superindices $ + $ and $ - $, assuming that momentum-matching holds.\\

\vspace{\baselineskip}

In the latter, we will discuss the special case in which the dimensions of the configuration spaces at consecutive steps are equal, $ \dim(\mathcal{Q}_{n})  = \dim(\mathcal{Q}_{n}) = q $, but the system is irregular, meaning that
\begin{equation}
\mathrm{det} \frac{\partial^{2} S_{n+1}}{\partial x_{n} \partial x_{n+1}} = 0
\end{equation}
Then, an equal number $ s_{n} $ of left and right null vectors $ L_{n I} $, $ R_{n+1 J} $, respectively, of the Lagrange two-form arises in an open neighborhood in $ \mathcal{Q}_{n} \times \mathcal{Q}_{n+1} $,
\begin{equation}\label{LR}
L_{nI} \frac{\partial^{2} S_{n+1}}{\partial x_{nI} \partial x_{n+1 J}} = 0, \qquad \frac{\partial^{2} S_{n+1}}{\partial x_{nI} \partial x_{n+1 J}} R_{n+1 J} = 0
\end{equation}
Above, $ I, J = 1, ..., s_{n} $. This case is of interest because it is possible to treat systems with temporally varying numbers of degrees of freedom in this irregular fashion.\\

The definitions of pre- and post-Legendre transforms are directly applicable to singular systems. The rank of both is $ 2q - s_{n} $. The Legendre transformations thus fail to be onto, and their images form submanifolds of dimension $ 2q - s_{n} $ in the phase spaces $ \mathcal{P}_{n} $, $ \mathcal{P}_{n+1} $. We call $ \mathcal{C}^{-}_{n} = \Im( \mathbb{F}^{-} S_{n+1} ) \subset \mathcal{P}_{n} $ the \emph{pre-constraint surface} and $ \mathcal{C}^{+}_{n+1} = \Im( \mathbb{F}^{+} S_{n+1} ) \subset \mathcal{P}_{n+1} $ the \emph{post-constraint surface}.\\

We choose $ s_{n} $ irreducible \emph{pre-constraints} $ ^{-}C_{nL} $, $ L = 1, ..., s_{n} $ to describe $ \mathcal{C}^{-}_{n} $. They are automatically satisfied by the pre-momenta $ ^{-}p_{n} $ (because these arise from the corresponding Legendre transform), but impose (through momentum-matching) non-trivial conditions on post-momenta $ ^{+}p_{n} $ resulting from the previous evolution move between $ n-1 $ and $ n $. Analogically, the $ s_{n} $ irreducible \emph{post-constraints} $ ^{+}C^{R}_{n+1} $, $ R = 1, ..., s_{n} $ are automatically satisfied by the post-momenta $ ^{+}p_{n+1} $ but will provide non-trivial conditions for the pre-momenta $ ^{-}p_{n+1} $ by momentum-matching. The pre- and post-constraint surfaces at a time-step $ n $ generally do not coincide, we thus have to restrict the evolution to $ \mathcal{C}_{n} = \mathcal{C}^{-}_{n} \cap \mathcal{C}^{+}_{n} $ at each time-slice. We remark that the constraints can be subjected to analysis analogical to the standard Dirac's classification \cite{Dirac2001}; the authors provide such analysis in \cite{Dittrich2013}.\\

For singular systems, the Hamiltonian evolution map can be only defined between the corresponding constraint surfaces, that is, $ \mathbb{H}_{n+1}: \mathcal{C}^{-}_{n} \rightarrow \mathcal{C}^{+}_{n+1} $. On the other hand, as a consequence of the right null vectors in \eqref{LR}, we are also no longer able to solve the pre-momentum equation in \eqref{prepostmoneta} uniquely for $ x_{n+1} $. Therefore, we must specify $ s_{n} $ additional \emph{a priori} free parameters $ \lambda_{n+1 R} $ to uniquely determine $ x_{n+1}(x_{n}, ^{-}p_{n}, \lambda_{n+1}) $. The post-momenta $ ^{+}p_{n+1} $ will in general depend on the parameters $ \lambda_{n+1 R} $. Likewise, on account of the left null vectors in \eqref{LR}, we can no longer uniquely express $ x_{n} $ as a function of $ x_{n+1} $ and $ ^{+}p_{n+1} $. Then, if we want to evolve the system backwards, we must specify $ s_{n} $ a posteriori free parameters $ \mu_{n L} $ in order to determine $ x_{n}(x_{n+1}, ^{+}p_{n+1}, \mu_{n}) $. We have one a priori free $ \lambda_{n R} $ per post-constraint and one a posteriori free $ \mu_{n L} $ per pre-constraint. The arbitrariness arising in the form of these free parameters may be, however, later reduced by subsequent constraints at later time-slices.\\

In consequence of restricting the Hamiltonian evolution to the constraint hypersurfaces, it cannot be a symplectic map. Nonetheless, a slightly weaker assertion holds:

\begin{theorem}\label{comegac}
	Let $ \iota_n^{-} : \mathcal{C}_n^{-} \rightarrow \mathcal{P}_n $ and $ \iota_{n+1}^{+} : \mathcal{C}_{n+1}^{+} \rightarrow \mathcal{P}_{n+1} $ be embeddings of the constraint surfaces into the corresponding phase spaces, and $ (\iota_n^{-})^{\ast} $, $ (\iota_{n+1}^{+})^{\ast} $ the associated pullbacks. Then
	\begin{equation}
	(\iota_n^{-})^{\ast} \omega_n = \mathbb{H}_{n+1}^{\ast} (\iota_{n+1}^{+})^{\ast} \omega_{n+1} \label{Hcons}
	\end{equation}
\end{theorem}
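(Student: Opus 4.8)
The strategy is to pit the two discrete Legendre transformations of the single time-step action $S_{n+1}$ against each other. As noted in Section~2, both of them pull the relevant canonical two-form back to one and the same Lagrange two-form on the configuration bundle,
\begin{equation*}
\mathbb{F}^{+} S_{n+1}^{\ast}\,\omega_{n+1}\;=\;\Omega_{n+1}\;=\;\mathbb{F}^{-} S_{n+1}^{\ast}\,\omega_{n},
\end{equation*}
the purely off-diagonal shape of $\Omega_{n+1}$ being forced by the symmetry of the Hessian of $S_{n+1}$, which kills the $dx_n\wedge dx_n$ and $dx_{n+1}\wedge dx_{n+1}$ pieces. Moreover, by its very definition \eqref{H}--\eqref{prepostmoneta}, the Hamiltonian evolution map interpolates between the two Legendre transforms: $\mathbb{F}^{-}S_{n+1}$ sends $(x_n,x_{n+1})$ to $(x_n,{}^{-}p_n)$, and $\mathbb{H}_{n+1}$ sends $(x_n,{}^{-}p_n)$ back to $(x_{n+1},{}^{+}p_{n+1})=\mathbb{F}^{+}S_{n+1}(x_n,x_{n+1})$. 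So \eqref{Hcons} should drop out of a diagram chase, once the constraint surfaces and the free parameters are handled with care.

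In detail, I would restrict to the open neighbourhood $\mathcal{D}\subseteq\mathcal{Q}_n\times\mathcal{Q}_{n+1}$ on which the rank of the Lagrange two-form is constant, so that $s_n$ is well defined and the images of the Legendre transforms are embedded submanifolds. Corestricting to these images yields maps $F^{-}\colon\mathcal{D}\to\mathcal{C}^{-}_n$ and $F^{+}\colon\mathcal{D}\to\mathcal{C}^{+}_{n+1}$ with $\mathbb{F}^{-}S_{n+1}=\iota^{-}_n\circ F^{-}$ and $\mathbb{F}^{+}S_{n+1}=\iota^{+}_{n+1}\circ F^{+}$; since the rank $2q-s_n$ of each Legendre transform equals the dimension of the corresponding constraint surface, $F^{-}$ and $F^{+}$ are surjective submersions. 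Fixing the a priori free parameters $\lambda_{n+1}$ needed to solve the pre-momentum equation for $x_{n+1}$ is exactly the choice of a (local) smooth section $\sigma\colon\mathcal{C}^{-}_n\to\mathcal{D}$ of $F^{-}$, so that $F^{-}\circ\sigma=\mathrm{id}_{\mathcal{C}^{-}_n}$; for the corresponding branch of the evolution one then has $\mathbb{H}_{n+1}=F^{+}\circ\sigma$ by construction. The statement to be proven is local (a pointwise identity of two-forms), so a local section suffices; in the linear case treated later $\sigma$ can even be taken global and linear.

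The computation is then immediate:
\begin{align*}
\mathbb{H}_{n+1}^{\ast}(\iota^{+}_{n+1})^{\ast}\omega_{n+1}
&=\sigma^{\ast}(\mathbb{F}^{+}S_{n+1})^{\ast}\omega_{n+1}
=\sigma^{\ast}\Omega_{n+1}
=\sigma^{\ast}(\mathbb{F}^{-}S_{n+1})^{\ast}\omega_{n}\\
&=(F^{-}\circ\sigma)^{\ast}(\iota^{-}_n)^{\ast}\omega_{n}
=(\iota^{-}_n)^{\ast}\omega_{n},
\end{align*}
using in turn $\mathbb{H}_{n+1}=F^{+}\circ\sigma$ and $\mathbb{F}^{+}S_{n+1}=\iota^{+}_{n+1}\circ F^{+}$, the two pullback identities for $\Omega_{n+1}$, the factorisation $\mathbb{F}^{-}S_{n+1}=\iota^{-}_n\circ F^{-}$, and $F^{-}\circ\sigma=\mathrm{id}$. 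This is precisely \eqref{Hcons}; as a by-product the left-hand side turns out not to depend on the choice of $\sigma$.

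I expect the only genuinely delicate point to be this non-uniqueness. Because $\mathbb{F}^{-}S_{n+1}$ is not injective --- its fibres run along the right null directions $R_{n+1}$ --- the map $\mathbb{H}_{n+1}$ is defined only after the free parameters are chosen, so one must make sure, as the displayed chain does automatically, that the pulled-back form is the same for every admissible choice. The remaining ingredients are routine: that $\mathcal{C}^{-}_n,\mathcal{C}^{+}_{n+1}$ are embedded submanifolds and that $F^{\pm}$ are submersions with (local) sections follows from the constancy-of-rank hypothesis of Section~2, and is trivial in the linear setting; and the two pullback identities $\Omega_{n+1}=\mathbb{F}^{\pm}S_{n+1}^{\ast}\omega$ are already on record there.
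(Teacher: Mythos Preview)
Your argument is correct and is essentially the standard diagram-chase proof via the Lagrange two-form; this is precisely the content of Theorem~6.1 in \cite{Dittrich2012}, which is all the paper invokes at this point (the ``proof'' in the paper is just the citation). So in that sense you have supplied what the paper defers.

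It is perhaps worth noting that later, in Sec.~\ref{constrsurf}, the paper gives an independent and more pedestrian verification tailored to the linear case: it simply plugs the matrix expressions \eqref{prepostmomentamatrix} for the momenta into $\omega_n(y_n,z_n)$ and $\omega_{n+1}(y_{n+1},z_{n+1})$ and checks equality directly, using only the symmetry of $L_n$ and $\bar R_{n+1}$ and the relation $\bar L_{n+1}=-R_n^{T}$. Your proof is more conceptual and works in the general (nonlinear) setting of Sec.~2, while the paper's explicit computation is restricted to linear systems but has the advantage of making the role of the action's Hessian symmetries completely transparent and of avoiding any discussion of sections, submersions, or constancy of rank.
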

\begin{proof}
	See the proof of Theorem 6.1 in \cite{Dittrich2012}.
\end{proof}

\vspace{\baselineskip}

\section{The Notion of Linear Dynamical System}

Before giving a definition of the linear dynamical system, let us remind that for a general classical system of finite dimension, the configuration space $ \mathcal{Q} $ is a $ q $-dimensional manifold, and one may define the phase space $ \mathcal{P} = \mathbf{T}^{\ast}(\mathcal{Q}) $ to be the cotangent bundle over $ \mathcal{Q} $ which is a symplectic manifold of dimension $ 2q $. It is equipped with the symplectic two-form
\begin{equation}\label{key}
\omega = \sum_{A = 1}^{q} d x_{A} \wedge d p_{A}
\end{equation}
where $ (x_{1}, ... , x_{q}) $ are some local coordinates on $ \mathcal{Q} $ and $ (p_{1}, ..., p_{q}) $ are components of the cotangent vectors in the coordinate basis associated with $ (x_{1}, ... , x_{q}) $. So far the general case. Now, borrowing from Chap. 2 of \cite{Wald1994}, we give the anticipated definition.

\begin{definition}
	A \textit{linear dynamical system} satisfies the following two conditions:
	
	\begin{enumerate}[{(1)}]
		\item Its configuration space $ \mathcal{Q} $ has a natural vector space structure. Then we can choose a basis in $ \mathcal{Q} $ and use the basis components of vectors in $ \mathcal{Q} $ to define \textit{linear coordinates} $ (x_{1}, ... , x_{q}) $ globally on $ \mathcal{Q} $. These coordinates then give rise to globally well defined \textit{linear canonical coordinates} $ (x_{1}, ... , x_{q}, p_{1}, ..., p_{q}) $ on $ \mathcal{P} $.
		\item The Hamiltonian $ H $ is a quadratic function on $ \mathcal{P} $ so that the equations of motion $ \dot{x}_{A} = \frac{\partial H}{\partial p_{A}}, ~ \dot{p}_{A} = - \frac{\partial H}{\partial x_{A}} $ are linear in the linear canonical coordinates.
	\end{enumerate}
	
\end{definition}

For simplicity we shall denote the canonical (symplectic) basis of $ \mathcal{P} $ as $ \{ e_{I} \}_{I = 1}^{2q} $ and the linear canonical coordinates of a point $ y \in \mathcal{P} $ by $ (y_{1}, ..., y_{q}, y_{q+1}, ..., y_{2q}) $. Then it is understood that $ y = y_{I} e_{I} $ with an implicit summation over $ I = 1, ..., 2q $. \\

A key consequence of the vector space structure of $ \mathcal{P} $ given by the property (1) is that one may identify the tangent space $ T_{x}\mathcal{P} $ at any point $ x \in \mathcal{P}  $ with $ \mathcal{P} $ itself. Under this identification, the symplectic form $ \omega $ becomes a bilinear function $ \omega: \mathcal{P} \times \mathcal{P} \rightarrow \mathbb{R} $. Furthermore, since the components of the symplectic form are constant in the canonical coordinate basis, the corresponding bilinear map is independent of the choice of point $ x $ used to make the identification. We shall refer to $ \omega: \mathcal{P} \times \mathcal{P} \rightarrow \mathbb{R} $ as the \textit{symplectic structure} on $ \mathcal{P} $. For $ y, u \in \mathcal{P} $, it is given by
\begin{equation}\label{symplstruct}
\omega(y,u) = y_{A} u_{A+q} - y_{A+q} u_{A}
\end{equation}
with an implicit summation over $ A = 1, ..., q $; and we will also take advantage of an alternative form $ \omega(y,u) = y^{T} \sigma u = y_{I} ~ \sigma_{IJ} ~ u_{J} $ with summation over $ I, J = 1, ..., 2q $ and the matrix
\begin{equation}\label{key}
\sigma = \begin{pmatrix}
\phantom{-} 0 & \mathbf{1} \\
- \mathbf{1} & 0
\end{pmatrix}
\end{equation}
Although the phase space $ \mathcal{P} $ is generally a symplectic manifold, we see that in case of a linear dynamical system, it may be viewed as a symplectic vector space $ (\mathcal{P} , \omega) $, i.e., a vector space on which is defined a non-degenerate, antisymmetric, bilinear map $ \omega $.\\

Thanks to the property (2) of the linear dynamical system, the Hamiltonian function $ H $ on $ \mathcal{P} $ takes the form
\begin{equation}
H(t, y) = \textstyle \frac{1}{2} ~ y_{I} ~ K(t)_{IJ} ~ y_{J}
\end{equation}
(again, summing over $ I,J = 1, ..., 2q $) where $ K(t) $ is a symmetric $ 2q \times 2q $ matrix. The Hamilton's equations of motion then turn out as
\begin{equation}\label{eqm}
\dot{y}_{I} = \sigma_{IJ} \frac{\partial H}{\partial y_{J}} = \sigma_{IJ} ~ K_{JK} ~ y_{K}
\end{equation}
now, let $ y(t) $, $ u(t) $ be two solutions of the equations of motion \eqref{eqm} and let
\begin{equation}
s(t) = \omega(y(t),u(t)) = \sigma_{IJ} ~ y(t)_{I} ~ u(t)_{J}
\end{equation}
Then we have
\begin{equation}\label{dots0}
\dot{s} = 0
\end{equation}
Thus, for a linear dynamical system, the \textit{symplectic product} $ s $ of two solutions is conserved. (This result is really a consequence of a much more general fact for nonlinear systems that dynamical evolution defines a canonical transformation on phase space.) Thus, the symplectic structure $ \omega: \mathcal{P} \times \mathcal{P} \rightarrow \mathbb{R} $ gives rise to a natural symplectic structure $ \omega $ on the vector space of solutions $ \mathcal{S} $ to the equations of motion, since the symplectic structure on $ \mathcal{S} $ obtained by identifying $ \mathcal{P} $ and $ \mathcal{S} $ does not depend upon the choice of initial time $ t $ under which the identification is made.\\

\vspace{\baselineskip}

\section{Algebraic Preliminaries}
Let us take a moment to remind a few basic tools from linear algebra which will prove very useful for our work. The following is an adaptation of Section 3 of \cite{Kaninsky2020}. We first look at the singular value decomposition and Moore-Penrose pseudoinverse, as defined in \cite{Golub2012}. Then we shortly mention symplectic matrices.\\

\begin{theorem}
	Let $ A \in \mathbb{R}^{m \times n}  $ be an $ m \times n $ matrix with $ m \geq n $. Then there exist orthogonal matrices $ U \in \mathbb{R}^{m \times m} $ and $ V \in \mathbb{R}^{n \times n} $ and a matrix $ \Sigma = \begin{pmatrix}
	\text{diag}(\sigma_{1}, ..., \sigma_{n}) \\
	0
	\end{pmatrix} \in \mathbb{R}^{m \times n} $ with $ \sigma_{1} \geq \sigma_{2} \geq ... \geq \sigma_{n} \geq 0 $, such that
	\begin{equation}\label{key}
	A = U \Sigma V^{T}
	\end{equation}
	The numbers $ \sigma_{1}, ..., \sigma_{n} $ are called \textit{singular values} of $ A $. If $ \sigma_{r} > 0 $ is the smallest nonzero singular value, then the matrix $ A $ has rank $ r $.
\end{theorem}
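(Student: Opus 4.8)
The plan is to derive the SVD from the spectral theorem for real symmetric matrices applied to $A^{T}A$. First I would note that $A^{T}A \in \mathbb{R}^{n\times n}$ is symmetric and positive semidefinite (since $x^{T}A^{T}Ax = \|Ax\|^{2} \geq 0$), so the spectral theorem supplies an orthonormal basis $v_{1},\ldots,v_{n}$ of $\mathbb{R}^{n}$ with $A^{T}Av_{i} = \lambda_{i}v_{i}$ and $\lambda_{1}\geq\cdots\geq\lambda_{n}\geq 0$. I would set $\sigma_{i} = \sqrt{\lambda_{i}}$, let $r$ be the index of the last strictly positive $\sigma_{i}$ (so $\sigma_{r+1}=\cdots=\sigma_{n}=0$), and assemble $V\in\mathbb{R}^{n\times n}$ with the $v_{i}$ as its columns; $V$ is orthogonal by construction.

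Next I would construct $U$. For $i\leq r$ define the left singular vectors $u_{i} = \sigma_{i}^{-1}Av_{i}\in\mathbb{R}^{m}$; a one-line computation $u_{i}^{T}u_{j} = (\sigma_{i}\sigma_{j})^{-1}v_{i}^{T}A^{T}Av_{j} = (\lambda_{j}/\sigma_{i}\sigma_{j})\,\delta_{ij} = \delta_{ij}$ shows that $u_{1},\ldots,u_{r}$ are orthonormal. Since $r\leq n\leq m$, I would extend this set by Gram--Schmidt to an orthonormal basis $u_{1},\ldots,u_{m}$ of $\mathbb{R}^{m}$ and let $U\in\mathbb{R}^{m\times m}$ be the orthogonal matrix with these columns.

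Then I would verify $U^{T}AV = \Sigma$ column by column. For $j\leq r$ we have $Av_{j} = \sigma_{j}u_{j}$ by definition, hence $U^{T}Av_{j} = \sigma_{j}e_{j}$. For $j>r$, $\|Av_{j}\|^{2} = v_{j}^{T}A^{T}Av_{j} = \lambda_{j} = 0$, so $Av_{j}=0$ and $U^{T}Av_{j}=0$. Thus $U^{T}AV$ is precisely the $m\times n$ matrix carrying $\sigma_{1},\ldots,\sigma_{n}$ on its leading diagonal and zeros elsewhere, i.e. $\Sigma$; right-multiplying by $V^{T}$ yields $A = U\Sigma V^{T}$. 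Finally, since $U$ and $V$ are invertible, $\operatorname{rank}(A) = \operatorname{rank}(\Sigma)$, which equals the number of nonzero $\sigma_{i}$; so if $\sigma_{r}$ is the smallest nonzero singular value, $A$ has rank $r$.

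The argument is essentially routine, and its only external input is the spectral theorem for real symmetric matrices, which I would take as known. The one place that demands genuine care is the passage from the eigenvectors of $A^{T}A$ to the left singular vectors: one must check the orthonormality of the $u_{i}$ and confirm that the surplus directions $v_{j}$, $j>r$, are truly annihilated by $A$, so that completing $\{u_{1},\ldots,u_{r}\}$ to a basis of $\mathbb{R}^{m}$ in an arbitrary way does not spoil the block form of $U^{T}AV$.
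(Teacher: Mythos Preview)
Your proof is correct and follows the standard route via the spectral theorem for $A^{T}A$; the orthonormality check for the $u_{i}$ and the verification that $Av_{j}=0$ for $j>r$ are exactly the points that need attention, and you have handled them properly.

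Note, however, that the paper does not actually prove this theorem: it is stated in the algebraic preliminaries as a known result, with a reference to the standard literature (Golub and Van Loan). So there is no ``paper's own proof'' to compare against; your argument simply supplies what the paper takes for granted.
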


The singular value decomposition exists for any matrix, we use $ m \geq n $ for simplicity. Note that the decomposition is not unique---only the matrix $ \Sigma $ is in general uniquely determined by $ A $. In the following, we will adopt the notation $ U = ( U_{1} ~ U_{2} ) $ and $ V = ( V_{1} ~ V_{2} ) $ with $ U_{1} \in \mathbb{R}^{m \times r} $, $ U_{2} \in \mathbb{R}^{m \times m-r} $, $ V_{1} \in \mathbb{R}^{n \times r} $ and $ V_{2} \in \mathbb{R}^{n \times n-r} $. We further denote $ \Sigma_{r} = \text{diag}(\sigma_{1}, ..., \sigma_{r}) \in \mathbb{R}^{r \times r} $. Then one can write
\begin{equation}\label{key}
A = \begin{pmatrix} U_{1} & U_{2} \end{pmatrix} \begin{pmatrix}
\Sigma_{r} & 0 \\
0 & 0
\end{pmatrix} \begin{pmatrix}
V_{1}^{T} \\
V_{2}^{T}
\end{pmatrix} = U_{1} \Sigma_{r} V_{1}^{T}
\end{equation}
This form will be called \textit{narrowed singular value decomposition}. It is useful for defining the pseudoinverse:

\begin{definition}
	Let $ A \in \mathbb{R}^{m \times n} $ be an $ m \times n $ matrix and $ A = U \Sigma V^{T} = U_{1} \Sigma_{r} V_{1}^{T} $ its (narrowed) singular value decomposition. Then the matrix $ A^{+} = V \Sigma^{+} U^{T} = V_{1} \Sigma_{r}^{+} U_{1}^{T} $ with $ \Sigma^{+} = \begin{pmatrix}
	\Sigma_{r}^{+} & 0 \\
	0 & 0
	\end{pmatrix} \in \mathbb{R}^{n \times m} $ and $ \Sigma_{r}^{+} = \text{diag} ( \sigma_{1}^{-1}, ..., \sigma_{r}^{-1}) \in \mathbb{R}^{r \times r} $ is called the \textit{Moore-Penrose pseudoinverse} of $ A $.
\end{definition}

\vspace{\baselineskip}

In the next we remind the fundamental spaces associated to the matrix $ A $.

\begin{definition}\label{rowcolnull}
	We define the following fundamental spaces:
	\begin{enumerate}[{(1)}]
		\item $ \mathcal{R}(A) = \{ y ~ \vert ~ \exists x \in \mathbb{R}^{n} : y = A x \} \subset \mathbb{R}^{m}  $ is the \textit{range} or \textit{column space}.
		\item $ \mathcal{R}(A^{T}) = \{ z ~ \vert ~ \exists y \in \mathbb{R}^{n} : z = A^{T} y \} \subset \mathbb{R}^{n}  $ is the \textit{row space}.
		\item $ \mathcal{N}(A) = \{ x ~ \vert ~ A x = 0 \} \subset \mathbb{R}^{n}  $ is the \textit{null space}.
	\end{enumerate}
\end{definition}

Note that since $ \mathcal{R}(A)^{\perp} = \mathcal{N}(A^{T}) $, it holds that $ \mathbb{R}^{m} = \mathcal{R}(A) \oplus \mathcal{N}(A^{T}) $. Analogically, one has $ \mathcal{R}(A^{T})^{\perp} = \mathcal{N}(A) $, therefore $ \mathbb{R}^{n} = \mathcal{R}(A^{T}) \oplus \mathcal{N}(A) $. With the help of the narrowed singular value decomposition $ A = U_{1} \Sigma_{r} V_{1}^{T} $ and the Moore-Penrose pseudoinverse, one can easily write down the projectors to these spaces:

\begin{theorem}
	The projectors to the spaces of Definition \ref{rowcolnull} are given by
	\begin{equation}\label{key}
	\begin{aligned}
	& P_{\mathcal{R}(A)} = AA^{+} = U_{1} U_{1}^{T} & \qquad & P_{\mathcal{R}(A^{T})} = A^{+} A = V_{1} V_{1}^{T} \\
	& P_{\mathcal{N}(A^{T})} = \mathbf{1} - AA^{+} = U_{2} U_{2}^{T} & \qquad & P_{\mathcal{N}(A)} = \mathbf{1} - A^{+} A = V_{2} V_{2}^{T}
	\end{aligned}
	\end{equation}
\end{theorem}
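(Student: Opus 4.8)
The plan is to verify the two algebraic identities $AA^{+} = U_1 U_1^{T}$ and $A^{+}A = V_1 V_1^{T}$ by direct substitution of the narrowed decomposition and the pseudoinverse, then to argue that the right-hand sides are the orthogonal projectors onto the claimed subspaces, and finally to read off the null-space formulas from the orthogonal decompositions recorded just before the statement together with the orthogonality of $U$ and $V$.

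First I would substitute $A = U_1 \Sigma_r V_1^{T}$ and $A^{+} = V_1 \Sigma_r^{+} U_1^{T}$ into the products. Using that the columns of $V_1$ are orthonormal, so $V_1^{T} V_1 = \mathbf{1}$, and that $\Sigma_r \Sigma_r^{+} = \Sigma_r^{+}\Sigma_r = \mathbf{1}$ because $\Sigma_r = \mathrm{diag}(\sigma_1,\dots,\sigma_r)$ has only strictly positive entries, one obtains $AA^{+} = U_1 \Sigma_r (V_1^{T} V_1)\Sigma_r^{+} U_1^{T} = U_1 U_1^{T}$. The computation of $A^{+}A = V_1 \Sigma_r^{+}(U_1^{T} U_1)\Sigma_r V_1^{T} = V_1 V_1^{T}$ is symmetric, using $U_1^{T} U_1 = \mathbf{1}$.

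Next I would identify $U_1 U_1^{T}$ with $P_{\mathcal{R}(A)}$. The matrix $M := U_1 U_1^{T}$ is symmetric and idempotent, since $M^{2} = U_1 (U_1^{T} U_1) U_1^{T} = U_1 U_1^{T} = M$; hence it is the orthogonal projector onto its range, which is the column span of $U_1$. It then remains to check that this column span equals $\mathcal{R}(A)$: from $A = U_1(\Sigma_r V_1^{T})$ we get $\mathcal{R}(A) \subseteq \mathrm{span}(U_1)$, and conversely $\Sigma_r V_1^{T}$ maps $\mathbb{R}^{n}$ onto $\mathbb{R}^{r}$ (because $\Sigma_r$ is invertible and $V_1^{T}$ is surjective onto $\mathbb{R}^{r}$), so each column of $U_1$ lies in $\mathcal{R}(A)$; thus $\mathrm{span}(U_1) = \mathcal{R}(A)$ and $M = P_{\mathcal{R}(A)}$. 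Applying the identical argument to $A^{T} = V_1 \Sigma_r U_1^{T}$ yields $V_1 V_1^{T} = P_{\mathcal{R}(A^{T})}$.

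Finally, the null-space projectors follow formally. From $\mathbb{R}^{m} = \mathcal{R}(A) \oplus \mathcal{N}(A^{T})$ with orthogonal summands (the relation $\mathcal{R}(A)^{\perp} = \mathcal{N}(A^{T})$ was recorded in the excerpt), the projector onto $\mathcal{N}(A^{T})$ is $\mathbf{1} - P_{\mathcal{R}(A)} = \mathbf{1} - U_1 U_1^{T}$; and since $U = (U_1\ U_2)$ is orthogonal, $\mathbf{1} = UU^{T} = U_1 U_1^{T} + U_2 U_2^{T}$, whence $\mathbf{1} - U_1 U_1^{T} = U_2 U_2^{T}$. The identity $\mathbf{1} - A^{+}A = V_2 V_2^{T} = P_{\mathcal{N}(A)}$ is obtained the same way from $\mathbb{R}^{n} = \mathcal{R}(A^{T})\oplus\mathcal{N}(A)$ and $\mathbf{1} = VV^{T}$. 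The only step that is not pure bookkeeping is the identification of $\mathrm{span}(U_1)$ with $\mathcal{R}(A)$ (and of $\mathrm{span}(V_1)$ with $\mathcal{R}(A^{T})$) via the invertibility of $\Sigma_r$; I do not anticipate any real obstacle beyond keeping the orthogonality relations straight.
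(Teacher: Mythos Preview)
Your argument is correct and complete: the direct substitution using $V_1^{T}V_1 = U_1^{T}U_1 = \mathbf{1}$ and invertibility of $\Sigma_r$ gives the identities, the idempotence and symmetry of $U_1U_1^{T}$ together with $\mathrm{span}(U_1) = \mathcal{R}(A)$ identify it as the orthogonal projector, and the complementary projectors follow from $UU^{T} = VV^{T} = \mathbf{1}$.

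There is nothing to compare against, however, because the paper does not actually prove this theorem. It is stated in the ``Algebraic Preliminaries'' section as a standard linear-algebra fact (the section cites \cite{Golub2012}) and is used without further justification. Your write-up would serve perfectly well as the missing proof.
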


\vspace{\baselineskip}

The above described tools can be readily used to write an explicit solution to a general linear set of equations. Consider the matrix problem
\begin{equation}\label{Axb}
A x = b
\end{equation}
with $ A \in \mathbb{R}^{m \times n} $ a matrix, $ x \in \mathbb{R}^{n} $ and $ b \in \mathbb{R}^{m} $. Obviously, the equation only has a solution for $ x $ if $ b \in \mathcal{R}(A) $. This condition (or \textit{constraint}) can be equivalently stated as
\begin{equation}\label{Axbconstr}
U_{2} U_{2}^{T} b = 0
\end{equation}
where we assumed $ A = U_{1} \Sigma_{r} V_{1}^{T} $ and projected the right hand side of \eqref{Axb} onto $ \mathcal{N}(A^{T}) $ with $ P_{\mathcal{N}(A^{T})} = U_{2} U_{2}^{T} $. If the constraint \eqref{Axbconstr} holds, there is a family of solutions for $ x $ of the form
\begin{equation}\label{key}
x = A^{+} b + V_{2} c
\end{equation}
where $ c \in \mathbb{R}^{s} $ is an arbitrary vector of dimension $ s \equiv n - r $. By the way, this is the solution of the linear least squares problem $ A x \approx b $ which comes around by projecting the right-hand side of \eqref{Axb} onto $ \mathcal{R}(A) $ and thus solving the equation $ A x = A A^{+} b $ rather than \eqref{Axb}. Considering the linear least squares problem is equivalent to simply ignoring the constraint \eqref{Axbconstr}.\\

\vspace{\baselineskip}

In the rest, we shall very briefly recall the definition of a symplectic matrix and review its elementary properties. For more information, we refer to \cite{Gosson2006}.\\

\begin{definition}
	A \textit{symplectic matrix} $ W $ is a real $ 2q \times 2q $ matrix satisfying
	\begin{equation}\label{key}
	W^{T} \sigma W = \sigma
	\end{equation}
\end{definition}
with
\begin{equation}\label{key}
\sigma = \begin{pmatrix}
\phantom{-} 0 & \mathbf{1} \\
-\mathbf{1} & 0
\end{pmatrix}
\end{equation}

\begin{theorem}\label{sympl}
	Let us denote
	\begin{equation}\label{key}
	W = \begin{pmatrix}
	E & F \\
	G & H
	\end{pmatrix}
	\end{equation}
	where $ E, F, G, H $ are real $ q \times q $ matrices. Then the following conditions are equivalent:
	\begin{enumerate}
		\item The matrix $ W $ is symplectic.
		\item $ E^{T}G $, $ F^{T} H $ are symmetric and $ E^{T} H - G^{T} F = \mathbf{1} $
		\item $ E F^{T} $, $ G H^{T} $ are symmetric and $ E H^{T} - F G^{T} = \mathbf{1} $
	\end{enumerate}
\end{theorem}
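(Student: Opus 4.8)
The plan is to verify the equivalence of the three conditions by direct computation with the $q\times q$ blocks, exactly as one does for the classical characterization of symplectic matrices. First I would write out the defining relation $W^T\sigma W=\sigma$ in block form: with $W=\begin{pmatrix}E&F\\G&H\end{pmatrix}$ and $\sigma=\begin{pmatrix}0&\mathbf 1\\-\mathbf 1&0\end{pmatrix}$, one computes $W^T\sigma W=\begin{pmatrix}E^T&G^T\\F^T&H^T\end{pmatrix}\begin{pmatrix}G&H\\-E&-F\end{pmatrix}=\begin{pmatrix}E^TG-G^TE & E^TH-G^TF\\ F^TG-H^TE & F^TH-H^TF\end{pmatrix}$. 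Setting this equal to $\sigma$ gives precisely the three scalar/block equations: $E^TG-G^TE=0$, $F^TH-H^TF=0$, and $E^TH-G^TF=\mathbf 1$ (the lower-left block being the negative transpose of the upper-right is then automatic). Recognizing that $E^TG-G^TE=0$ says $E^TG$ is symmetric and similarly for $F^TH$, this establishes the equivalence (1)$\Leftrightarrow$(2).

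Next I would obtain (1)$\Leftrightarrow$(3). The cleanest route is to recall the standard fact that $W$ is symplectic if and only if $W^{-1}$ exists and equals $\sigma^{-1}W^T\sigma=-\sigma W^T\sigma$; concretely $W^{-1}=\begin{pmatrix}H^T&-F^T\\-G^T&E^T\end{pmatrix}$. One can derive this from $W^T\sigma W=\sigma$ by noting $\det W\neq 0$ (since $\det\sigma\neq 0$) and left-multiplying appropriately. Then $W$ symplectic is equivalent to $W^T$ symplectic (transpose the identity $W^T\sigma W=\sigma$ after substituting $\sigma^{-1}=-\sigma$, or simply apply the same computation to $W^T$ in place of $W$), and applying the already-proven equivalence (1)$\Leftrightarrow$(2) to the matrix $W^T=\begin{pmatrix}E^T&G^T\\F^T&H^T\end{pmatrix}$ yields: $E^TG$... no — applying it with blocks $(E^T,G^T,F^T,H^T)$ in the roles of $(E,F,G,H)$ gives that $(E^T)^T(F^T)=EF^T$ and $(G^T)^T(H^T)=GH^T$ are symmetric and $(E^T)^T(H^T)-(F^T)^T(G^T)=EH^T-FG^T=\mathbf 1$, which is exactly condition (3).

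Alternatively, and perhaps more elementary, one can prove (2)$\Leftrightarrow$(3) directly without invoking the inverse formula, by observing that both are equivalent to (1) and chaining; but I would present the inverse-formula argument since it also records the useful explicit form of $W^{-1}$. I do not anticipate a genuine obstacle here — the proof is a routine block-matrix computation. The only point requiring a little care is justifying that $W$ is invertible (needed to pass to statements about $W^{-1}$ and $W^T$), which follows immediately from $\det(W)^2\det(\sigma)=\det(\sigma)$, hence $\det(W)=\pm 1\neq 0$; I would note this once at the start. I would also be careful with signs in the block multiplication, since the lower block row of $\sigma W$ carries a minus sign, and with the bookkeeping that the lower-left block equation is not independent (it is minus the transpose of the upper-right equation), so that exactly the three listed relations in (2) are the full content of (1).
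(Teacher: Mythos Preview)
Your argument is correct and is the standard block-matrix verification. Note, however, that the paper does not actually supply a proof of this theorem: it is stated as a known result (with a reference to \cite{Gosson2006}) and used as a tool, so there is no ``paper's own proof'' to compare against. Your write-up would serve perfectly well as the missing proof; the one cosmetic point is that the paper derives the explicit inverse formula \eqref{W-1} \emph{from} condition~2 afterwards, whereas you obtain it along the way --- either order is fine.
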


\vspace{\baselineskip}

It follows from condition 2. that the inverse of a symplectic matrix $ W $ is
\begin{equation}\label{W-1}
W^{-1} = \begin{pmatrix}
\phantom{-}H^{T} & -F^{T} \\
-G^{T} & \phantom{-}E^{T}
\end{pmatrix}
\end{equation}

\vspace{\baselineskip}

\section{Discrete Linear Evolution}
We shall now apply the discrete canonical evolution formalism to the case of a linear dynamical system, using a matrix formulation of the problem. All the distinctive features of the (irregular) discrete evolution then become very clear, translated into the properties of the corresponding matrices.\\

We will consider the set $ \{0, ..., t\} \ni n $ of finite number of time-slices. We require that the maximum number $ q $ of degrees of freedom at any time-slice is finite, therefore, the system in question has a finite number $ N $ of degrees of freedom bounded by $ q(t+1) $. We define a \textit{solution} to be a point in the total phase space $ \mathcal{P}_{0t} = \bigotimes_{n = 0}^{t} \mathcal{P}_{n} $ satisfying the evolution equations; in particular, it must satisfy all the constraints. The set of solutions shall be denoted by $ \mathcal{S}_{0t} $. To obtain the time-slice data of a solution, one may use the projection $ \mathbb{P}_{n} : \mathcal{P}_{0t} \to \mathcal{P}_{n} $, given naturally as $ \mathbb{P}_{n} y = y_{n} $. Often the projection will be implicit. It may of course happen that the constraints are so severe that no solution on the whole interval $ \{0, ..., t\} $ exists; we are however more interested in the case when the solutions are plentiful.\\

Let us give a short remark on the action of the discussed system. In accordance with \eqref{action}, we are assuming that it is additive so that $ S(x) = \sum_{n = 0}^{t-1} S_{n+1}(x_{n}, x_{n+1}) $ where $ S_{n+1}(x_{n}, x_{n+1}) $ is the action contribution corresponding to the time-step between $ n $ and $ n+1 $. We know that the variation of $ S_{n+1}(x_{n}, x_{n+1}) $ produces the equations of motion in the form of momentum-matching \eqref{mommatch}. However, if we do not limit ourselves to solutions, we may as well consider any coordinate configuration and rewrite the action in terms of coordinates, pre-momenta and post-momenta. The single time-step action contribution then amounts to (in matrix notation)
\begin{equation}\label{Sn+1yy}
S_{n+1} = \textstyle \frac{1}{2} \left( x_{n+1}^{T} ~ ^{+}p_{n+1} - x_{n}^{T} ~ ^{-}p_{n} \right)
\end{equation}
The validity of this formula will become clear in a moment. In the terms used, the overall action between time-slices 0 and $ t $ is
\begin{equation}\label{Sgen}
S = \textstyle \frac{1}{2} \left( x_{t}^{T} ~ ^{+}p_{t} - x_{0}^{T} ~ ^{-}p_{0} \right) + \frac{1}{2} \sum_{n = 1}^{t-1} x_{n}^{T} \left( ^{+}p_{n} - ~ ^{-}p_{n} \right)
\end{equation}
For \textit{solutions} $ y \in \mathcal{S}_{0t} $, we can rewrite the action as a function on the solution space $ \mathcal{S} $,
\begin{equation}\label{Sy}
S(y) = \textstyle \frac{1}{2} \left( x_{t}^{T} p_{t} - x_{0}^{T} p_{0} \right)
\end{equation}
which, upon plugging in for momenta, can be seen as a discrete analogue of the Hamilton's principal function. Note that the Hamilton's principal function is given attention in Sections 7 and 9 of \cite{Hoehn2014}.\\

\vspace{\baselineskip}

Eventually let us define the symplectic structure $ \omega_{n} $ on $ \mathcal{P}_{n} \times \mathcal{P}_{n} $ by the usual prescription
\begin{equation}\label{symplstr}
\omega_{n}(y_{n},z_{n}) = x_{n}(y)^{T} p_{n}(z) - x_{n}(z)^{T} p_{n}(y)
\end{equation}
i.e., we confirm that $ (x_{n}, p_{n}) $ are canonical coordinates on $ \mathcal{P}_{n} $.\\

\vspace{\baselineskip}

\subsection{Matrix Formulation}

For a linear dynamical system, the equations (\ref{prepostmoneta}) can be written in matrix form
\begin{equation}\label{prepostmomentamatrix}
\begin{aligned}
^{-}p_{n} &= - \frac{\partial S_{n+1}}{\partial x_{n}} = L_{n} x_{n} + R_{n} x_{n+1} \\
^{+}p_{n+1} &= \frac{\partial S_{n+1}}{\partial x_{n+1}} =  \bar{L}_{n+1} x_{n} + \bar{R}_{n+1} x_{n+1}
\end{aligned}
\end{equation}
where $ ^{-}p_{n} $, $ ^{+}p_{n+1} $, $ x_{n} $, $ x_{n+1} $ are coordinate vectors of dimension $ q $ and $ L_{n}, R_{n}, \bar{L}_{n+1}, \bar{R}_{n+1} $ are $ q \times q $ matrices. If we consider a solution to the equations of motion, we may drop the $ - $ and $ + $ indices of the momenta, enforcing momentum matching. We remark that \eqref{prepostmomentamatrix} are identical to the equations (3.3) in \cite{Hoehn2014} although we employ a slightly different notation for the matrices.\\

In connection to \eqref{prepostmomentamatrix}, one should recall a well known fact that for linear systems, the second partial derivatives of action are symmetric; this is merely a consequence of the Young's (or Clairaut's, or Schwarz's) Theorem \cite{Young1908}. Taking this into account, one asserts that $ \frac{\partial S_{n+1}}{\partial x_{n A} \partial x_{n+1 B}} = \frac{\partial S_{n+1}}{\partial x_{n+1 B} \partial x_{n A}} $. Computing $ \frac{\partial}{\partial x_{n A}} \frac{\partial S_{n+1}}{\partial x_{n+1 B}} = \bar{L}_{n+1 BA} $ and $ \frac{\partial}{\partial x_{n+1 B}} \frac{\partial S_{n+1}}{\partial x_{n A}} = R_{n AB} $, one finds
\begin{equation}\label{pardevsymm}
\bar{L}_{n+1} = - R_{n}^{T}
\end{equation}
The similar assertions $ \frac{\partial S_{n+1}}{\partial x_{n A} \partial x_{n B}} = \frac{\partial S_{n+1}}{\partial x_{n B} \partial x_{n A}} $, $ \frac{\partial S_{n+1}}{\partial x_{n+1 A} \partial x_{n+1 B}} = \frac{\partial S_{n+1}}{\partial x_{n+1 B} \partial x_{n+1 A}} $ imply
\begin{equation}\label{pardevsymm2}
L_{n} = L_{n}^{T}, \qquad \bar{R}_{n+1} = \bar{R}_{n+1}^{T}
\end{equation}
i.e., the matrices $ L_{n} $ and $ \bar{R}_{n+1} $ are \textit{symmetric}. This will make our later work substantially easier.\\

One may deduce from \eqref{prepostmomentamatrix} the general form of the action contribution $ S_{n+1}(x_{n}, x_{n+1}) $,
\begin{equation}\label{Slag}
\begin{aligned}
S_{n+1} &= \textstyle \frac{1}{2} \left( x_{n+1}^{T} \bar{L}_{n+1} x_{n} + x_{n+1}^{T} \bar{R}_{n+1} x_{n+1} - x_{n}^{T} L_{n} x_{n} - x_{n}^{T} R_{n} x_{n+1} \right) =\\
&= \textstyle - \frac{1}{2} \left(  x_{n}^{T} L_{n} x_{n} + 2 x_{n}^{T} R_{n} x_{n+1} - x_{n+1}^{T} \bar{R}_{n+1} x_{n+1} \right)
\end{aligned}
\end{equation}
where we have used the identity \eqref{pardevsymm} to simplify the expression. Turning once again to \eqref{prepostmomentamatrix}, one can easily express this quantity in mixed variables as \eqref{Sn+1yy}. Let us stress out that although quite convenient, the form \eqref{Sn+1yy} is not relevant form the perspective of Hamiltonian mechanics, since it is not given in terms of canonical coordinates. Note that we have not provided any Hamiltonian function at all. Therefore, the best we can do is to employ \eqref{Slag} which can be viewed as the Lagrangian and express the Hamiltonian evolution map directly from it. This is what we are going to do in the next paragraph. Meanwhile, let us point out that \eqref{Slag} has its equivalent in eq. (3.1) of \cite{Hoehn2014}. The two expressions for action are related via $ L_{n} \equiv - a^{n+1} $, $ R_{n} \equiv - c^{n+1} $ and $ \bar{R}_{n+1} \equiv b^{n} $.

\vspace{\baselineskip}
\vspace{\baselineskip}

We will now give an explicit formulation of the (forward) Hamiltonian evolution in terms of the matrices introduced in \eqref{prepostmomentamatrix}. We start by solving the pre-momentum equation
\begin{equation}
R_{n} x_{n+1} = p_{n} - L_{n} x_{n}
\end{equation}
The matrix $ R_{n} $ is generally singular. Upon using the narrowed singular value decomposition, we find that the solution exists when
\begin{equation}\label{xn+1preconstraint}
P_{\mathcal{N}(R_{n}^{T})} \left( p_{n} - L_{n} x_{n} \right) = 0
\end{equation}
with the projector to the null space of $ R_{n}^{T} $ given by
\begin{equation}\label{key}
P_{\mathcal{N}(R_{n}^{T})} = U_{2}(R_{n}) U_{2}(R_{n})^{T}
\end{equation}
Next, if the solution does exist, it is of the form
\begin{equation}\label{xn+1solution}
x_{n+1} = R_{n}^{+} \left( p_{n} - L_{n} x_{n} \right) + V_{2}(R_{n}) \lambda_{n+1}
\end{equation}
where
\begin{equation}
R_{n}^{+} = V_{1}(R_{n}) ~ \Sigma_{r}(R_{n})^{-1} ~ U_{1}(R_{n})^{T}
\end{equation}
is the Moore-Penrose pseudoinverse of $ R_{n} $ and $ \lambda_{n+1} $ is an arbitrary vector of dimension $ s_{n} \equiv q - r_{n} $ with $ r_{n} \equiv \text{rank}(R_{n}) $. Note that the equation \eqref{xn+1preconstraint} is a matrix form of the pre-constraint, defining the linear pre-constraint subspace $ \mathcal{C}_{n}^{-} \subset \mathcal{P}_{n} $. We can rewrite it in a more compact form
\begin{equation}\label{preconst}
C_{n} y_n = 0, \qquad C_{n} = \begin{pmatrix}
- P_{\mathcal{N}(R_{n}^{T})} L_{n} & & P_{\mathcal{N}(R_{n}^{T})}
\end{pmatrix}
\end{equation}
where
\begin{equation}\label{key}
y_n = \begin{pmatrix}
x_{n} \\
p_{n}
\end{pmatrix}
\end{equation}
is a coordinate vector of dimension $ 2q $ representing a point $ y_{n} \in \mathcal{P}_{n} $ and the two-block matrix $ C_{n} $ has dimensions $ q \times 2q $. Upon plugging the solution \eqref{xn+1solution} into the post-momenta equation in \eqref{prepostmomentamatrix}, one finds
\begin{equation}\label{evol}
y_{n+1} = E_{n} y_{n} + F_{n+1} \lambda_{n+1}
\end{equation}
with
\begin{equation}\label{En}
E_{n} = \begin{pmatrix}
- R_{n}^{+} L_{n} & R_{n}^{+} \\
\bar{L}_{n+1} - \bar{R}_{n+1} R_{n}^{+} L_{n} & \bar{R}_{n+1} R_{n}^{+}
\end{pmatrix} = \begin{pmatrix}
- R_{n}^{+} L_{n} & R_{n}^{+} \\
- R_{n}^{T} - \bar{R}_{n+1} R_{n}^{+} L_{n} & \bar{R}_{n+1} R_{n}^{+}
\end{pmatrix}
\end{equation}
and
\begin{equation}\label{Fn+1}
F_{n+1} = \begin{pmatrix}
V_{2}(R_{n}) \\
\bar{R}_{n+1} V_{2}(R_{n})
\end{pmatrix}
\end{equation}
The matrix $ E_{n} $ has dimensions $ 2q \times 2q $ and the matrix $ F_{n+1} $ has dimensions $ 2q \times s_{n} $. The coordinate vectors $ y_{n+1} $ are elements of the post-constraint surface $ \mathcal{C}_{n+1}^{+} \subset \mathcal{P}_{n+1} $. The matrix formulation has made explicit the dependence of the evolution on a free parameter $ \lambda_{n+1} \in \mathbb{R}^{s_{n}} $. We shall acknowledge this by denoting the \textit{Hamiltonian evolution map}, originally introduced in \eqref{H}, by $ \mathbb{H}_{n+1}(\lambda_{n+1}) : \mathcal{C}_{n}^{-} \rightarrow \mathcal{C}_{n+1}^{+} $. Viewed in this way, the map is unique, but \textit{not onto}, because the parameter $ \lambda_{n+1} $ picks a particular subspace $ \mathcal{C}_{n+1}^{+(\lambda_{n+1})} \equiv \mathbb{H}_{n+1}(\lambda_{n+1}) \mathcal{C}_{n}^{-} \subset \mathcal{C}_{n+1}^{+} $ as the image. Thanks to the linearity of the evolution equations, the post-constraint surface $ \mathcal{C}_{n+1}^{+} $ can be viewed as an affine space over $ \mathcal{C}_{n+1}^{+(0)} \equiv \mathbb{H}_{n+1}(0) \mathcal{C}_{n}^{-} $ with the point-set $ \Lambda_{n+1} \equiv \{ F_{n+1} \lambda_{n+1} ~ \vert ~ \lambda_{n+1}  \in \mathbb{R}^{s_{n}} \} $. It is very much correct to treat it this way, distinguishing between \textit{vectors} $ E_{n} y_{n} \in \mathcal{C}_{n+1}^{+(0)} $ and \textit{points} $ F_{n+1} \lambda_{n+1} \in \Lambda_{n+1} $. Note that both $ \mathcal{C}_{n+1}^{+(0)} $ and $ \Lambda_{n+1} $ are linear subspaces of $ \mathcal{P}_{n+1}  $. However, one can say more:

\begin{observation}
	The subspaces $ \mathcal{C}_{n+1}^{+(0)} $ and $ \Lambda_{n+1} $ have zero intersection.
\end{observation}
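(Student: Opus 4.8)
The plan is to compare the configuration (first) blocks of generic elements of the two subspaces and to observe that these blocks already lie in complementary subspaces of $\mathbb{R}^q$; this forces the configuration part of any common element to vanish, which in turn forces the whole vector to vanish.

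First I would take an arbitrary $v \in \mathcal{C}_{n+1}^{+(0)} \cap \Lambda_{n+1}$ and write it in block form $v = (v_x,\, v_p)^{T}$ with $v_x, v_p \in \mathbb{R}^q$. Since $v \in \Lambda_{n+1}$, we have $v = F_{n+1}\lambda$ for some $\lambda \in \mathbb{R}^{s_n}$, and reading off the top block of \eqref{Fn+1} gives $v_x = V_2(R_n)\lambda$. By the projector formula $P_{\mathcal{N}(R_n)} = V_2(R_n) V_2(R_n)^{T}$, the columns of $V_2(R_n)$ span $\mathcal{N}(R_n)$, hence $v_x \in \mathcal{N}(R_n)$.

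Next, since $v \in \mathcal{C}_{n+1}^{+(0)} = E_n\,\mathcal{C}_n^{-}$, there is a $y_n = (x_n,\, p_n)^{T} \in \mathcal{C}_n^{-}$ with $v = E_n y_n$; the top block of $E_n$ in \eqref{En} yields $v_x = R_n^{+}(p_n - L_n x_n)$. Because $R_n^{+} = V_1(R_n)\,\Sigma_r(R_n)^{-1}\,U_1(R_n)^{T}$, with $U_1(R_n)^{T}$ onto $\mathbb{R}^{r_n}$ and $\Sigma_r(R_n)^{-1}$ invertible, the range of $R_n^{+}$ equals $\mathcal{R}(V_1(R_n)) = \mathcal{R}(R_n^{T})$; thus $v_x \in \mathcal{R}(R_n^{T})$. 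Invoking the orthogonal decomposition $\mathbb{R}^q = \mathcal{R}(R_n^{T}) \oplus \mathcal{N}(R_n)$ recalled after Definition \ref{rowcolnull}, the two memberships together force $v_x = 0$, i.e. $V_2(R_n)\lambda = 0$. Since $V_2(R_n)$ has orthonormal columns, $\lambda = V_2(R_n)^{T} V_2(R_n)\lambda = 0$, whence $v = F_{n+1}\lambda = 0$.

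I do not expect any genuine obstacle: the argument is a direct application of the SVD-based range and null-space characterizations from Section 4. The only points meriting care are the bookkeeping of which factor of the singular value decomposition of $R_n$ controls each block of $E_n$ and $F_{n+1}$, and the remark that the defining constraint $C_n y_n = 0$ of $\mathcal{C}_n^{-}$ is in fact not needed here, since the inclusion $v_x \in \mathcal{R}(R_n^{T})$ holds for the top block of $E_n y_n$ for every $y_n \in \mathcal{P}_n$.
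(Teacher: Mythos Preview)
Your proof is correct and follows essentially the same approach as the paper: both arguments compare the configuration (first) blocks of $E_n y_n$ and $F_{n+1}\lambda_{n+1}$ and observe that they lie in the orthogonal subspaces $\mathcal{R}(V_1(R_n))=\mathcal{R}(R_n^{T})$ and $\mathcal{R}(V_2(R_n))=\mathcal{N}(R_n)$, forcing the intersection to be trivial. Your version is more explicit in drawing the final conclusion $\lambda=0 \Rightarrow v=0$ and in noting that the pre-constraint on $y_n$ is not actually needed, but the underlying idea is identical to the paper's.
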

\begin{proof}
	It suffices to look at the coordinate parts of $ E_{n} y_{n} $ and $ F_{n+1} \lambda_{n+1} $. These are given simply by the two terms in \eqref{xn+1solution}. Since $ R_{n}^{+} = V_{1}(R_{n}) ~ \Sigma_{r}(R_{n})^{-1} ~ U_{1}(R_{n})^{T} $ and the columns of $ V_{1}(R_{n}) $ are by definition orthogonal to the columns of $ V_{2}(R_{n}) $, it follows that the two parts are orthogonal. The assertion is implied.
\end{proof}

\vspace{\baselineskip}

As a consequence of the above observation, the post-constraint surface $ \mathcal{C}_{n+1}^{+} $ can be viewed as a linear subspace $ \mathcal{C}_{n+1}^{+} = \mathcal{C}_{n+1}^{+(0)} \oplus \Lambda_{n+1} $ of $ \mathcal{P}_{n+1} $, rather than an affine space. This further simplifies our work.\\

\vspace{\baselineskip}

For completeness, we provide also the explicit form of the backward Hamiltonian evolution. It is fully analogical to the preceding case. To express the evolution from time-step $ n+1 $ to time-step $ n $, one has to solve the post-momentum equation in \eqref{prepostmomentamatrix}, which yields a constraint
\begin{equation}\label{backconst}
P_{\mathcal{N}(\bar{L}_{n+1}^{T})} \left( p_{n+1} - \bar{R}_{n+1} x_{n+1} \right) = 0
\end{equation}
with the projector
\begin{equation}\label{key}
P_{\mathcal{N}(\bar{L}_{n+1}^{T})} = U_{2}(\bar{L}_{n+1}) U_{2}(\bar{L}_{n+1})^{T}
\end{equation}
and a solution
\begin{equation}\label{key}
x_{n} = \bar{L}_{n+1}^{+} \left( p_{n+1} - \bar{R}_{n+1} x_{n+1} \right) + V_{2}(\bar{L}_{n+1}) \mu_{n}
\end{equation}
with an arbitrary vector $ \mu_{n} \in \mathbb{R}^{\bar{s}_{n}} $ where we denote $ \bar{s}_{n} = q - \bar{r}_{n} $ with $ \bar{r}_{n} = \text{rank}(\bar{L}_{n+1}) $. The constraint \eqref{backconst} can be rewritten compactly as
\begin{equation}\label{postconst}
\bar{C}_{n+1} y_{n+1} = 0, \qquad \bar{C}_{n+1} = \begin{pmatrix}
- P_{\mathcal{N}(\bar{L}_{n+1}^{T})} \bar{R}_{n+1} & P_{\mathcal{N}(\bar{L}_{n+1}^{T})}
\end{pmatrix}
\end{equation}
When it is satisfied, a solution to the backwards evolution exists and is given by
\begin{equation}
y_{n} = \bar{E}_{n+1} y_{n+1} + \bar{F}_{n} \mu_{n}
\end{equation}
with
\begin{equation}
\bar{E}_{n+1} = \begin{pmatrix}
- \bar{L}_{n+1}^{+} \bar{R}_{n+1} & \bar{L}_{n+1}^{+} \\
R_{n} - L_{n} \bar{L}_{n+1}^{+} \bar{R}_{n+1} & L_{n} \bar{L}_{n+1}^{+}
\end{pmatrix}
\end{equation}
and
\begin{equation}
\bar{F}_{n} = \begin{pmatrix}
V_{2}(\bar{L}_{n+1}) \\
L_{n} V_{2}(\bar{L}_{n+1})
\end{pmatrix}
\end{equation}

\vspace{\baselineskip}
\vspace{\baselineskip}

\subsection{The Constraint Surfaces and the Symplectic Structure}\label{constrsurf}
We devote this subsection to the study of constraint surfaces and the general properties of the Hamiltonian evolution map $ \mathbb{H}_{n+1}(\lambda_{n+1}) $ with respect to the symplectic structures $ \omega_{n} $ and $ \omega_{n+1} $ given by \eqref{symplstr}. We start with Theorem \ref{comegac}, which is nothing but a variation on the equation \eqref{dots0}, saying that ``symplectic product of solutions is conserved''. In our context, Theorem \ref{comegac} can be given in this exact simple wording. Assume that $ y_{n}, z_{n} \in \mathcal{C}_{n}^{-} $. From \eqref{prepostmomentamatrix}, one gets
\begin{equation}\label{omeganA}
\begin{aligned}
\omega_{n}(y_{n},z_{n}) &= \phantom{-} x_{n}(y)^{T} L_{n} x_{n}(z) + x_{n}(y)^{T} R_{n} x_{n+1}(z) \\ & \phantom{=} - x_{n}(z)^{T} L_{n} x_{n}(y) - x_{n}(z)^{T}  R_{n} x_{n+1}(y) = \\
&= x_{n}(y)^{T} R_{n} x_{n+1}(z) - x_{n}(z)^{T}  R_{n} x_{n+1}(y)
\end{aligned}
\end{equation}
where we used the symmetry of $ L_{n} $. Forward Hamiltonian evolution with initial conditions $ y_{n}, z_{n} $ then yields
\begin{equation}
\begin{aligned}
\omega_{n+1}(y_{n+1},z_{n+1}) &= \phantom{-} x_{n+1}(y)^{T} \bar{L}_{n+1} x_{n}(z) + x_{n+1}(y)^{T} \bar{R}_{n+1} x_{n+1}(z) \\ & \phantom{=} - x_{n+1}(z)^{T} \bar{L}_{n+1} x_{n}(y) - x_{n+1}(z)^{T} \bar{R}_{n+1} x_{n+1}(y) = \\
&= \phantom{-} x_{n+1}(y)^{T} \bar{L}_{n+1} x_{n}(z) - x_{n+1}(z)^{T} \bar{L}_{n+1} x_{n}(y) = \\
&= - x_{n}(z)^{T} R_{n} x_{n+1}(y) + x_{n}(y)^{T} R_{n} x_{n+1}(z) = \\
&= \omega_{n}(y_{n},z_{n})
\end{aligned}
\end{equation}
where we used symmetry of $ \bar{R}_{n+1} $, exploited the relation \eqref{pardevsymm} and compared with \eqref{omeganA}. Similarly for backwards evolution: if $ y_{n}, z_{n} \in \mathcal{C}_{n}^{+} $, then
\begin{equation}\label{omeganB}
\omega_{n}(y_{n},z_{n}) = x_{n}(y)^{T} \bar{L}_{n} x_{n-1}(z) - x_{n}(z)^{T} \bar{L}_{n} x_{n-1}(y)
\end{equation}
and we find
\begin{equation}
\begin{aligned}
\omega_{n-1}(y_{n-1},z_{n-1}) &= \phantom{-} x_{n-1}(y)^{T} R_{n-1} x_{n}(z) - x_{n-1}(z)^{T} R_{n-1} x_{n}(y) =\\
&= - x_{n}(z)^{T} \bar{L}_{n} x_{n-1}(y) + x_{n}(y)^{T} \bar{L}_{n} x_{n-1}(z) =\\
&= \omega_{n}(y_{n},z_{n})
\end{aligned}
\end{equation}

We have thus explicitly checked that the symplectic product of solutions is indeed conserved, understanding that a solution $ y_{n} $ of the equations of motion satisfies momentum-matching as well as \textit{all the constraints}. We will later extend our discussion of solutions to the case of multiple time-steps. At this point we can move on to the analysis of the constraint surfaces.\\

It is essential to understand that while $ \omega_{n} $ is symplectic on $ \mathcal{P}_{n} $, it is generally not symplectic on $ \mathcal{C}_{n}^{-} \subset \mathcal{P}_{n} $ because there it may be degenerate. In other words, the constraint surface $ \mathcal{C}_{n}^{-} $ is in general not a symplectic subspace of $ \mathcal{P}_{n} $. This is unfortunate for many applications which require to have a well defined evolution map between symplectic spaces, as is the case of canonical quantization. Luckily, we can easily reduce the constraint spaces by a standard procedure to make them symplectic and establish such an evolution map. We do this below, first formally and then explicitly.\\

Let us describe the procedure in general terms for an arbitrary linear subspace $ \mathcal{C} $ of the symplectic space $ (\mathcal{P}, \omega) $. We first define the corresponding \textit{null space} $ \mathcal{N}_{\omega}(\mathcal{C}) $ as
\begin{equation}\label{key}
\mathcal{N}_{\omega}(\mathcal{C}) = \{ z \in \mathcal{C} ~ \vert ~ \omega(z, u) = 0 ~ \forall u \in \mathcal{C} \}
\end{equation}
The reader may notice that $ \mathcal{N}_{\omega}(\mathcal{C}) = \mathcal{C} \cap \mathcal{C}^{\omega} $ where $ \mathcal{C}^{\omega} = \{ z \in \mathcal{P} ~ \vert ~ \omega(z, u) = 0 ~ \forall u \in \mathcal{C} \} $ is so-called \textit{skew-orthogonal set} to $ \mathcal{C} $, see Sec. 1.2 in \cite{Gosson2006} for details. For us, $ \mathcal{N}_{\omega}(\mathcal{C}) $ is simply the subspace of $ \mathcal{C} $ which makes $ \omega $ degenerate. If $ \mathcal{N}_{\omega}(\mathcal{C}) = \{ 0 \} $, then $ \mathcal{C} $ is symplectic and we are done. Otherwise we proceed with a second step, in which we get rid of the degeneracy: we take $ \tilde{\mathcal{C}} = \mathcal{C} / \mathcal{N}_{\omega}(\mathcal{C}) $. This is the space of equivalence classes $ [ y ] = \{ y + z ~ \vert ~ z \in \mathcal{N}_{\omega}(\mathcal{C}) \} $ of equivalent $ y \in \mathcal{C} $. This means that if $ y, \tilde{y} \in \mathcal{C} $ are such that $ \tilde{y} = y + z $ with $ z \in \mathcal{N}_{\omega}(\mathcal{C}) $, then $ [y] = [\tilde{y}] $. One can also put it differently:
\begin{definition}
	We say that $ y, \tilde{y} \in \mathcal{C} $ are \textit{symplectically equivalent} on $ \mathcal{C} $ w.r.t. $ \omega $ and write $ y \stackrel{\mathcal{C}}{\sim} \tilde{y} $, if $ \omega(y,u) = \omega(\tilde{y},u) $ for all $ u \in \mathcal{C} $.
\end{definition}

\begin{observation}\label{Osim}
	The space $ \tilde{\mathcal{C}} $ is composed of equivalence classes $ [ y ] = \{ \tilde{y} \in \mathcal{C} \vert y \stackrel{\mathcal{C}}{\sim} \tilde{y} \} $ of symplectically equivalent vectors in $ \mathcal{C} $.
\end{observation}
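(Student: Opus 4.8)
The plan is to verify directly that the coset decomposition defining $ \tilde{\mathcal{C}} = \mathcal{C}/\mathcal{N}_{\omega}(\mathcal{C}) $ coincides with the partition of $ \mathcal{C} $ into classes of the relation $ \stackrel{\mathcal{C}}{\sim} $. First I would fix $ y \in \mathcal{C} $ and unpack the coset $ [y] = \{ y + z ~\vert~ z \in \mathcal{N}_{\omega}(\mathcal{C}) \} $: by construction $ \tilde{y} \in [y] $ holds if and only if $ \tilde{y} - y \in \mathcal{N}_{\omega}(\mathcal{C}) $. Here it is essential that $ \mathcal{C} $ is a \emph{linear} subspace, so that $ y, \tilde{y} \in \mathcal{C} $ forces $ \tilde{y} - y \in \mathcal{C} $; only then does membership in $ \mathcal{N}_{\omega}(\mathcal{C}) = \mathcal{C} \cap \mathcal{C}^{\omega} $ reduce to the single condition $ \omega(\tilde{y} - y, u) = 0 $ for all $ u \in \mathcal{C} $.

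Second, I would use bilinearity of $ \omega $ to write $ \omega(\tilde{y} - y, u) = \omega(\tilde{y}, u) - \omega(y, u) $, so that the vanishing of this expression for every $ u \in \mathcal{C} $ is exactly the statement $ \omega(y, u) = \omega(\tilde{y}, u) $ for all $ u \in \mathcal{C} $, i.e.\ $ y \stackrel{\mathcal{C}}{\sim} \tilde{y} $. Chaining the two equivalences yields $ \tilde{y} \in [y] \iff y \stackrel{\mathcal{C}}{\sim} \tilde{y} $, which is precisely the assertion that the elements of $ \tilde{\mathcal{C}} $ are the equivalence classes $ [y] = \{ \tilde{y} \in \mathcal{C} ~\vert~ y \stackrel{\mathcal{C}}{\sim} \tilde{y} \} $.

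For completeness one should note that $ \stackrel{\mathcal{C}}{\sim} $ is genuinely an equivalence relation --- reflexivity, symmetry and transitivity all follow at once from the linearity and (anti)symmetry of $ \omega $ --- and that $ \mathcal{N}_{\omega}(\mathcal{C}) $ is a linear subspace of $ \mathcal{C} $, so the quotient $ \mathcal{C}/\mathcal{N}_{\omega}(\mathcal{C}) $ is well defined; both facts are immediate. There is no real obstacle in this argument: it is a short unfolding of the definition of the skew-orthogonal set together with bilinearity of $ \omega $, and the only point requiring a moment's care is the use of the linear-subspace property of $ \mathcal{C} $ to pass between $ \tilde{y} - y \in \mathcal{N}_{\omega}(\mathcal{C}) $ and the condition that $ \omega(\tilde{y} - y, \cdot) $ annihilates all of $ \mathcal{C} $.
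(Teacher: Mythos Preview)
Your argument is correct and is exactly the short unfolding the paper has in mind; the paper itself dismisses the proof as ``A simple exercise'' without further detail, and your two-step use of linearity of $\mathcal{C}$ together with bilinearity of $\omega$ is the natural way to carry it out.
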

\begin{proof}
	A simple exercise.
\end{proof}

The point of this procedure is captured by the following observation.
\begin{observation}\label{Octilde}
	Let $ \omega : \tilde{\mathcal{C}} \times \tilde{\mathcal{C}} \rightarrow \mathbb{R} $ be defined by $ \omega([y],[u]) = \omega(y,u) $ where $ y \in [y] $, $ u \in [u] $ are arbitrarily chosen representatives of their classes. Then $ (\tilde{\mathcal{C}}, \omega) $ is symplectic.
\end{observation}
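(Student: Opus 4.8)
The plan is to verify, in order, that the bracket $\omega([y],[u])$ is well defined, then that it is bilinear and antisymmetric, and finally that it is non-degenerate; since $\mathcal{C}$ is finite-dimensional (being a subspace of the finite-dimensional $\mathcal{P}$) so is the quotient $\tilde{\mathcal{C}}$, and on a finite-dimensional space a non-degenerate antisymmetric bilinear form is by definition a symplectic form. So these three checks are exactly what is needed. Note also that $\mathcal{N}_{\omega}(\mathcal{C})$ is a linear subspace of $\mathcal{C}$ (it is $\mathcal{C}\cap\mathcal{C}^{\omega}$, an intersection of subspaces), so the quotient $\tilde{\mathcal{C}} = \mathcal{C}/\mathcal{N}_{\omega}(\mathcal{C})$ is a genuine vector space with the usual operations $[y]+[y'] = [y+y']$, $c[y] = [cy]$.

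First I would check well-definedness. Take $y' \in [y]$ and $u' \in [u]$, so $y' = y + z$ and $u' = u + w$ with $z,w \in \mathcal{N}_{\omega}(\mathcal{C})$. Bilinearity of $\omega$ on $\mathcal{P}$ gives
\begin{equation}
\omega(y',u') = \omega(y,u) + \omega(y,w) + \omega(z,u) + \omega(z,w).
\end{equation}
Here $\omega(y,w) = -\omega(w,y) = 0$ and $\omega(z,w) = -\omega(w,z) = 0$ since $w \in \mathcal{N}_{\omega}(\mathcal{C})$ and $y,z \in \mathcal{C}$, while $\omega(z,u) = 0$ since $z \in \mathcal{N}_{\omega}(\mathcal{C})$ and $u \in \mathcal{C}$. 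Hence $\omega(y',u') = \omega(y,u)$, so the value does not depend on the chosen representatives. This is really nothing more than Observation \ref{Osim} restated: $y \stackrel{\mathcal{C}}{\sim} y'$ means $\omega(y',u) = \omega(y,u)$ for every $u \in \mathcal{C}$, and antisymmetry transfers the same invariance to the second argument.

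Bilinearity and antisymmetry of $\omega$ on $\tilde{\mathcal{C}}$ are then inherited immediately: choosing representatives turns each identity ($\mathbb{R}$-linearity in each slot, $\omega([y],[u]) = -\omega([u],[y])$) into the corresponding identity for $\omega$ on $\mathcal{C}$, which holds because $\omega$ there is the restriction of the antisymmetric bilinear form $\omega$ on $\mathcal{P}$. I would dispatch this step in a single sentence. The one substantive point, non-degeneracy, is built into the definition of $\mathcal{N}_{\omega}(\mathcal{C})$: if $[y] \in \tilde{\mathcal{C}}$ satisfies $\omega([y],[u]) = 0$ for all $[u] \in \tilde{\mathcal{C}}$, then $\omega(y,u) = 0$ for all $u \in \mathcal{C}$, i.e. $y \in \mathcal{N}_{\omega}(\mathcal{C})$, so $[y] = [0]$. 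Thus the radical of $\omega$ on $\tilde{\mathcal{C}}$ is trivial, and $(\tilde{\mathcal{C}},\omega)$ is symplectic.

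I do not expect a genuine obstacle here — this is the standard symplectic-reduction-of-a-subspace argument. The only things to be careful about are: keeping the three incarnations of the symbol $\omega$ (on $\mathcal{P}$, on $\mathcal{C}$, on $\tilde{\mathcal{C}}$) conceptually distinct while using that the latter two are restrictions/descents of the first; and making sure the well-definedness step genuinely uses that $\mathcal{N}_{\omega}(\mathcal{C})$ is the \emph{full} radical $\mathcal{C}\cap\mathcal{C}^{\omega}$, not a proper subspace of it, since that is precisely what makes both $\omega(y,w)$ and $\omega(z,u)$ vanish.
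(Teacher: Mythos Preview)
Your proof is correct and follows essentially the same route as the paper: well-definedness via Observation \ref{Osim}, inherited bilinearity and antisymmetry, and non-degeneracy from the definition of $\mathcal{N}_{\omega}(\mathcal{C})$. The only difference is cosmetic---you spell out the four-term expansion for well-definedness where the paper simply invokes Observation \ref{Osim}.
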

\begin{proof}
	First of all, note that according to Observation \ref{Osim}, the definition of $ \omega : \tilde{\mathcal{C}} \times \tilde{\mathcal{C}} \rightarrow \mathbb{R} $ is consistent. Next, by assigning $ [y] + c [u] = [y + c u] $ (for $ y, u \in \mathcal{C} $ and $ c \in \mathbb{R} $), we let $ \tilde{\mathcal{C}} $ inherit the natural vector-space structure of $ \mathcal{C} $. Suppose that $ \omega( [y], [z] ) = 0 $ for all $ [y] \in \tilde{\mathcal{C}} $. Then it holds $ \omega(y,z) = 0 $ for all $ y \in \mathcal{C} $. In other words, $ \omega(y,z) = \omega(y,0) $ for all $ y \in \mathcal{C} $, i.e., $ z \stackrel{\mathcal{C}}{\sim} 0 $ and $ [z] = [0] $. It follows that $ \omega $ on $ \tilde{\mathcal{C}} $ is non-degenerate. Since it is also bilinear and antisymmetric, it is symplectic.
\end{proof}

Because equivalence classes are not very practical for computations, we add an optional third step, which is to consider a representative space $ \dot{\mathcal{C}} \subset \mathcal{C} $ such that each $ y \in \dot{\mathcal{C}} $ corresponds to a class $ [y] \in \tilde{\mathcal{C}} $. In other words, $ \dot{\mathcal{C}} $ is a set of symplectically inequivalent vectors $ y \in \mathcal{C} $ originating by picking one particular element from each class. We strongly prefer a choice of $ \dot{\mathcal{C}} $ which is a \textit{linear subspace} of $ \mathcal{C} $ so that the vector-space operations on $ \dot{\mathcal{C}} $ align with those on $ \mathcal{C} $; we shall therefore assume this. A possible way to find such $ \dot{\mathcal{C}} $ is to pick a maximal linearly independent set of symplectically inequivalent vectors in $ \mathcal{C} $ and generating $ \dot{\mathcal{C}} $ as linear span of this set. Needless to say, $ \omega $ naturally carries over from $ \mathcal{C} $ to $ \dot{\mathcal{C}} $, making $ (\dot{\mathcal{C}}, \omega) $ symplectic. One can make the following observation.
\begin{observation}\label{cdotoplusn}
	It holds $ \mathcal{C} = \dot{\mathcal{C}} \oplus  \mathcal{N}_{\omega}(\mathcal{C}) $.
\end{observation}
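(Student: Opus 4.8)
The plan is to unwind the construction of $\dot{\mathcal{C}}$ and observe that it is, by design, a vector-space complement of $\mathcal{N}_{\omega}(\mathcal{C})$ inside $\mathcal{C}$. Recall from the third step of the procedure that $\dot{\mathcal{C}}$ is a \emph{linear subspace} of $\mathcal{C}$ which meets every equivalence class $[y] \in \tilde{\mathcal{C}} = \mathcal{C} / \mathcal{N}_{\omega}(\mathcal{C})$ in exactly one point; equivalently, the restriction to $\dot{\mathcal{C}}$ of the canonical projection $\pi : \mathcal{C} \to \tilde{\mathcal{C}}$ is a bijection. The proof then amounts to checking that the sum $\dot{\mathcal{C}} + \mathcal{N}_{\omega}(\mathcal{C})$ is direct and exhausts $\mathcal{C}$.

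First I would establish that $\dot{\mathcal{C}} \cap \mathcal{N}_{\omega}(\mathcal{C}) = \{ 0 \}$. Since $\dot{\mathcal{C}}$ is a linear subspace, $0 \in \dot{\mathcal{C}}$, and as $0 \in [0]$ it follows that $0$ is the unique representative of the class $[0]$ lying in $\dot{\mathcal{C}}$. Now if $y \in \dot{\mathcal{C}} \cap \mathcal{N}_{\omega}(\mathcal{C})$, then membership in $\mathcal{N}_{\omega}(\mathcal{C})$ means $y \stackrel{\mathcal{C}}{\sim} 0$, i.e. $[y] = [0]$ (cf. Observation \ref{Osim}); uniqueness of the representative then forces $y = 0$.

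Next I would show $\mathcal{C} = \dot{\mathcal{C}} + \mathcal{N}_{\omega}(\mathcal{C})$. Take an arbitrary $y \in \mathcal{C}$ and let $\dot{y} \in \dot{\mathcal{C}}$ be the representative of the class $[y]$. Then $[y - \dot{y}] = [y] - [\dot{y}] = [0]$, so $y - \dot{y} \in \mathcal{N}_{\omega}(\mathcal{C})$, and hence $y = \dot{y} + (y - \dot{y})$ exhibits $y$ as a sum of an element of $\dot{\mathcal{C}}$ and an element of $\mathcal{N}_{\omega}(\mathcal{C})$. Together with the trivial intersection just shown, this proves $\mathcal{C} = \dot{\mathcal{C}} \oplus \mathcal{N}_{\omega}(\mathcal{C})$. (Alternatively, since $\pi$ restricts to a linear isomorphism $\dot{\mathcal{C}} \to \tilde{\mathcal{C}}$, one has $\dim \dot{\mathcal{C}} = \dim \tilde{\mathcal{C}} = \dim \mathcal{C} - \dim \mathcal{N}_{\omega}(\mathcal{C})$, and combined with $\dot{\mathcal{C}} \cap \mathcal{N}_{\omega}(\mathcal{C}) = \{0\}$ a dimension count gives the direct-sum decomposition at once.)

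There is no substantive obstacle here: the statement is essentially a reformulation of what it means for $\dot{\mathcal{C}}$ to be a linear system of representatives for the quotient $\mathcal{C} / \mathcal{N}_{\omega}(\mathcal{C})$. The only point requiring a little care is the assumption, stipulated when $\dot{\mathcal{C}}$ was introduced, that $\dot{\mathcal{C}}$ is a \emph{linear} subspace of $\mathcal{C}$ rather than merely a set-theoretic section of $\pi$; this is what guarantees $0 \in \dot{\mathcal{C}}$ and that $\pi|_{\dot{\mathcal{C}}}$ is a genuine linear isomorphism, both of which are used above.
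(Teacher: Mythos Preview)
Your proof is correct and follows essentially the same approach as the paper: take an arbitrary element of $\mathcal{C}$, pass to its unique representative in $\dot{\mathcal{C}}$, and observe that the difference lies in $\mathcal{N}_{\omega}(\mathcal{C})$. The paper compresses the existence and uniqueness of the decomposition into a single step, whereas you split it into the two standard checks (trivial intersection and spanning), but the content is the same.
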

\begin{proof}
	Any $ \tilde{y} \in \mathcal{C} $ gives rise to a unique $ [ \tilde{y} ] \in \tilde{\mathcal{C}} $, which is in turn uniquely represented by $ y \in \dot{\mathcal{C}} $. We denote $ \tilde{y} - y \equiv z $, then by definition $ z \in \mathcal{N}_{\omega}(\mathcal{C}) $. It follows that there is a unique decomposition $ \tilde{y} = y + z $ of the vector $ \tilde{y} \in \mathcal{C} $ into $ y \in \dot{\mathcal{C}} $ and $ z \in \mathcal{N}_{\omega}(\mathcal{C}) $, hence the assertion.
\end{proof}

The above observation suggests a special choice of $ \dot{\mathcal{C}} $ which is uniquely fixed by an inner product on $ \mathcal{C} $. It is of course $ \dot{\mathcal{C}} = \mathcal{N}_{\omega}(\mathcal{C})^{\perp} $. We use this choice below on some occasions with the canonical inner product.\\

We will now use the described procedure for the pre-constraint surface $ \mathcal{C}_{n}^{-} $, resulting in the symplectic space $ \dot{\mathcal{C}}_{n}^{-} \subset \mathcal{P}_{n} $. Once we have it defined, we would like to see how $ \dot{\mathcal{C}}_{n}^{-} $ evolves to the next time-slice.\\

\vspace{\baselineskip}

\subsection{The Adapted Coordinates}

In what comes next, we would like to offer a particular choice of the representative spaces $ \dot{\mathcal{C}}_{n}^{-}, \dot{\mathcal{C}}_{n+1}^{+} $ and a construction of symplectic bases of $ \mathcal{P}_{n}, \mathcal{P}_{n+1} $ that will be adapted to this choice. Before we start, let us give a short comment on how our approach differs from that of \cite{Hoehn2014}. In Sec. 6 of that reference, a special choice of basis on the phase space is introduced. As it will be in our case, the associated linear transformation given by eq. (6.1) is canonical (so the new basis is symplectic). Its purpose is to separate the primary and secondary constraints. With a subsequent second transformation, it is then possible to trivialize the constraints, which opens the door to a classification of the propagating degrees of freedom constituted by observables and free parameters. Further analysis describes how these propagate from time-slice $ n-1 $ through time-slice $ n $ to time-slice $ n+1 $.\\

Our approach is quite different, and perhaps more lightweight. We focus on a single time-step from $ n $ to $ n+1 $ and find symplectic bases in both the involved phase spaces which disentangle the most important subspaces induced by the symplectic structure, as discussed in the previous section. In doing so, we also choose the representative spaces of symplectically equivalent time-slice data and build their symplectic bases. Incidentally (due to our rather natural choice), the corresponding symplectic (or to say, canonical) transformation will turn out to trivialize the one-step Hamiltonian evolution map.\\

First let us parametrize $ \mathcal{C}_{n}^{-} $. To do that, we need to solve the pre-constraint \eqref{preconst} or, equivalently, \eqref{xn+1preconstraint}. The latter can be solved trivially by writing
\begin{equation}\label{key}
p_{n} = L_{n} x_{n} + \theta_{n}
\end{equation}
with an arbitrary $ \theta_{n} \in \mathcal{R}(R_{n}) $ in the column space of $ R_{n} $; recall that $ \mathcal{R}(R_{n}) \perp \mathcal{N}(R_{n}^{T}) $. Then we have a general form of $ y_{n} \in \mathcal{C}_{n}^{-} $,
\begin{equation}\label{key}
y_{n} = \begin{pmatrix}
x_{n} \\
L_{n} x_{n} + \theta_{n}
\end{pmatrix}
\end{equation}
and we see that $ \dim \mathcal{C}_{n}^{-} = q + r_{n} $ with $ r_{n} = \dim\mathcal{R}(R_{n}) = \text{rank}(R_{n}) $. Now we take a moment to parametrize $ \mathcal{N}_{\omega}(\mathcal{C}_{n}^{-}) $. Recall that $ z_{n} \in \mathcal{N}_{\omega}(\mathcal{C}_{n}^{-}) $ must be in $ \mathcal{C}_{n}^{-} $ and must satisfy $ \omega_{n}(y_{n}, z_{n}) = 0 $ for all $ y_{n} \in \mathcal{C}_{n}^{-} $. These two conditions give us
\begin{equation}\label{key}
z_{n} = \begin{pmatrix}
\mu_{n} \\
L_{n}^{T} \mu_{n}
\end{pmatrix}
\end{equation}
with $ \mu_{n} \in \mathcal{N}(R_{n}^{T}) $ such that $ (L_{n}^{T} - L_{n}) \mu_{n} \in \mathcal{R}(R_{n}) $. However, the second condition is satisfied automatically, thanks to the symmetry of $ L_{n} $. We then see that $ \dim \mathcal{N}_{\omega}(\mathcal{C}_{n}^{-}) = \dim \mathcal{N}(R_{n}^{T}) = s_{n} $ with $ s_{n} \equiv q - r_{n} $. Now we can already guess the possible structure of $ \dot{\mathcal{C}}_{n}^{-} $. Let us write the relation $ \tilde{y}_{n} = y_{n} + z_{n} $ in the parametrized form,
\begin{equation}\label{key}
\begin{pmatrix}
\tilde{x}_{n} \\
L_{n} \tilde{x}_{n} + \tilde{\theta}_{n}
\end{pmatrix} = \begin{pmatrix}
x_{n} \\
L_{n} x_{n} + \theta_{n}
\end{pmatrix} + \begin{pmatrix}
\mu_{n} \\
L_{n}^{T} \mu_{n}
\end{pmatrix} = \begin{pmatrix}
x_{n} + \mu_{n} \\
L_{n} \left( x_{n} + \mu_{n} \right) + \theta_{n}
\end{pmatrix}
\end{equation}
and we see that $ \tilde{x}_{n} = x_{n} + \mu_{n} $ and $ \tilde{\theta}_{n} = \theta_{n} $. Therefore it is natural to pick a unique $ y_{n} $ from each class $ [ y_{n} ] $ solemnly by fixing $ x_{n} = \varrho_{n} \in \mathcal{R}(R_{n}) $ orthogonal to $ \mu_{n} $. That is, we chose $ y_{n} \in \dot{\mathcal{C}}_{n}^{-} $ to be of the form
\begin{equation}\label{key}
y_{n} = \begin{pmatrix}
\varrho_{n} \\
L_{n} \varrho_{n} + \theta_{n}
\end{pmatrix}
\end{equation}
with $ \varrho_{n} \in \mathcal{R}(R_{n}) $ and still $ \theta_{n} \in \mathcal{R}(R_{n}) $. Clearly, this is a linear subspace of $ \mathcal{C}_{n}^{-} $, as we require. Also, $ \dim \dot{\mathcal{C}}_{n}^{-} = 2 r_{n} $ is indeed even, as expected from a symplectic space.\\

Thanks to the very simple structure of $ \tilde{\mathcal{C}}_{n}^{-} $ and subsequently of $ \dot{\mathcal{C}}_{n}^{-} $, it is possible to construct a natural symplectic basis for $ \dot{\mathcal{C}}_{n}^{-} $. We shall denote it by $ \{ \dot{e}_{nX} \} $ with $ X = 1,..., r_{n}, q+1, ..., q+r_{n} $. Knowing that the basis vectors have the structure
\begin{equation}\label{key}
\dot{e}_{nX} = \begin{pmatrix}
\varrho_{nX} \\
L_{n} \varrho_{nX} + \theta_{nX}
\end{pmatrix}
\end{equation}
(the indices do not represent components here, but give names to vectors) one expresses the requirement of symplecticity as
\begin{equation}\label{omegaedot}
\sigma_{XY} = \omega_{n}( \dot{e}_{nX}, \dot{e}_{nY} ) = \varrho_{nX}^{T} \theta_{nY} - \theta_{nX}^{T} \varrho_{nY}
\end{equation}

Now recall the singular value decomposition $ R_{n} = U(R_{n}) \Sigma(R_{n}) V(R_{n})^{T} $. The orthogonal matrices $ U(R_{n}), V(R_{n}) $ are not unique; one can use any suitable choice of these two. In the following, we shall cease to write arguments of all the matrices arising from the singular value decomposition of $ R_{n} $, i.e., we denote $ U \equiv U(R_{n}) $, $ V \equiv V(R_{n}) $, $ \Sigma \equiv \Sigma(R_{n}) $ and so on. We will occasionally use this condensed notation.\\

We shall satisfy the equation \eqref{omegaedot} by the deliberate choice $ \varrho_{nE} = 0 $, $ \varrho_{nE+q} = - U_{1} \varepsilon_{nE} $, $ \theta_{nE} = - \varrho_{nE+q} $, $ \theta_{nE+q} = 0 $ for all $ E = 1, ..., r_{n} $. Here we use the canonical basis $ \{ \varepsilon_{nE} \}_{E = 1}^{r_{n}} $ in $ \mathbb{R}^{r_{n}} $ to generate the set $ \{ U_{1} \varepsilon_{nE} \}_{E = 1}^{r_{n}} $ which is an orthonormal basis of $ \mathcal{R}(R_{n}) $. Recall that $ U_{1} $ is a $ q \times r_{n} $ submatrix of $ U = \begin{pmatrix} U_{1} & U_{2} \end{pmatrix} $. We end up with
\begin{equation}\label{doteE}
\dot{e}_{nE} = \begin{pmatrix}
0 \\
U_{1} \varepsilon_{nE} 
\end{pmatrix}, \qquad \dot{e}_{nE+q} = - \begin{pmatrix}
U_{1} \varepsilon_{nE} \\
L_{n} U_{1} \varepsilon_{nE} 
\end{pmatrix}
\end{equation}
Surely we could have chosen a different arrangement of the vectors, but this one is the most suitable for our cause. In any way, we have a symplectic basis of $ \dot{\mathcal{C}}_{n}^{-} $. Since $ \dot{\mathcal{C}}_{n}^{-} \subset \mathcal{P}_{n} $, we know by virtue of the ``symplectic Gram–Schmidt theorem'', see Sec. 1.2 of \cite{Gosson2006}, that it can be extended to a full symplectic basis of $ \mathcal{P}_{n} $. We shall do this by fixing
\begin{equation}\label{doteM}
\dot{e}_{nM} = \begin{pmatrix}
0 \\
U_{2} \iota_{nM}
\end{pmatrix}, \qquad \dot{e}_{nM+q} = - \begin{pmatrix}
U_{2} \iota_{nM} \\
L_{n} U_{2} \iota_{nM} 
\end{pmatrix}
\end{equation}
where $ \{ \iota_{nM} \}_{M = r_{n}+1}^{q} $ is the (oddly numbered) canonical basis of $ \mathbb{R}^{s_{n}} $ and obviously $ M = r_{n}+1, ..., q $. It is clear that the vectors $ \dot{e}_{nI} $ are all linearly independent. Thus we have fixed a basis $ \{ \dot{e}_{nI} \}_{I = 1}^{2q} $ of $ \mathcal{P}_{n} $. The reader can easily check, using the analogue of \eqref{omegaedot}, that it is indeed symplectic. We will verify this one step later in another way. Before that however, it is worthy noticing that $ \{ \dot{e}_{nM+q} \}_{M = r_{n}+1}^{q} $ is a basis of $ \mathcal{N}_{\omega}(\mathcal{C}_{n}^{-}) $. This is a reminder of the fact that $ \mathcal{C}_{n}^{-} = \dot{\mathcal{C}}_{n}^{-} \oplus  \mathcal{N}_{\omega}(\mathcal{C}_{n}^{-}) $. On the other hand, $ \{ \dot{e}_{nM} \}_{M = r_{n}+1}^{q} $ is a basis of $ \mathcal{C}_{n}^{- \perp} $ (the orthogonal complement in $ \mathcal{P}_{n} $ is taken w.r.t. the canonical inner product here). This is the space of vectors that do not satisfy the pre-constraint.\\

We formalize the basis transformation as
\begin{equation}\label{doteWe}
e_{nJ} = \dot{e}_{nI} \dot{W}_{nIJ}
\end{equation}
and read out from \eqref{doteE} and \eqref{doteM} the inverse matrix
\begin{equation}\label{dotWinv}
\dot{W}_{n}^{-1} = \begin{pmatrix}
0 & - U \\
U & - L_{n} U
\end{pmatrix}
\end{equation}
Now the promised verification: because $ U^{T} L_{n} U $ is symmetric and $ U^{T} U = \mathbf{1} $, it follows from Theorem \ref{sympl} that $ \dot{W}_{n}^{-1} $ is a \textit{symplectic matrix}. Its inverse is also symplectic, and using \eqref{W-1}, we find that it is
\begin{equation}\label{dotW}
\dot{W}_{n} = \begin{pmatrix}
-U^{T} L_{n} & U^{T} \\
-U^{T} & 0
\end{pmatrix}
\end{equation}
Thus we can take for granted that \eqref{doteWe} is a symplectic transformation passing from the new symplectic basis to the canonical one.\\

\vspace{\baselineskip}

We will shortly continue our discourse by evolving the described subspaces of $ \mathcal{P}_{n} $ to $ \mathcal{P}_{n+1} $ using the prescription \eqref{evol}. Before we do so, we want to remark that because of the conservation of the symplectic structure, one automatically obtains $ \mathbb{H}_{n+1}(\lambda_{n+1}) \mathcal{N}_{\omega}(\mathcal{C}_{n}^{-}) = \mathcal{N}_{\omega}(\mathbb{H}_{n+1}(\lambda_{n+1}) \mathcal{C}_{n}^{-}) = \mathcal{N}_{\omega}( \mathcal{C}_{n+1}^{+(\lambda_{n+1})} ) $ and consequently also $ \mathbb{H}_{n+1}(\lambda_{n+1}) \tilde{\mathcal{C}}_{n}^{-} = \tilde{\mathcal{C}}_{n+1}^{+(\lambda_{n+1})} $ for arbitrary $ \lambda_{n+1} $. It therefore seems natural to generate our representative space $ \dot{\mathcal{C}}_{n+1}^{+} $ in the post-constraint surface by evolving the representative space in the pre-constrain surface, i.e., fix $ \dot{\mathcal{C}}_{n+1}^{+} = \mathbb{H}_{n+1}(\lambda_{n+1}) \dot{\mathcal{C}}_{n}^{-} $ with some $ \lambda_{n+1} $. This is indeed possible. The symplectic structure of $ \dot{\mathcal{C}}_{n}^{-} $ will not be touched by $ \mathbb{H}_{n+1}(\lambda_{n+1}) $, and it is therefore guaranteed that $ \dot{\mathcal{C}}_{n+1}^{+} $ will be also a symplectic space of the same dimension. For simplicity, we shall opt for $ \dot{\mathcal{C}}_{n+1}^{+} = \mathbb{H}_{n+1}(0) \dot{\mathcal{C}}_{n}^{-} $.\\

Now let us proceed with the calculation. We start with $ z_{n} \in \mathcal{N}_{\omega}(\mathcal{C}_{n}^{-}) $ and recall \eqref{En} so that we can evolve it by $ \mathbb{H}_{n+1}(0) $, only to find
\begin{equation}\label{Edote}
E_{n} \dot{e}_{nM+q} = \begin{pmatrix}
- R_{n}^{+} L_{n} & R_{n}^{+} \\
- R_{n}^{T} - \bar{R}_{n+1} R_{n}^{+} L_{n} & \bar{R}_{n+1} R_{n}^{+}
\end{pmatrix} \begin{pmatrix}
-U_{2} \iota_{nM} \\
-L_{n} U_{2} \iota_{nM} 
\end{pmatrix} = 0
\end{equation}
because $ R_{n} = U_{1} \Sigma_{r} V_{1}^{T} $ and so $ R_{n}^{T} U_{2} = V_{1} \Sigma_{r} U_{1}^{T} U_{2} $, which is annihilated by $ U_{1}^{T} U_{2} = 0 $. Now, \eqref{Edote} tells us that $ \mathbb{H}_{n+1}(0) z_{n} = 0 $. As a result, the image of $ \mathcal{N}_{\omega}(\mathcal{C}_{n}^{-}) $ under the Hamiltonian evolution map $ \mathbb{H}_{n+1}(\lambda_{n+1}) $ is a single point $ \mathbb{H}_{n+1}(\lambda_{n+1}) \mathcal{N}_{\omega}(\mathcal{C}_{n}^{-}) = \{ F_{n+1} \lambda_{n+1} \} $ of the affine space $ \mathcal{C}_{n+1}^{+} $. In particular, we have $ \mathbb{H}_{n+1}(0) \mathcal{N}_{\omega}(\mathcal{C}_{n}^{-}) = \{ 0 \} $. From the conservation of the symplectic structure, it follows that $ \mathbb{H}_{n+1}(0) \mathcal{N}_{\omega}(\mathcal{C}_{n}^{-}) = \mathcal{N}_{\omega}(\mathbb{H}_{n+1}(0) \mathcal{C}_{n}^{-}) = \mathcal{N}_{\omega}(\mathcal{C}_{n+1}^{+(0)}) $. Therefore, we get that $ \mathcal{N}_{\omega}(\mathcal{C}_{n+1}^{+(0)}) = \{ 0 \} $, i.e., the subspace $ \mathcal{C}_{n+1}^{+(0)} $ of the post-constraint surface $ \mathcal{C}_{n+1}^{+} $ turns out to be symplectic.\\

Next we would like to evolve the vectors in $ \dot{\mathcal{C}}_{n}^{-} $. For this purpose we define a set $ \{ \ddot{e}_{n+1X} \} $, again with $ X = 1,..., r_{n}, q+1, ..., q+r_{n} $, by
\begin{equation}\label{ddotex}
\ddot{e}_{n+1X} = \mathbb{H}_{n+1}(0) \dot{e}_{nX} = E_{n} \dot{e}_{nX}
\end{equation}
This results in
\begin{equation}\label{ddoteE}
\ddot{e}_{n+1E} = \begin{pmatrix}
R_{n}^{+} U_{1} \varepsilon_{nE} \\
\bar{R}_{n+1} R_{n}^{+} U_{1} \varepsilon_{nE}
\end{pmatrix} = \begin{pmatrix}
V_{1} \Sigma_{r}^{-1} \varepsilon_{nE} \\
\bar{R}_{n+1} V_{1} \Sigma_{r}^{-1} \varepsilon_{nE}
\end{pmatrix}
\end{equation}
\begin{equation}\label{ddoteEq}
\ddot{e}_{n+1E+q} = \begin{pmatrix}
0 \\
R_{n}^{T} U_{1} \varepsilon_{nE}
\end{pmatrix} = \begin{pmatrix}
0 \\
V_{1} \Sigma_{r} \varepsilon_{nE}
\end{pmatrix}
\end{equation}
with $ E = 1, ..., r_{n} $. Here we again recalled $ R_{n} = U_{1} \Sigma_{r} V_{1}^{T} $ and $ R_{n}^{+} = V_{1} \Sigma_{r}^{-1} U_{1}^{T} $, which gives $ R_{n}^{T} U_{1} = V_{1} \Sigma_{r} $ and $ R_{n}^{+} U_{1} = V_{1} \Sigma_{r}^{-1} $. One can see that the set $ \{ \ddot{e}_{n+1X} \} $ is linearly independent. From the conservation of the symplectic structure
\begin{equation}\label{key}
\sigma_{XY} = \omega_{n}( \dot{e}_{nX}, \dot{e}_{nY} ) = \omega_{n+1}( \ddot{e}_{n+1X}, \ddot{e}_{n+1Y} )
\end{equation}
it subsequently follows that $ \{ \ddot{e}_{n+1X} \} $ is a symplectic basis of $ \mathbb{H}_{n+1}(0) \dot{\mathcal{C}}_{n}^{-} $. Since the null space $ \mathcal{N}_{\omega}(\mathcal{C}_{n}^{-}) $ was mapped to zero by $ \mathbb{H}_{n+1}(0) $ and there is nothing else to map, we get $ \mathbb{H}_{n+1}(0) \dot{\mathcal{C}}_{n}^{-} = \mathbb{H}_{n+1}(0) \mathcal{C}_{n}^{-} = \mathcal{C}_{n+1}^{+(0)} $. At the beginning of this paragraph, we have decided that $ \dot{\mathcal{C}}_{n+1}^{+} = \mathbb{H}_{n+1}(0) \dot{\mathcal{C}}_{n}^{-} $. It follows $ \dot{\mathcal{C}}_{n+1}^{+} = \mathcal{C}_{n+1}^{+(0)} $. This makes perfect sense: $ \mathcal{C}_{n+1}^{+(0)} $ is a linear, symplectic subspace of $ \mathcal{C}_{n+1}^{+} $ such that $ \mathcal{C}_{n+1}^{+(0)} \oplus \Lambda_{n+1} = \mathcal{C}_{n+1}^{+} $ and we will see in a moment that $ \Lambda_{n+1} = \mathcal{N}_{\omega}(\mathcal{C}_{n+1}^{+}) $ so that the direct sum complies to Observation \ref{cdotoplusn}. In summary, we have specified the two advertised representative spaces $ \dot{\mathcal{C}}_{n}^{-}, \dot{\mathcal{C}}_{n+1}^{+} $ as well as the \textit{symplectic} Hamiltonian map $ \mathbb{H}_{n+1}(0): \dot{\mathcal{C}}_{n}^{-} \rightarrow \dot{\mathcal{C}}_{n+1}^{+} $.\\

Our last task is to extend $ \{ \ddot{e}_{n+1X} \} $ to a symplectic basis $ \{ \ddot{e}_{n+1I} \}_{I=1}^{2q} $ of $ \mathcal{P}_{n+1} $, which we know is possible. The extension should span the point-space $ \Lambda_{n+1} $ as well as the rest of $ \mathcal{P}_{n+1} $, but there is not much information on how it should look. Therefore we have no other choice but to employ our creativity. Our strategy for finding the extension is the following: we again assume the transformation
\begin{equation}\label{ddoteWe}
e_{n+1J} = \ddot{e}_{n+1I} \ddot{W}_{n+1IJ}
\end{equation}
and write down everything we know about the matrix $ \ddot{W}_{n+1}^{-1} $ into the block form
\begin{equation}\label{ddotWgen}
\ddot{W}_{n+1}^{-1} = \begin{pmatrix}
V_{1} \Sigma_{r}^{-1} &  A  & 0 & B \\
\bar{R}_{n+1} V_{1} \Sigma_{r}^{-1} & C & V_{1}\Sigma_{r} & D
\end{pmatrix}
\end{equation}
with sought-for $ q \times s_{n} $ matrices $ A, B, C, D $. We require that $ \ddot{W}_{n+1} $ is symplectic, which implies a set of conditions on the unknown matrices through Theorem \ref{sympl}. One finds a class of solutions to these conditions parametrized as
\begin{equation}\label{ABCD}
\begin{aligned}
A &= V_{2} b \\
B &= V_{2} a \\
C &= \bar{R}_{n+1} V_{2} b  - V_{2} d\\
D &= \bar{R}_{n+1} V_{2} a  - V_{2} c
\end{aligned}
\end{equation}
where $ a, b, c, d $ are $ s_{n} \times s_{n} $ matrices such that
\begin{equation}\label{key}
w \equiv \begin{pmatrix}
a & b \\
c & d
\end{pmatrix}
\end{equation}
is symplectic.\\

Now, there are multiple choices of $ w $ which could be considered convenient, but we found after some trial and error that our purpose is best served by
\begin{equation}\label{ourw}
w = \begin{pmatrix}
\phantom{-}0 & \mathbf{1} \\
-\mathbf{1} & 0
\end{pmatrix}
\end{equation}
which yields perhaps the most simple and natural form of the matrix \eqref{ddotWgen}. The rest of the basis is then fixed to
\begin{equation}\label{ddoteM}
\ddot{e}_{n+1M} = \begin{pmatrix}
V_{2} \iota_{nM} \\
\bar{R}_{n+1} V_{2} \iota_{nM}
\end{pmatrix}, \qquad \ddot{e}_{n+1M+q} = \begin{pmatrix}
0 \\
V_{2} \iota_{nM}
\end{pmatrix}
\end{equation}
which reminds of the structure encountered in \eqref{doteM}. Now, recalling \eqref{Fn+1}, one may observe that the set $ \{ \ddot{e}_{n+1M} \}_{M = r_{n+1}}^{q} $ is in fact a basis of the point-space $ \Lambda_{n+1} $, whose elements are
\begin{equation}\label{Flamddot}
F_{n+1} \lambda_{n+1} = \begin{pmatrix}
V_{2} \\
\bar{R}_{n+1} V_{2}
\end{pmatrix} \lambda_{n+1} = \ddot{e}_{n+1M} \lambda_{n+1 M-r_{n}}
\end{equation}
At this point we can prove the preconceived relation $ \mathcal{N}_{\omega}(\mathcal{C}_{n+1}^{+}) = \Lambda_{n+1} $ with ease, by noticing that the general properties of symplectic bases imply $ \omega_{n+1}(\ddot{e}_{n+1 N}, \ddot{e}_{n+1 M}) = 0 $ and $ \omega_{n+1}(\ddot{e}_{n+1 N}, \ddot{e}_{n+1 X}) = 0 $ for all $ N, M = r_{n+1}, ..., q $ and all $  X = 1,..., r_{n}, q+1, ..., q+r_{n} $. These prove the point, since $ \ddot{e}_{n+1 X} $ form the basis of $ \mathcal{C}_{n+1}^{+(0)} $ and $ \ddot{e}_{n+1 M} $ form the basis of $ \Lambda_{n+1} $, as we know.  The reader can check the above symplectic products explicitly by plugging from \eqref{ddoteE}, \eqref{ddoteEq} and \eqref{ddoteM} and using the fact that $ V_{1}^{T}V_{2} = 0 $.\\

We can rewrite \eqref{ddotWgen} with \eqref{ABCD} and \eqref{ourw} as
\begin{equation}\label{ddotW}
\ddot{W}_{n+1}^{-1} = \begin{pmatrix}
V_{1} \Sigma_{r}^{-1} & V_{2} & 0 & 0  \\
\bar{R}_{n+1} V_{1} \Sigma_{r}^{-1} & \bar{R}_{n+1} V_{2} & V_{1}\Sigma_{r} &  V_{2}
\end{pmatrix}
\end{equation}
where it must be still understood that $ V_{1} \equiv V_{1}(R_{n}) $, $ V_{2} \equiv V_{2}(R_{n}) $ and $ \Sigma_{r} \equiv \Sigma_{r}(R_{n}) $. We repeat for clarity that the odd block columns of \eqref{ddotW} have width $ r_{n} $ and the even block columns have width $ s_{n} $. We can further simplify the form of \eqref{ddotW} by introducing the $ q \times q $ matrix
\begin{equation}\label{key}
\bar{\Sigma} \equiv \begin{pmatrix}
\Sigma_{r} & 0 \\
0 & \mathbf{1}
\end{pmatrix}
\end{equation}
With that, we can write
\begin{equation}\label{ddotW3}
\ddot{W}_{n+1}^{-1} = \begin{pmatrix}
V \bar{\Sigma}^{-1} & 0 \\
\bar{R}_{n+1} V \bar{\Sigma}^{-1} & V \bar{\Sigma}
\end{pmatrix}
\end{equation}
The inverse is easily found to be
\begin{equation}\label{ddotWorig}
\ddot{W}_{n+1} = \begin{pmatrix}
\bar{\Sigma} V^{T} & 0 \\
-\bar{\Sigma}^{-1} V^{T} \bar{R}_{n+1} & \bar{\Sigma}^{-1} V^{T}
\end{pmatrix}
\end{equation}
Eventually, we have at our hand the whole new symplectic basis $ \{ \ddot{e}_{n+1I} \}_{I=1}^{2q} $ of $ \mathcal{P}_{n+1} $ as desired.\\

\vspace{\baselineskip}

Let us summarize our findings and explain their significance for the coordinate description of the one-step evolution. In doing so, we will continue to view the post-constraint surface $ \mathcal{C}_{n+1}^{+} $ as a linear subspace of $ \mathcal{P}_{n+1} $. As we have made clear before, this is possible thanks to the fact that $ \Lambda_{n+1} $ has zero intersection with $ \mathcal{C}_{n+1}^{+(0)} $. We remind that we have fixed the representative space $ \dot{\mathcal{C}}_{n}^{-} $ to be the space spanned by $ \{ \dot{e}_{n E}, \dot{e}_{n E+q} \}_{E = 1}^{r_{n}} $. On the other hand, the representative space on the post-constraint surface was fixed as $ \dot{\mathcal{C}}_{n+1}^{+} = \mathcal{C}_{n+1}^{+(0)} $, which is spanned by $ \{ \ddot{e}_{n+1 E}, \ddot{e}_{n+1 E+q} \}_{E = 1}^{r_{n}} $. We also know that $ \mathcal{N}_{\omega}(\mathcal{C}_{n+1}^{+}) = \Lambda_{n+1} $ and that $ \mathcal{C}_{n+1}^{+(0)} \oplus \Lambda_{n+1} = \mathcal{C}_{n+1}^{+} $. As a linear subspace of $ \mathcal{P}_{n+1} $, the post-constraint surface $ \mathcal{C}_{n+1}^{+} $ has basis $ \{ \ddot{e}_{n+1 E}, \ddot{e}_{n+1 E+q} \}_{E = 1}^{r_{n}} \cup \{ \ddot{e}_{n+1 M} \}_{M = r_{n}+1}^{q} $.\\

We can now pass to the adapted coordinates and use them to describe vectors in $ \mathcal{P}_{n} $ and $ \mathcal{P}_{n+1} $. A vector $ u_{n} \in \mathcal{P}_{n} $ can be written in coordinates w.r.t. the new basis $ \{ \dot{e}_{nI} \}_{I = 1}^{2q} $ as $ u_{n} = \dot{u}_{nI} \dot{e}_{nI} $. Similarly, a vector $ v_{n+1} \in \mathcal{P}_{n+1} $ can be written in coordinates w.r.t. the new basis $ \{ \ddot{e}_{n+1I} \}_{I = 1}^{2q} $ as $ v_{n+1} = \ddot{v}_{n+1I} \ddot{e}_{n+1I} $.  Thanks to our construction, it is now exceptionally easy to make judgments about their nature:

\begin{observation}\label{ob:ad1}
	For $ u_{n} \in \mathcal{P}_{n} $, the following statements hold.
	\begin{enumerate}[{(i)}]
		\item $ u_{n} \in \mathcal{C}_{n}^{-} $ if and only if $ \dot{u}_{nM} = 0 $ for all $ M = r_{n}+1, ..., q $
		\item $ u_{n} \in \mathcal{N}_{\omega}(\mathcal{C}_{n}^{-}) $ if and only if $ \dot{u}_{nE} = \dot{u}_{nE+q} = \dot{u}_{nM} = 0 $ for all $ E = 1,..., r_{n} $ and all $ M = r_{n}+1, ..., q $ ~(i.e., only $ \dot{u}_{nM+q} $ can be nonzero)
		\item $ u_{n} \in \dot{\mathcal{C}}_{n}^{-} $ if and only if $ \dot{u}_{nM} = \dot{u}_{nM+q} = 0 $ for all $ M = r_{n}+1, ..., q $
	\end{enumerate}
\end{observation}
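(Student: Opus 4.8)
The plan is to obtain all three equivalences directly from the way the adapted basis $\{\dot{e}_{nI}\}_{I=1}^{2q}$ of $\mathcal{P}_n$ was assembled, namely so that distinguished subsets of it form bases of exactly the three subspaces appearing in the statement. First I would note that, since $\{\dot{e}_{nI}\}_{I=1}^{2q}$ is a basis, the expansion $u_n = \dot{u}_{nI}\dot{e}_{nI}$ is unique, so that for any index subset $\mathcal{I} \subseteq \{1,\dots,2q\}$ one has $u_n \in \operatorname{span}\{\dot{e}_{nI} : I \in \mathcal{I}\}$ if and only if $\dot{u}_{nI} = 0$ for every $I \notin \mathcal{I}$. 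This reduces the whole problem to identifying the correct index set for each of $\dot{\mathcal{C}}_n^-$, $\mathcal{N}_\omega(\mathcal{C}_n^-)$ and $\mathcal{C}_n^-$.

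For item (iii) I would simply recall that, by construction, $\{\dot{e}_{nE}, \dot{e}_{nE+q}\}_{E=1}^{r_n}$ from \eqref{doteE} is a (symplectic) basis of $\dot{\mathcal{C}}_n^-$, so membership in $\dot{\mathcal{C}}_n^-$ forces the remaining coordinates $\dot{u}_{nM}, \dot{u}_{nM+q}$ ($M = r_n+1,\dots,q$) to vanish, and conversely. For item (ii) I would invoke the remark made just after \eqref{doteM}, that $\{\dot{e}_{nM+q}\}_{M=r_n+1}^{q}$ is a basis of $\mathcal{N}_\omega(\mathcal{C}_n^-)$; hence $u_n \in \mathcal{N}_\omega(\mathcal{C}_n^-)$ exactly when every coordinate other than the $\dot{u}_{nM+q}$ vanishes, which is the asserted condition. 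For item (i) I would combine Observation \ref{cdotoplusn}, which gives $\mathcal{C}_n^- = \dot{\mathcal{C}}_n^- \oplus \mathcal{N}_\omega(\mathcal{C}_n^-)$, with the two bases just mentioned: then $\{\dot{e}_{nE}, \dot{e}_{nE+q}\}_{E=1}^{r_n} \cup \{\dot{e}_{nM+q}\}_{M=r_n+1}^{q}$ is a basis of $\mathcal{C}_n^-$, and this is precisely the full basis $\{\dot{e}_{nI}\}_{I=1}^{2q}$ with the vectors $\dot{e}_{nM}$ ($M = r_n+1,\dots,q$) deleted, so $u_n \in \mathcal{C}_n^-$ if and only if $\dot{u}_{nM} = 0$ for all such $M$.

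I do not expect any real obstacle; the one point requiring a little care is confirming that in each case the indicated subset of $\{\dot{e}_{nI}\}$ spans \emph{exactly} the target subspace and nothing more, which follows from the dimension counts already recorded when the bases were built ($\dim\dot{\mathcal{C}}_n^- = 2r_n$, $\dim\mathcal{N}_\omega(\mathcal{C}_n^-) = s_n$, $\dim\mathcal{C}_n^- = q + r_n$) together with the linear independence of the full basis. As a sanity check, or as an alternative route avoiding Observation \ref{cdotoplusn}, each equivalence can also be verified by direct substitution: expand the coordinate and momentum blocks of $u_n = \dot{u}_{nI}\dot{e}_{nI}$ using \eqref{doteE} and \eqref{doteM}, decompose them along $\mathcal{R}(R_n)$ and $\mathcal{N}(R_n^T) = \mathcal{R}(R_n)^\perp$, and compare with the parametrizations of $\mathcal{C}_n^-$, $\mathcal{N}_\omega(\mathcal{C}_n^-)$ and $\dot{\mathcal{C}}_n^-$ given earlier; orthogonality of the columns of $U_1$ and $U_2$ then pins down exactly the stated vanishing conditions.
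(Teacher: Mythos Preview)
Your proposal is correct and is essentially a fleshed-out version of the paper's own proof, which simply reads ``Follows directly from the preceding discussion.'' You have accurately identified the relevant ingredients from that discussion (the basis subsets spanning $\dot{\mathcal{C}}_n^-$ and $\mathcal{N}_\omega(\mathcal{C}_n^-)$, the direct-sum decomposition from Observation~\ref{cdotoplusn}, and the dimension counts) and combined them in the intended way.
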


\begin{observation}\label{ob:ad2}
	For $ v_{n+1} \in \mathcal{P}_{n+1} $, the following statements hold.
	\begin{enumerate}[{(i)}]
		\item $ v_{n+1} \in \mathcal{C}_{n+1}^{+} $ if and only if $ \ddot{v}_{n+1M} = 0 $ for all $ M = r_{n}+1, ..., q $
		\item $ \mathcal{N}_{\omega}(\mathcal{C}_{n+1}^{+}) = \Lambda_{n+1} $, i.e., $ v_{n+1} \in \mathcal{N}_{\omega}(\mathcal{C}_{n+1}^{+}) $ if and only if $ \ddot{v}_{n+1E} = \ddot{v}_{n+1E+q} = \ddot{v}_{n+1M+q} = 0 $ for all $ E = 1,..., r_{n} $ and all $ M = r_{n}+1, ..., q $
		\item $ v_{n+1} \in \dot{\mathcal{C}}_{n+1}^{+} $ if and only if $ \ddot{v}_{n+1M} = \ddot{v}_{n+1M+q} = 0 $ for all $ M = r_{n}+1, ..., q $
	\end{enumerate}
\end{observation}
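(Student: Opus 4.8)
The plan is to reduce all three equivalences to one elementary principle: once $\{\ddot{e}_{n+1I}\}_{I=1}^{2q}$ is fixed as a basis of $\mathcal{P}_{n+1}$, the coordinates $\ddot{v}_{n+1I}$ of any $v_{n+1}\in\mathcal{P}_{n+1}$ are uniquely determined, so $v_{n+1}$ lies in the linear span of a subfamily $\{\ddot{e}_{n+1I}\}_{I\in S}$ if and only if $\ddot{v}_{n+1I}=0$ for every index outside $S$. The whole argument then amounts to recording which members of the adapted basis span each of the three subspaces $\mathcal{C}_{n+1}^{+}$, $\mathcal{N}_{\omega}(\mathcal{C}_{n+1}^{+})=\Lambda_{n+1}$ and $\dot{\mathcal{C}}_{n+1}^{+}$; all of this has already been assembled in the paragraphs preceding the statement, so what remains is essentially bookkeeping (and closely parallels the proof of Observation \ref{ob:ad1}).

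I would treat the items in the order (iii), (ii), (i). For (iii): by construction $\dot{\mathcal{C}}_{n+1}^{+}=\mathcal{C}_{n+1}^{+(0)}=\mathbb{H}_{n+1}(0)\dot{\mathcal{C}}_{n}^{-}$, and \eqref{ddoteE}--\eqref{ddoteEq} exhibit $\{\ddot{e}_{n+1E},\ddot{e}_{n+1E+q}\}_{E=1}^{r_{n}}$ as a symplectic basis of this space, so $v_{n+1}\in\dot{\mathcal{C}}_{n+1}^{+}$ precisely when the complementary coordinates $\ddot{v}_{n+1M}$ and $\ddot{v}_{n+1M+q}$ ($M=r_{n}+1,\dots,q$) all vanish. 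For (ii): the relation $\mathcal{N}_{\omega}(\mathcal{C}_{n+1}^{+})=\Lambda_{n+1}$ has already been established; combined with \eqref{Flamddot}, which identifies $\{\ddot{e}_{n+1M}\}_{M=r_{n}+1}^{q}$ as a basis of $\Lambda_{n+1}$, the span principle yields exactly $\ddot{v}_{n+1E}=\ddot{v}_{n+1E+q}=\ddot{v}_{n+1M+q}=0$. For (i): Observation \ref{cdotoplusn} applied to $\mathcal{C}=\mathcal{C}_{n+1}^{+}$ (so that $\dot{\mathcal{C}}=\mathcal{C}_{n+1}^{+(0)}$ and $\mathcal{N}_{\omega}(\mathcal{C}_{n+1}^{+})=\Lambda_{n+1}$) gives $\mathcal{C}_{n+1}^{+}=\mathcal{C}_{n+1}^{+(0)}\oplus\Lambda_{n+1}$, spanned by $\{\ddot{e}_{n+1E},\ddot{e}_{n+1E+q}\}_{E=1}^{r_{n}}\cup\{\ddot{e}_{n+1M}\}_{M=r_{n}+1}^{q}$ --- the basis of $\mathcal{C}_{n+1}^{+}$ recorded immediately above the statement. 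Hence the only adapted basis vectors outside $\mathcal{C}_{n+1}^{+}$ are the $\ddot{e}_{n+1M+q}$ with $M=r_{n}+1,\dots,q$, so $v_{n+1}\in\mathcal{C}_{n+1}^{+}$ if and only if $\ddot{v}_{n+1M+q}=0$ for all such $M$; these $\ddot{v}_{n+1M+q}$ are the coordinates whose vanishing is meant in item (i), in agreement with the displayed basis of $\mathcal{C}_{n+1}^{+}$.

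The single point that warrants a moment's care --- and thus the closest thing to an obstacle --- is checking in (i) that it is $\ddot{e}_{n+1M+q}$, and not $\ddot{e}_{n+1M}$, that fails the post-constraint. This is verified directly against \eqref{postconst}: from \eqref{pardevsymm} one has $\bar{L}_{n+1}=-R_{n}^{T}$, hence $\mathcal{N}(\bar{L}_{n+1}^{T})=\mathcal{N}(R_{n})$ and $P_{\mathcal{N}(\bar{L}_{n+1}^{T})}=V_{2}V_{2}^{T}$ with $V_{2}\equiv V_{2}(R_{n})$; substituting \eqref{ddoteM} then gives $\bar{C}_{n+1}\ddot{e}_{n+1M}=0$ while $\bar{C}_{n+1}\ddot{e}_{n+1M+q}=V_{2}V_{2}^{T}V_{2}\iota_{nM}=V_{2}\iota_{nM}\neq0$, using $V_{2}^{T}V_{2}=\mathbf{1}$. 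Beyond this, the proof only needs the index ranges $E=1,\dots,r_{n}$ and $M=r_{n}+1,\dots,q$, together with their shifted copies $E+q$ and $M+q$, kept straight; no further computation is required.
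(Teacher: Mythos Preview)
Your argument is correct and coincides with the paper's own proof, which simply reads ``Follows directly from the preceding discussion''; you have merely spelled out which pieces of that discussion feed each item. One thing worth making explicit: in item (i) you correctly derive the condition $\ddot{v}_{n+1\,M+q}=0$ rather than the $\ddot{v}_{n+1\,M}=0$ printed in the statement, and your direct check against $\bar{C}_{n+1}$ confirms this --- the paper's own examples (e.g.\ the sentence ``the complement $\mathcal{P}_{1}\smallsetminus\mathcal{C}_{1}^{+}$ \dots\ is parametrized by the fifth and the sixth component'') and the evolution rule \eqref{uddotMq} agree with your version, so the printed index in (i) is a typo you have caught.
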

\begin{proof}
	Follows directly from the preceding discussion.
\end{proof}

\vspace{\baselineskip}

As for the general form of the evolution map, we know that all vectors $ u_{n} \in \mathcal{C}_{n}^{-} $ can be evolved into $ u_{n+1} = \mathbb{H}_{n+1}(\lambda_{n+1}) u_{n} \in \mathcal{C}_{n+1}^{+} $. In canonical coordinates, this is represented (in matrix form) as
\begin{equation}\label{un+1}
u_{n+1} =  E_{n} u_{n} + F_{n+1} \lambda_{n+1}
\end{equation}
On the other hand, upon using our adapted coordinates---recall the relations \eqref{ddotex} and \eqref{Flamddot}---, this prescription simplifies substantially. In particular, the vector components associated to the symplectic bases $ \{ \dot{e}_{nX} \} $ and $ \{ \ddot{e}_{n+1X} \} $, $ X = 1,..., r_{n}, q+1, ..., q+r_{n} $, of $ \dot{\mathcal{C}}_{n}^{-} $ and $ \dot{\mathcal{C}}_{n+1}^{+} $, respectively, are conserved by the evolution,
\begin{equation}\label{uddotX}
\ddot{u}_{n+1X} = \dot{u}_{nX}
\end{equation}
while the zero components $ \dot{u}_{nM} $ are updated with an arbitrary constant contribution from the point-set part $ F_{n+1} \lambda_{n+1} \in \Lambda_{n+1} $,
\begin{equation}\label{uddotM}
\ddot{u}_{n+1M} = \lambda_{n+1 M-r_{n}}
\end{equation}
and the null-space components $ \dot{u}_{nM+q} $ are annihilated,
\begin{equation}\label{uddotMq}
\ddot{u}_{n+1M+q} = \dot{u}_{nM} = 0
\end{equation}
Let us remark that we could have introduced a more logical transformation which would instead result in $ \ddot{u}_{n+1M} = \dot{u}_{nM} = 0 $ and $ \ddot{u}_{n+1M+q} = \lambda_{n+1 M-r_{n}} $. However, this transformation---try to use an identity matrix instead of \eqref{ourw} to see it come out---would have a less practical matrix \eqref{ddotWorig}, so we decided to proceed this way instead. Because of providing the Hamiltonian evolution with such a beautifully simple form, the adapted coordinates defined solemnly by the two symplectic matrices \eqref{dotW} and \eqref{ddotWorig} can be very helpful not only in classifying vectors, but also in describing the evolution. With this we close our discussion of the constraint surfaces.\\

\vspace{\baselineskip}

\subsection{Global Solutions}\label{globsol}
In this section we briefly discuss solutions spanning over the whole considered time interval from $ n = 0 $ to $ n = t $, i.e., elements of the solution space $ \mathcal{S}_{0t} $. Recall that a \textit{solution} $ y \in \mathcal{S}_{0t} $ is a point in $ \mathcal{P}_{0t} $ which satisfies momentum-matching $ ^{-}p_{n} = ~ ^{+}p_{n} $ as well as all the constraints originating in the irregularity of the system. At every time-slice $ n $ we can identify two kinds of pre-constraints: there is the forward pre-constraint $ C_{n} y_{n} = 0 $ which has to be satisfied by $ y_{n} $ so that the solution continues to time-slice $ n+1 $, and there is the backwards pre-constraint $ \bar{C}_{n} y_{n} = 0 $ which has to be satisfied should the solution continue to time-slice $ n-1 $.\\

When we talk about global solutions, the notion of pre-constraint surface is not sufficient: even if the time-slice data $ y_{n} \in \mathcal{P}_{n} $ satisfy the pre-constraint, i.e., $ y_{n} \in \mathcal{C}_{n}^{-} $, there is no guarantee that the evolved configuration $ y_{n+1} $ will be in $ \mathcal{C}_{n+1}^{-} $. We therefore define the constraint surfaces
\begin{equation}\label{Dn}
\mathcal{D}_{n} = \{ y_{n} \in \mathcal{P}_{n} ~ \vert ~ \exists \text{ solution } y \in \mathcal{S}_{0t} \text{ such that } y_{n} = \mathbb{P}_{n} y \}
\end{equation}
For a linear system like ours, one can check that $ \mathcal{D}_{n} $ are linear subspaces of $ \mathcal{P}_{n} $. We may also consider the total constraint space $ \mathcal{D}_{0t} = \bigotimes_{n = 0}^{t} \mathcal{D}_{n} $. By definition, each solution is in $ \mathcal{D}_{0t} $ but not all points in $ \mathcal{D}_{0t} $ are solutions, i.e., $ \mathcal{S}_{0t} \subset \mathcal{D}_{0t} $. We must keep in mind that because of the free parameters of the Hamiltonian evolution map, $ y \in \mathcal{S}_{0t} $ is in general not uniquely defined by $ y_{n} \in \mathcal{D}_{n} $.\\

The previously given argument for conservation of symplectic product can be extended by induction to arbitrary combination of evolution steps. We can be therefore sure that if $ y, z $ are two solutions, then $ \omega_{n}(y_{n},z_{n}) $ is independent of $ n $. This motivates a definition of \textit{product of solutions} $ \omega: \mathcal{S}_{0t} \times \mathcal{S}_{0t} \rightarrow \mathbb{R} $ with $ \omega(y,z) = \omega_{n}( y_{n} , z_{n}) $ for an arbitrary $ n \in \{0, ..., t\} $. This product is \textit{not} generally symplectic.\\

Having established $ \omega $ on $ \mathcal{S}_{0t} $, we can treat $ \mathcal{S}_{0t} $ in the same way we treated an arbitrary subspace $ \mathcal{C} $ of a symplectic space $ (\mathcal{P}, \omega) $ in Sec. \ref{constrsurf} and classify the solutions by their product structure. Let us say that two solutions $ y, \tilde{y} \in \mathcal{S}_{0t} $ are \textit{symplectically equivalent} if $ \omega(y,z) = \omega(\tilde{y},z) $ for all $ z \in \mathcal{S}_{0t} $, and write $ y \sim \tilde{y} $. Then we render the equivalence classes $ [y] $ of all symplectically equivalent solutions $ [y] = \{ \tilde{y} ~ \vert ~ \tilde{y} \sim y \} $. The space of such equivalence classes shall be denoted by $ \tilde{\mathcal{S}}_{0t} $. There is a naturally induced product $ \omega : \tilde{\mathcal{S}}_{0t} \times \tilde{\mathcal{S}}_{0t} \rightarrow \mathbb{R} $, $ \omega( [y], [z] ) = \omega(y,z) $. This is worth the effort for the following reason:

\begin{observation}\label{O1}
	The space $ (\tilde{\mathcal{S}}_{0t}, \omega) $ is symplectic.
\end{observation}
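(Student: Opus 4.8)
The plan is to reduce the statement to the general reduction construction of Section~\ref{constrsurf}, applied now with the solution space $\mathcal{S}_{0t}$ playing the role of the generic linear subspace $\mathcal{C}$ and with $(\mathcal{P}_n,\omega_n)$ (for any fixed $n$) playing the role of the ambient symplectic space. Three things then need to be verified: (a)~$\mathcal{S}_{0t}$ is a finite-dimensional vector space on which $\omega$ restricts to a bilinear antisymmetric form; (b)~this form is well defined, i.e. independent of the time-slice used to evaluate it; and (c)~after quotienting by its null space the induced form is non-degenerate. Point (c) is then essentially Observation~\ref{Octilde} transcribed verbatim.

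First I would record that $\mathcal{S}_{0t}$ is a linear subspace of $\mathcal{P}_{0t}$: momentum-matching ${}^{-}p_n={}^{+}p_n$ is a linear condition, and the one-step evolution relations \eqref{evol} together with all constraints $C_n y_n=0$, $\bar C_{n+1}y_{n+1}=0$ are linear in the coordinates, so the set of configurations that satisfy them for all $n$ (with some admissible choice of the free parameters $\lambda$, $\mu$) is a linear subspace; its dimension is bounded by $N\le q(t+1)$, hence finite. Next, $\omega(y,z):=\omega_n(y_n,z_n)$ does not depend on $n$ — this is exactly the conservation of the symplectic product checked for a single step in Section~\ref{constrsurf} and extended to arbitrary $n$ by the induction remarked upon immediately before Observation~\ref{O1}; it is the discrete analogue of \eqref{dots0}. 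Bilinearity and antisymmetry of $\omega$ on $\mathcal{S}_{0t}$ are inherited from the corresponding properties of each $\omega_n$ on $\mathcal{P}_n$.

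With (a) and (b) established, the machinery of Section~\ref{constrsurf} applies word for word: the null space $\mathcal{N}_\omega(\mathcal{S}_{0t})=\{z\in\mathcal{S}_{0t}\mid\omega(z,u)=0\ \forall u\in\mathcal{S}_{0t}\}$ is precisely the set of solutions symplectically equivalent to $0$, so by the reasoning of Observation~\ref{Osim} the elements of $\tilde{\mathcal{S}}_{0t}=\mathcal{S}_{0t}/\mathcal{N}_\omega(\mathcal{S}_{0t})$ are the classes of symplectically equivalent solutions; and by the reasoning of Observation~\ref{Octilde} the induced form $\omega([y],[z])=\omega(y,z)$ is well defined (independent of representatives, by the very definition of symplectic equivalence), bilinear, antisymmetric, and non-degenerate — if $\omega([y],[z])=0$ for all $[y]$ then $\omega(y,z)=0$ for all $y\in\mathcal{S}_{0t}$, hence $z\sim 0$ and $[z]=[0]$. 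Therefore $(\tilde{\mathcal{S}}_{0t},\omega)$ is symplectic. I do not anticipate a real obstacle here; the only point that warrants an explicit line rather than a pointer to Section~\ref{constrsurf} is the linearity of $\mathcal{S}_{0t}$, which — unlike $\mathcal{C}_n^{\pm}$ — is not the kernel of a single displayed matrix but the intersection over all time-slices of linear conditions together with the reachability requirement built into \eqref{Dn}.
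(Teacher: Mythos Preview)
Your proposal is correct and takes essentially the same approach as the paper, which simply notes that the proof is analogous to that of Observation~\ref{Octilde}. You have filled in the preliminary checks (linearity of $\mathcal{S}_{0t}$, well-definedness and bilinearity/antisymmetry of $\omega$) that the paper leaves implicit, but the core reduction is identical; one small wording point is that the ``reachability requirement built into \eqref{Dn}'' belongs to the definition of $\mathcal{D}_n$ rather than $\mathcal{S}_{0t}$ itself, which is simply cut out by the linear momentum-matching and constraint conditions.
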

\begin{proof}
	The proof is analogical to that of Observation \ref{Octilde}.
\end{proof}

The construction of Sec. \ref{constrsurf} can be straightforwardly applied to $ \mathcal{D}_{n} $ which is a subspace of the symplectic space $ \mathcal{P}_{n} $. Thus we get the space $ \tilde{\mathcal{D}}_{n} $ of equivalence classes $ [y_{n}] = \{ \tilde{y}_{n} \in \mathcal{D}_{n} ~ \vert ~ \tilde{y}_{n} \sim y_{n} \} $ with $ y_{n} \sim \tilde{y}_{n} $ defined by $ \omega_{n}(y_{n},z_{n}) = \omega_{n}(\tilde{y}_{n},z_{n}) $ for all $ z_{n} \in \mathcal{D}_{n} $. The space $ \tilde{\mathcal{D}}_{n} $ is equal to $ \mathcal{D}_{n} / \mathcal{N}_{n} $ with $ \mathcal{N}_{n} = \{ z_{n} \in \mathcal{D}_{n} ~ \vert ~ z_{n} \sim 0 \} $. We of course set $ \omega_{n} : \tilde{\mathcal{D}}_{n} \times \tilde{\mathcal{D}}_{n} \to \mathbb{R} $ to act as $ \omega_{n}([y_{n}], [u_{n}]) = \omega_{n}(y_{n},u_{n}) $, resulting in the symplectic vector space $ (\tilde{\mathcal{D}}_{n}, \omega_{n}) $.\\

The relationship between $ \tilde{\mathcal{D}}_{n} $ and $ \tilde{\mathcal{S}}_{0t} $ is particularly simple:
\begin{observation}\label{O2}
	For every initial condition $ [y_{n}] \in \tilde{\mathcal{D}}_{n} $, $ n \in \{ 0, ..., t \} $ exists a solution $ [y] \in \tilde{\mathcal{S}}_{0t} $ such that $ [\mathbb{P}_{n} y] = [y_{n}] $. This solution is unique.
\end{observation}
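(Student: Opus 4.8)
The plan is to prove the two halves of the statement — existence of a global solution whose class represents $[y_n]$, and uniqueness of that class — separately, relying on just two facts already established: the defining property \eqref{Dn} of $\mathcal{D}_n$, namely that \emph{every} $u_n \in \mathcal{D}_n$ equals $\mathbb{P}_n w$ for some global solution $w \in \mathcal{S}_{0t}$; and the conservation of the product of solutions, which gives $\omega(y,z) = \omega_n(y_n,z_n)$ for every $n$ and all $y,z \in \mathcal{S}_{0t}$. A preliminary remark is needed first: that the assignment $[y] \mapsto [\mathbb{P}_n y]$ is a well-defined map $\tilde{\mathcal{S}}_{0t} \to \tilde{\mathcal{D}}_n$. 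To see this I would take $y,y' \in \mathcal{S}_{0t}$ with $y \sim y'$; for an arbitrary $u_n \in \mathcal{D}_n$ pick, by \eqref{Dn}, a solution $w$ with $w_n = u_n$, and then conservation together with $y \sim y'$ gives $\omega_n(y_n,u_n) = \omega(y,w) = \omega(y',w) = \omega_n(y'_n,u_n)$, so $y_n \sim y'_n$ in $\mathcal{D}_n$ and $[\mathbb{P}_n y] = [\mathbb{P}_n y']$.

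Existence is then immediate. Choose any representative $y_n \in [y_n] \subseteq \mathcal{D}_n$; by \eqref{Dn} there is a solution $y \in \mathcal{S}_{0t}$ with $\mathbb{P}_n y = y_n$, and its class $[y] \in \tilde{\mathcal{S}}_{0t}$ then satisfies $[\mathbb{P}_n y] = [y_n]$ by construction.

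For uniqueness I would suppose $[y],[z] \in \tilde{\mathcal{S}}_{0t}$ both map to $[y_n]$, i.e. $y_n \sim z_n$ in $\mathcal{D}_n$, meaning $\omega_n(y_n,u_n) = \omega_n(z_n,u_n)$ for all $u_n \in \mathcal{D}_n$. Let $w \in \mathcal{S}_{0t}$ be an arbitrary solution; then $w_n = \mathbb{P}_n w \in \mathcal{D}_n$, so $\omega_n(y_n,w_n) = \omega_n(z_n,w_n)$, and conservation of the product of solutions rewrites this as $\omega(y,w) = \omega(z,w)$. Since $w$ ranges over all of $\mathcal{S}_{0t}$, this says $y \sim z$, hence $[y] = [z]$ in $\tilde{\mathcal{S}}_{0t}$.

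I do not expect a genuine obstacle: the content is bookkeeping with the three nested equivalence relations (the symplectic equivalence on $\mathcal{S}_{0t}$, the one on $\mathcal{D}_n$, and the products they induce on classes). The single step requiring care is the use of \eqref{Dn} in its ``onto'' direction — that every slice datum in $\mathcal{D}_n$ is realised by an actual global solution — since this is precisely what allows one to probe a class in $\tilde{\mathcal{D}}_n$ by pairing it against solutions and thereby transport the equivalence back and forth between $\mathcal{S}_{0t}$ and $\mathcal{D}_n$; without it neither the well-definedness of the projection nor the uniqueness argument would close.
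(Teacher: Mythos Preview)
Your proof is correct and follows essentially the same route as the paper's: both rely on the defining surjectivity of $\mathbb{P}_n$ onto $\mathcal{D}_n$ and on conservation $\omega(y,w)=\omega_n(\mathbb{P}_n y,\mathbb{P}_n w)$ to establish the equivalence $y\sim w \Leftrightarrow \mathbb{P}_n y \sim \mathbb{P}_n w$, from which existence and uniqueness drop out. Your version is more explicit---you separate out the well-definedness of $[y]\mapsto[\mathbb{P}_n y]$ as a preliminary step, whereas the paper compresses both directions of the equivalence into a single line---but the underlying argument is the same.
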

\begin{proof}
	By definition of $ \mathcal{D}_{n} $, there is a solution $ y \in \mathcal{S}_{0t} $ for each $ y_{n} \in \mathcal{D}_{n} $ such that $ \mathbb{P}_{n} y = y_{n} $. Next, $ \omega(y,u) = \omega_{n}(\mathbb{P}_{n}y,\mathbb{P}_{n}u) $, therefore $ y \sim u \Leftrightarrow \mathbb{P}_{n}y \sim \mathbb{P}_{n}u $. It follows that $ [y_{n}] = [\mathbb{P}_{n} y] $. Assume there are two solutions $ [y], [u] \in  \tilde{\mathcal{S}} $ such that $ y_{n} \sim u_{n} $, then $ y \sim u $ and $ [y] = [u] $.
\end{proof}

We conclude our discussion by:

\begin{observation}\label{O3}
	The spaces $ (\tilde{\mathcal{D}}_{n}, \omega_{n}) $ for each $ n \in \{ 0, ..., t \} $ and $ (\tilde{\mathcal{S}}_{0t}, \omega) $ are all mutually symplectomorphic.
\end{observation}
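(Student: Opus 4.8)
The plan is to upgrade the bijection of Observation~\ref{O2} to a linear symplectomorphism, and then obtain the mutual symplectomorphisms between the $ \tilde{\mathcal{D}}_n $ by composition. For a fixed $ n $, I would define a map $ \Phi_n : \tilde{\mathcal{D}}_n \to \tilde{\mathcal{S}}_{0t} $ by $ \Phi_n([y_n]) = [y] $, where $ y \in \mathcal{S}_{0t} $ is an arbitrary solution with $ \mathbb{P}_n y = y_n $. Observation~\ref{O2} guarantees precisely that this prescription does not depend on the chosen representative $ y_n $ of the class, nor on the chosen extension $ y $, and that the resulting map is a bijection (injectivity: $ [y]=[u] $ forces $ y\sim u $, hence $ \mathbb{P}_n y \sim \mathbb{P}_n u $, i.e.\ $ [y_n]=[u_n] $; surjectivity is immediate from $ [y] = \Phi_n([\mathbb{P}_n y]) $).

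Next I would check linearity. Since the evolution equations \eqref{prepostmomentamatrix} and all the constraints are linear, $ \mathcal{S}_{0t} $ is a linear subspace of $ \mathcal{P}_{0t} $ and $ \mathbb{P}_n $ acts linearly on it; moreover the quotient projections $ \mathcal{D}_n \to \tilde{\mathcal{D}}_n $ and $ \mathcal{S}_{0t} \to \tilde{\mathcal{S}}_{0t} $ are linear. Hence if $ y $ extends $ y_n $ and $ u $ extends $ u_n $, then $ \alpha y + \beta u $ extends $ \alpha y_n + \beta u_n $, and passing to classes yields $ \Phi_n(\alpha[y_n] + \beta[u_n]) = \alpha \Phi_n([y_n]) + \beta \Phi_n([u_n]) $.

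The symplectic property is the heart of the matter, but it reduces to the conservation of the symplectic product of solutions established just before the definition of $ \omega $ on $ \mathcal{S}_{0t} $: for solutions $ y, u $ the number $ \omega_m(y_m, u_m) $ is independent of $ m $. Therefore, for $ [y_n], [u_n] \in \tilde{\mathcal{D}}_n $ with extensions $ y, u $,
\begin{equation*}
\omega([y_n],[u_n]) = \omega_n(y_n,u_n) = \omega_n(\mathbb{P}_n y, \mathbb{P}_n u) = \omega(y,u) = \omega([y],[u]) = \omega(\Phi_n[y_n],\Phi_n[u_n]).
\end{equation*}
Thus $ \Phi_n $ is a linear bijection intertwining the two non-degenerate antisymmetric bilinear forms, i.e.\ a symplectomorphism $ \tilde{\mathcal{D}}_n \cong \tilde{\mathcal{S}}_{0t} $ (both spaces are symplectic by Observation~\ref{O1} and by the paragraph establishing $ (\tilde{\mathcal{D}}_n,\omega_n) $, which follows the pattern of Observation~\ref{Octilde}).

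Finally I would conclude by transitivity: for any $ m, n \in \{0,\dots,t\} $ the composite $ \Phi_m^{-1} \circ \Phi_n : \tilde{\mathcal{D}}_n \to \tilde{\mathcal{D}}_m $ is again a linear symplectic bijection, and $ \Phi_n $ relates each $ \tilde{\mathcal{D}}_n $ to $ \tilde{\mathcal{S}}_{0t} $; hence all the listed spaces are mutually symplectomorphic. I expect no genuine obstacle: every nontrivial ingredient (existence and uniqueness of the extending solution modulo symplectic equivalence, the $ n $-independence of the product of solutions, and the symplectic nature of the quotients) has already been proved, so the argument is an assembly of Observations~\ref{Octilde}, \ref{O1} and \ref{O2}; the only point demanding a little care is confirming that $ \Phi_n $ is well defined and linear despite the non-uniqueness of $ y $, which is exactly what Observation~\ref{O2} supplies.
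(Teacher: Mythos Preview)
Your proposal is correct and follows essentially the same route as the paper: use Observation~\ref{O2} to produce the bijection between $\tilde{\mathcal{D}}_n$ and $\tilde{\mathcal{S}}_{0t}$, verify it respects the symplectic forms via conservation of the product of solutions, and then invoke transitivity. The paper's own proof is a two-line sketch stating exactly this, whereas you have spelled out the well-definedness, linearity, and form-preservation that the paper leaves implicit; one cosmetic slip is that in your displayed chain the leftmost term should read $\omega_n([y_n],[u_n])$ rather than $\omega([y_n],[u_n])$.
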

\begin{proof}
	The symplectomorphism of $ (\tilde{\mathcal{D}}_{n}, \omega_{n}) $ (for arbitrary $ n $) and $ (\tilde{\mathcal{S}}_{0t}, \omega) $ is given by Observation \ref{O2}. Since the symplectomorphic relation is transitive, it follows that for any $ n,m \in \{ 0, ..., t \} $, $ (\tilde{\mathcal{D}}_{n}, \omega_{n}) $ is symplectomorphic to $ (\tilde{\mathcal{D}}_{m}, \omega_{m}) $.
\end{proof}

For practical purposes, we can go one more step and represent each class of symplectically equivalent solutions $ [y] \in \tilde{\mathcal{S}}_{0t} $ by a single solution $ y \in [y] $. The space of these representative solutions shall be denoted by $ \dot{\mathcal{S}}_{0t} $.  We require that $ \dot{\mathcal{S}}_{0t} $ is a linear subspace of $ \mathcal{S}_{0t} $. Once it is chosen, it fixes uniquely the spaces $ \dot{\mathcal{D}}_{n} = \{ \mathbb{P}_{n}y ~ \vert ~ y \in \dot{\mathcal{S}}_{0t} \} $ of the corresponding initial data. We let $ \dot{\mathcal{S}}_{0t} $,  $ \dot{\mathcal{D}}_{n} $ inherit the symplectic structures $ \omega $, $ \omega_{n} $ of $ \tilde{\mathcal{S}}_{0t} $, $ \tilde{\mathcal{D}}_{n} $, respectively. Note that $ (\tilde{\mathcal{S}}_{0t}, \omega) $ and $ (\dot{\mathcal{S}}_{0t}, \omega) $ are trivially symplectomorphic. Then $ (\dot{\mathcal{S}}_{0t}, \omega) $, $ (\dot{\mathcal{D}}_{n}, \omega_{n} ) $ become symplectic spaces and all the tildes in the statement of Observation \ref{O3} can be replaced by dots.\\

\vspace{\baselineskip}

\section{Massless Scalar Field on a 2D Spacetime Lattice}\label{toy}
In this section we look at a particular example of a discrete linear dynamical system and use it to demonstrate the application of the above introduced formalism. We consider massless scalar field on a Regge triangulation corresponding to a flat spacetime region with 1 space and 1 time dimension. This will truly be a toy model, since we keep the background fixed and only care about the field's dynamics. On the other hand, one should add that in two spacetime dimensions the Einstein tensor vanishes identically \cite{Strobl2000}, and this behavior carries over from continuum to lattice \cite{Hamber2009}, so it would suffice to consider a conformally flat spacetime to get the full theory including gravitation. However, this is not our objective now, as the present toy model will serve its illustrative purpose well.\\

According to our previous assumption, the triangulation shall be composed of a finite number of spacelike slices indexed by $ n \in \{0, ..., t\} $, such that every slice includes a finite number of vertices (at most $ q $) and every vertex is a member of exactly one slice. For simplicity, we shall consider triangulation with only two kinds of edges: spacelike and timelike, and suppose that all edges of each of these families have identical geometry. Edges between vertices which belong to the same slice are \textit{spacelike}, while edges between vertices which inhabit neighboring slices are \textit{timelike}. We do not allow for any other kind of edge.\\

Let us say more about the scalar field. One can describe it easily by associating a field value $ \varphi_{i} \in \mathbb{R} $ to every vertex $ i $. The corresponding scalar field action can be found e.g. in Sec. 6.12 of \cite{Hamber2009}; in our case it will take the form
\begin{equation}\label{sfaction}
S_{0t} = \frac{1}{2} \sum_{\text{edges } ij} w_{ij} ~ (\varphi_{i}-\varphi_{j})^{2}
\end{equation}
Here the sum  runs over all edges $ ij $ in the triangulation. We do not endow edges with any orientation and assume that $ ij $ is the same edge as $ ji $, so we can equivalently write $ \sum_{\text{edges } ij} \equiv \frac{1}{2} \sum_{ij} \delta^{e}_{ij} $ with
\begin{equation}\label{key}
\delta^{e}_{ij} = \begin{cases}
1 & \text{if the vertices } i \text{ and } j \text{ are connected by an edge} \\
0 & \text{otherwise}
\end{cases}
\end{equation}
We assume that edges only connect distinct vertices, i.e., $ \delta^{e}_{ii} = 0 $ for all $ i $. We remark that the action \eqref{sfaction} is similar to that used in Example 2.1. of \cite{Dittrich2013}, with the major difference that here we consider Lorentzian lattice, and not Euclidean. This is the reason why we need to include a coefficient $ w_{ij} $ providing \textit{weight} to every edge. It is proportional to the dual edge volume (here, area) and inversely proportional to the squared edge length:
\begin{equation}\label{key}
w_{ij} = \frac{\mathcal{V}_{ij}}{\mathcal{l}_{ij}^{2}}
\end{equation}
By definition, $ w_{ij} = w_{ji} $ is symmetric.\\

We shall assume for simplicity that all our triangles are identical. In result, there will be only two kinds of triangles in our lattice: (2,1) type triangles, which have two vertices at the sooner time-slice and one vertex at the later time-slice, and (1,2) type triangles whose configuration is the opposite. Note that all triangles have one spacelike edge and two timelike edges, regardless of their type. The dual edge volume is also the same for both the types. It may be fixed as $ \mathcal{V}_{ij} = m A $ where $ A $ is a constant contribution from one triangle (i.e., $ 1/3 $ of its area) and $ m $ is the number of triangles which contain the given edge. Our triangulation will be mostly built of \textit{interior} edges which belong to exactly two triangles; we therefore decide to divide the action by the overall constant $ 2A $. Occasionally it will be useful to consider \textit{boundary} edges which belong to only one triangle (typically edges on a boundary); for these we shall include a factor of $ 1/2 $ in $ w_{ij} $ to have the numbers right.\\

Because of the Lorentzian nature of our lattice, the squared edge lengths $ \mathcal{l}_{ij}^{2} $ must be taken in account too. In a triangulation of a flat spacetime, they are given simply by (squared) spacetime intervals between points which correspond to the two vertices of the edge in question. We shall fix them as follows. First we provide our flat spacetime region with an orthogonal frame consisting of a time coordinate $ t $ and a space coordinate $ x $. Then we draw a triangular lattice such that every triangle has one edge aligned with the $ x $ direction and all triangles are equilateral with unit side in the Euclidean metric induced by the frame, see Fig. \ref{fig:triang}.

\begin{figure}[h!]
	\centering
	\begin{tikzpicture}[scale=1]
	\tikzset{
		edget/.style={
			draw,dashed,color=lightgray!100,line width=0.3mm
		},
		edges/.style={
			draw,-,color=lightgray!100,line width=0.3mm
		}
	}
	
	\begin{axis}[
	axis lines=left,
	xtick=\empty,
	ytick=\empty,
	axis line style={->},
	tick style={color=black},
	xlabel={$ x $},
	ylabel={$ t $},
	xmin=0,
	xmax=4,
	ymin=0,
	ymax=3.2,
	y=10mm,
	x=10mm,
	]
	
	\coordinate (c0) at (axis cs: 0,0);
	\coordinate (c1) at (axis cs: 1,0);
	\coordinate (c2) at (axis cs: 2,0);
	\coordinate (c3) at (axis cs: 3,0);
	
	\coordinate (cc) at (axis cs: 0,0.866);
	\coordinate (c4) at (axis cs: 0.5,0.866);
	\coordinate (c5) at (axis cs: 1.5,0.866);
	\coordinate (c6) at (axis cs: 2.5,0.866);
	\coordinate (c7) at (axis cs: 3.5,0.866);
	
	\coordinate (c8) at (axis cs: 0,1.732);
	\coordinate (c9) at (axis cs: 1,1.732);
	\coordinate (c10) at (axis cs: 2,1.732);
	\coordinate (c11) at (axis cs: 3,1.732);	
	
	\coordinate (cg) at (axis cs: 0,2.598);
	\coordinate (c12) at (axis cs: 0.5,2.598);
	\coordinate (c13) at (axis cs: 1.5,2.598);
	\coordinate (c14) at (axis cs: 2.5,2.598);
	\coordinate (c15) at (axis cs: 3.5,2.598);

	\draw[edget] (c0) -- (c4) -- (c1) -- (c5) -- (c2) -- (c6) -- (c3) -- (c7);
	\draw[edges] (cc) -- (c4) -- (c5) -- (c6) -- (c7);
	\draw[edget] (c8) -- (c4) -- (c9) -- (c5) -- (c10) -- (c6) -- (c11) -- (c7);
	\draw[edges] (c8) -- (c9) -- (c10) -- (c11);
	\draw[edget] (c8) -- (c12) -- (c9) -- (c13) -- (c10) -- (c14) -- (c11) -- (c15);
	\draw[edges] (cg) -- (c12) -- (c13) -- (c14) -- (c15);
	
	\end{axis}
	
	\end{tikzpicture}
	\vspace{0 mm}
	\caption{Triangulation of flat 2-dimensional spacetime. Spacelike edges are drawn in full line, timelike edges are drawn in dashed line.}
	\label{fig:triang}
	\vspace{4 mm}
\end{figure}
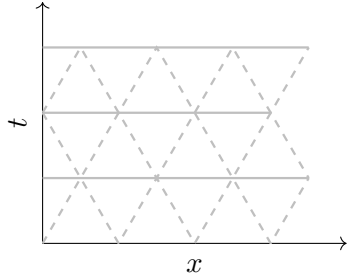

It follows that all spacelike edges have squared length $ 1 $ and all timelike edges have squared length $ -1/2 $, i.e., our distinction of the edges has the expected geometrical meaning. Altogether, we put
\begin{equation}\label{wijint}
w_{ij} = \begin{cases}
1 & \text{if } ij \text{ is an interior spacelike edge} \\
-2 & \text{if } ij \text{ is an interior timelike edge} \\
1/2 & \text{if } ij \text{ is a boundary spacelike edge} \\
-1 & \text{if } ij \text{ is a boundary timelike edge}
\end{cases}
\end{equation}
If there is no edge between the vertices $ i $ and $ j $, the weight $ w_{ij} $ is undefined.\\

The last thing we need to decide before we go on to employ the discrete canonical evolution is the topology of the lattice. Since we need each time-slice to have a limited number of vertices, we opt for the tube topology with time direction along the tube. More precisely, we will consider a lattice in which each time-slice is formed by a single closed loop of vertices connected by spacelike edges, and the individual neighboring time-slices are connected by timelike edges so that the resulting lattice is triangular. Since we want to consider only a finite number of time-slices, we cut the tube so that it starts with time-slice $ 0 $ and ends with time-slice $ t $. Consequently, all the spacelike edges at time-steps $ 0 $ and $ t $ will be boundary edges. Then we can say that \eqref{sfaction} is the action of the field corresponding to the region between $ 0 $ and $ t $, as one would expect.\\

Suppose we are given an instance of the above defined triangular lattice with the scalar field living on it. In the general case, the number of vertices in a time-slice will be varying with time. As discussed in \cite{Dittrich2012}, one can use the formalism of discrete canonical evolution to describe the field's dynamics. Suppose that the largest time-slice has $ q $ vertices; then one simply provides all other time-slices with additional \textit{virtual} vertices so that every time-slice has $ q $ vertices in the end. After that, the total number of vertices is $ N \equiv q(t+1) $. Virtual vertices are not connected by any edges, and therefore have no influence on the geometry. If we include into our consideration the field values at virtual vertices, we can say that our field has $ q $ degrees of freedom at each time-slice. The obvious implication of this trick is that the field values at virtual vertices have no significance for the action \eqref{sfaction}. In consequence, the system at hand will be irregular.\\

Now we can straightforwardly apply the formalism of discrete canonical evolution. Since the action \eqref{sfaction} is quadratic in field values, the system satisfies our additional assumption of linearity, so we can take advantage of our treatment given in the preceding section. We rewrite
\begin{equation}\label{sfactioncondensed}
S_{0t} = \textstyle \frac{1}{2} ~ \varphi^{T} K \varphi
\end{equation}
where $ \varphi \in \mathbb{R}^{N} $ and $ K $ is a real, symmetric $ N \times N $ matrix which we call the \textit{dynamical matrix}. By comparison of \eqref{sfaction} and \eqref{sfactioncondensed}, one finds that
\begin{equation}\label{Kij}
K_{ij} = \delta_{ij} \sum_{k} \delta^{e}_{ik} w_{ik} - \delta^{e}_{ij} w_{ij}
\end{equation}
From this expression it is clear that all the row and column sums of $ K $ are zero. Let us also point out that since virtual vertices are by definition not associated to any edges, $ K_{ij} = 0 $ whenever $ i $ or $ j $ is a virtual vertex.\\

Next we need to split up the action into individual time-step contributions. We shall do that simply by splitting the lattice into $ t $ individual time-steps (between 0 and 1, ..., between $ t-1 $ and $ t $). The splitting of lattice induces a corresponding splitting of the matrix $ K $. We let $ K_{(n)} $ be the $ q \times q $ submatrix of $ K $ corresponding to time-slice $ n $ and $ K_{(n,n+1)} $ the $ q \times q $ submatrix of $ K $ with rows corresponding to variables at time-slice $ n $ and columns corresponding to variables at time-slice $ n+1 $. Thanks to the symmetry of $ K $, the submatrix $ K_{(n)} $ is symmetric and $ K_{(n,n+1)}^{T} = K_{(n+1,n)} $. Moreover, our splitting of the lattice results in further decomposition $ K_{(n)} = K_{(n)}^{-} + K_{(n)}^{+} $ (for $ n = 1, ..., t-1 $) where $ K_{(n)}^{-} $ and $  K_{(n)}^{+} $ describe the boundary time-slice $ n $ of the two separated time steps: one between $ n-1 $ and $ n $ ($ - $), other between $ n $ and $ n+1 $ ($ + $). These matrices are given by the same formula \eqref{Kij} (with $ i,j $ both belonging to $ n $) to which one plugs the lattice of the appropriate individual time-step. Note that the only quantities which change in splitting the lattice are the dual volumes of spacelike edges. An interior spacelike edge $ ij $ has dual volume $ \mathcal{V}_{ij} = \mathcal{V}_{ij}^{-} + \mathcal{V}_{ij}^{+} $ whose one part $ \mathcal{V}_{ij}^{-} $ lies in the time-slice between $ n-1 $ and $ n $ and the other part $ \mathcal{V}_{ij}^{+} $ lies in the time-slice between $ n $ and $ n+1 $. The splitting of $ K_{(n)} $ therefore corresponds to splitting of these dual volumes according to the given geometry, so that one gets the correct time-step action contribution $ S_{n+1} $. In our simplified setting, the splitting is done very easily by turning all (originally interior) spacelike edges into boundary spacelike edges, i.e., dividing their edge weight by the factor of two. See Fig. \ref{fig:split} for an illustration.\\

\begin{figure}[h!]
	\centering
	\begin{tikzpicture}[scale=1]
	\tikzset{
		edget/.style={
			draw,dashed,color=lightgray!100,line width=0.3mm
		},
		edges/.style={
			draw,-,color=lightgray!100,line width=0.3mm
		},
		edgess/.style={
			draw,-,color=lightgray!100,line width=0.8mm
		},
		arrow/.style={
			draw,->,color=black,line width=0.3mm
		},
	}
	
	\coordinate (c1) at (1,0);
	\coordinate (c2) at (2,0);
	\coordinate (c3) at (3,0);
	\coordinate (c4) at (0.5,0.866);
	\coordinate (c5) at (1.5,0.866);
	\coordinate (c6) at (2.5,0.866);
	\coordinate (c7) at (3.5,0.866);
	\coordinate (c9) at (1,1.732);
	\coordinate (c10) at (2,1.732);
	\coordinate (c11) at (3,1.732);
	
	\coordinate (a) at (4.5,0.866);
	\coordinate (b) at (5.5,0.866);
	
	\coordinate (d1) at (7,-0.1);
	\coordinate (d2) at (8,-0.1);
	\coordinate (d3) at (9,-0.1);
	\coordinate (d4) at (6.5,0.766);
	\coordinate (d5) at (7.5,0.76);
	\coordinate (d6) at (8.5,0.766);
	\coordinate (d7) at (9.5,0.766);
	\coordinate (e4) at (6.5,0.966);
	\coordinate (e5) at (7.5,0.966);
	\coordinate (e6) at (8.5,0.966);
	\coordinate (e7) at (9.5,0.966);
	\coordinate (d9) at (7,1.832);
	\coordinate (d10) at (8,1.832);
	\coordinate (d11) at (9,1.832);
	
	\draw[edges] (c1) -- (c2) -- (c3);
	\draw[edget] (c4) -- (c1) -- (c5) -- (c2) -- (c6) -- (c3) -- (c7);
	\draw[edgess] (c4) -- (c5) -- (c6) -- (c7);
	\draw[edget] (c4) -- (c9) -- (c5) -- (c10) -- (c6) -- (c11) -- (c7);
	\draw[edges] (c9) -- (c10) -- (c11);
	
	\draw[arrow] (a) -- (b);
	
	\draw[edges] (d1) -- (d2) -- (d3);
	\draw[edget] (d4) -- (d1) -- (d5) -- (d2) -- (d6) -- (d3) -- (d7);
	\draw[edges] (d4) -- (d5) -- (d6) -- (d7);
	
	\draw[edges] (e4) -- (e5) -- (e6) -- (e7);
	\draw[edget] (e4) -- (d9) -- (e5) -- (d10) -- (e6) -- (d11) -- (e7);
	\draw[edges] (d9) -- (d10) -- (d11);
	
	\end{tikzpicture}
	\vspace{0 mm}
	\caption{Illustration of the splitting procedure. Spacelike interior edges (drawn in bold line) turn into spacelike boundary edges (drawn in normal line). Timelike edges (drawn in dashed line) remain unchanged.}
	\label{fig:split}
	\vspace{4 mm}
\end{figure}
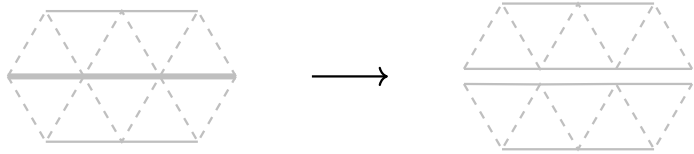

The action contribution from the time-step between $ n $ and $ n+1 $ takes the form
\begin{equation}\label{Sn+1}
\begin{aligned}
S_{n+1} = & \textstyle \frac{1}{2} \left( x_{n}^{T} K_{(n)}^{+} x_{n} + 2 x_{n}^{T} K_{(n,n+1)} x_{n+1} + x_{n+1}^{T} K_{(n+1)}^{-} x_{n+1} \right)
\end{aligned}
\end{equation} 
where $ x_{n} \in \mathbb{R}^{q} $ is the $ q $-tuple of field values $ \varphi_{i} $ in vertices $ i $ (including the virtual ones) belonging to time-slice $ n $. The matrix $ K_{(n,n+1)} $ describes the interaction along timelike edges so it does not come with any additional factor. With this, one easily identifies the matrices in \eqref{prepostmomentamatrix} as
\begin{equation}\label{LRK}
\begin{aligned}
L_{n}&= \textstyle K_{(n)}^{+} \qquad & R_{n}&=K_{(n,n+1)} \\
\bar{L}_{n+1}&= -R_{n}^{T} = -K_{(n+1,n)} \qquad & \bar{R}_{n+1}&= - K_{(n+1)}^{-} \\
\end{aligned}
\end{equation}
We see that \eqref{pardevsymm} and \eqref{pardevsymm2} indeed hold. The reader can also easily check that the individual contributions \eqref{Sn+1} give the action \eqref{sfactioncondensed},
\begin{equation}\label{actionadd}
S_{0t} = \sum_{n=0}^{t-1} S_{n+1}(x_{n},x_{n+1})
\end{equation}
as we desire. Let us remark that thanks to our assumption of a closed loop topology of each time-slice, there will be no timelike boundary, and all timelike edges will be found in the interior. On the other hand, due to our splitting of the lattice into individual time-steps, all spacelike edges will be on the boundary. Therefore \eqref{wijint} simplifies to
\begin{equation}\label{wijint2}
w_{ij} = \begin{cases}
-2 & \text{if } ij \text{ is an interior timelike edge} \\
1/2 & \text{if } ij \text{ is a boundary spacelike edge}
\end{cases}
\end{equation}\\

\subsection{One-Step Examples}
We can move on to discuss particular examples of lattice time-steps. We craft them so that they are as simple as possible and at the same time make visible the full range of the model's behavior. On the most basic level, there are three situations with different implications for the evolution. First, the vertices of the lattice are equally distributed amongst time-slices and well connected; then the system turns out regular. Second, the number of well connected vertices decreases in a time-step which results in a pre-constraint. Third, the number of well connected vertices increases in a time-step which results in a free parameter of the evolution. The first three examples are supposed to illustrate these cases. Last but not least, we should remark that the regularity of the evolution is not only dependent on the numbers of vertices at subsequent time-slices, but also on their connectivity. If the connectivity is poor, we intuitively feel that the system will be irregular, because the lattice will obstruct propagation of degrees of freedom. However, there are occasions on which our intuition can be misleading. We demonstrate this fact by one bonus example.\\

\begin{example}\label{ex:1}
	First we consider a time-step between time-slices 0 and 1 with exactly three vertices at each time-slice. The lattice is depicted in Fig. \ref{fig:lat1}.\\
	
	\begin{figure}[h!]
		\centering
		\begin{tikzpicture}[scale=1]
		\tikzset{
			vertex/.style={
				shape=circle,fill=lightgray!100,minimum size=3mm,inner sep=0.2mm, label={[fill=none,label distance=1mm]90:#1}
			},
			vertexwhite/.style={
				shape=circle,fill=white!100,minimum size=10mm,inner sep=0.2mm, label={[fill=none,label distance=1mm]90:#1}
			},
			edge/.style={
				draw,-,color=lightgray!100,line width=0.3mm
			},
			edget/.style={
				draw,dashed,color=lightgray!100,line width=0.3mm
			}
		}
		
		\coordinate (cia) at (-2,0);
		\coordinate (c1) at (-1,0);
		\coordinate (c2) at (0,0);
		\coordinate (c3) at (1,0);
		\coordinate (cib) at (2,0);
		\coordinate (cic) at (-1.5,0.866);
		\coordinate (c4) at (-0.5,0.866);
		\coordinate (c5) at (0.5,0.866);
		\coordinate (c6) at (1.5,0.866);
		\coordinate (cid) at (2.5,0.866);

		\draw[edge] (cia) -- (c1) -- (c2) -- (c3) -- (cib);
		\draw[edget] (cic) -- (c1) -- (c4) -- (c2) -- (c5) -- (c3) -- (c6) -- (cib);
		\draw[edge] (cic) -- (c4) -- (c5) -- (c6) -- (cid);
		
		\node[vertex] at (c1) {1};
		\node[vertex] at (c2) {2};
		\node[vertex] at (c3) {3};
		\node[vertex] at (c4) {4};
		\node[vertex] at (c5) {5};
		\node[vertex] at (c6) {6};
		\node[vertexwhite] at (cia) {};
		\node[vertexwhite] at (cib) {};
		\node[vertexwhite] at (cic) {};
		\node[vertexwhite] at (cid) {};
		\node[] at (3.5,0) {$ n = 0 $};
		\node[] at (3.5,0.866) {$ n = 1 $};

		
		\end{tikzpicture}
		\vspace{0 mm}
		\caption{Diagram of the time-step lattice of Example \ref{ex:1}. The fragments of edges on the right are meant to be connected to the fragments on the left, so that each time-slice is a closed loop.}
		\label{fig:lat1}
		\vspace{4 mm}
	\end{figure}

	We have $ q = 3 $, $ t = 1 $ and $ N = 6 $. The dynamical matrix \eqref{Kij} is
	\begin{equation}\label{key}
	K = \begin{pmatrix}
	-3 & -1/2 & -1/2 & 2 & 0 & 2 \\
	& -3 & -1/2 & 2 & 2 & 0 \\
	& & -3 & 0 & 2 & 2 \\
	& & & -3 & -1/2 & -1/2 \\
	& & & & -3 & -1/2 \\
	& & & & & -3 \\
	\end{pmatrix}
	\end{equation}
	Because the matrix is symmetric, we only write the upper triangle. One can check that the row and column sums of $ K $ are indeed zero. It is easy to read out the matrices of \eqref{LRK}. Since we have a single time-step, no splitting is needed. We get
	\begin{equation}\label{key}
	L_{0} = - \frac{1}{2} \begin{pmatrix}
	6 & 1 & 1 \\
	1 & 6 & 1 \\
	1 & 1 & 6
	\end{pmatrix}, \qquad R_{0} = \begin{pmatrix}
	2 & 0 & 2 \\
	2 & 2 & 0 \\
	0 & 2 & 2
	\end{pmatrix}, \qquad \bar{R}_{1} = - L_{0}
	\end{equation}
	At this point we can easily express the canonical evolution between time-slices 0 and 1 by plugging into \eqref{preconst}---\eqref{Fn+1}. Since $ R_{0} $ is regular, there is no pre-constraint, and the point-space $ \Lambda_{1} $ has dimension zero. In other words, the present single-time-step system is regular. For the evolution we get simply $ y_{1} = E_{0} y_{0} $ with
	\begin{equation}\label{key}
	E_{0} = \frac{1}{4} \begin{pmatrix}
	3 & 3 & -2 & 1 & 1 & -1 \\
	-2 & 3 & 3 & -1 & 1 & 1 \\
	3 & -2 & 3 & 1 & -1 & 1 \\
	3/2 & 3/2 & -3 & 3 & 3 & -2 \\
	-3 & 3/2 & 3/2 & -2 & 3 & 3 \\
	3/2 & -3 & 3/2 & 3 & -2 & 3 \\
	\end{pmatrix}
	\end{equation}
	This solves uniquely any initial-value problem. For instance, the canonical initial vector $ y_{0} = \begin{pmatrix} 1 & 0 & 0 & 0 & 0 & 0 \end{pmatrix}^{T} $ evolves into $ y_{1} = \frac{1}{4} \begin{pmatrix} 3 & -2 & 3 & 3/2 & -3 & 3/2 \end{pmatrix}^{T} $. The reader can check that the symplectic form is fully conserved.\\
	
	Eventually, let us switch to the adapted coordinates. First we perform the singular value decomposition $ R_{0} = U \Sigma V^{T} $ with the result
	\begin{equation}\label{key}
	U = \frac{1}{\sqrt{6}} \begin{pmatrix}
	\sqrt{2} & 0 & -2 \\
	\sqrt{2} & -\sqrt{3} & 1 \\
	\sqrt{2} & \sqrt{3} & 1
	\end{pmatrix}, \qquad \Sigma = \begin{pmatrix}
	4 & 0 & 0 \\
	0 & 2 & 0 \\
	0 & 0 & 2
	\end{pmatrix}, \qquad V = \frac{1}{\sqrt{6}} \begin{pmatrix}
	\sqrt{2} & -\sqrt{3} & -1 \\
	\sqrt{2} & 0 & 2 \\
	\sqrt{2} & \sqrt{3} & -1
	\end{pmatrix}
	\end{equation}
	From \eqref{dotW} and \eqref{ddotWorig} we have
	\begin{equation}\label{Wmatrices}
	\dot{W}_{0} = \begin{pmatrix}
	-U^{T} L_{0} & U^{T} \\
	-U^{T} & 0
	\end{pmatrix}, \qquad \ddot{W}_{1} = \begin{pmatrix}
	\bar{\Sigma} V^{T} & 0 \\
	-\bar{\Sigma}^{-1} V^{T} \bar{R}_{1} & \bar{\Sigma}^{-1} V^{T}
	\end{pmatrix}
	\end{equation}
	These give
	\begin{equation}\label{dotW0ex1}
	\dot{W}_{0} = \frac{1}{2\sqrt{6}} \begin{pmatrix}
	8 \sqrt{2} & 8 \sqrt{2} & 8 \sqrt{2} & 2 \sqrt{2} &  2 \sqrt{2} & 2 \sqrt{2}\\
	0 & - 5 \sqrt{3} & 5 \sqrt{3} & 0 &  -2 \sqrt{3} & 2 \sqrt{3}\\
	-10 &  5 & 5 & -4 &  2 & 2\\
	- 2 \sqrt{2} &  - 2 \sqrt{2} & - 2 \sqrt{2} & 0 & 0 & 0\\
	0 &  2 \sqrt{3} & - 2 \sqrt{3} & 0 & 0 & 0\\
	4 &  -2 & -2 & 0 & 0 & 0\\
	\end{pmatrix}
	\end{equation}
	and
	\begin{equation}\label{ddotW1ex1}
	\ddot{W}_{1} = \frac{1}{4 \sqrt{6}} \begin{pmatrix}
	16 \sqrt{2} & 16 \sqrt{2} & 16 \sqrt{2} & 0 & 0 & 0 \\
	-8 \sqrt{3} & 0 & 8 \sqrt{3} & 0 & 0 & 0 \\
	-8 & 16 & -8 & 0 & 0 & 0 \\
	- 4 \sqrt{2} & - 4 \sqrt{2} & - 4 \sqrt{2} & \sqrt{2} & \sqrt{2} & \sqrt{2}  \\
	5 \sqrt{3} & 0 & - 5 \sqrt{3} & - 2 \sqrt{3} & 0 & 2 \sqrt{3} \\
	5 & -10 & 5 & -2 & 4 & -2  \\
	\end{pmatrix}
	\end{equation}
	If we express the above vectors $ y_{0} $ and $ y_{1} $ in the adapted bases by transforming them with the matrices \eqref{dotW0ex1} and \eqref{ddotW1ex1}, we get $ \dot{y}_{0} = \dot{W}_{0} y_{0} = \frac{1}{\sqrt{6}} \begin{pmatrix} 4 \sqrt{2} & 0 & -5 & -\sqrt{2} & 0 & 2 \end{pmatrix}^{T} $ and $ \ddot{y}_{1} = \ddot{W}_{1} y_{1} = \dot{y}_{0} $, i.e., the adapted coordinates of the vector are conserved as expected.
\end{example}

\vspace{\baselineskip}

\begin{example}\label{ex:2}
	Now let us consider a different triangular lattice with three vertices at time-slice 0 but only one vertex at time-slice 1, as illustrated by Fig. \ref{fig:lat2}. We presume that because of the loss of degrees of freedom in the time-step from 0 to 1, the system will be irregular and a non-trivial pre-constraint will arise.\\
	
	\begin{figure}[h!]
		\centering
		\begin{tikzpicture}[scale=1]
		\tikzset{
			vertex/.style={
				shape=circle,fill=lightgray!100,minimum size=3mm,inner sep=0.2mm, label={[fill=none,label distance=1mm]90:#1}
			},
			vertexwhite/.style={
				shape=circle,fill=white!100,minimum size=10mm,inner sep=0.2mm, label={[fill=none,label distance=1mm]90:#1}
			},
			vertexvirtual/.style={
				shape=circle,draw=lightgray!100,fill=white!100,line width=0.3mm,minimum size=2.7mm,inner sep=0.2mm, 	label={[fill=none,label distance=1mm]90:#1}
			},
			edge/.style={
				draw,-,color=lightgray!100,line width=0.3mm
			},
			edget/.style={
				draw,dashed,color=lightgray!100,line width=0.3mm
			},
		}
		
		\coordinate (cia) at (-2,0);
		\coordinate (c1) at (-1,0);
		\coordinate (c2) at (0,0);
		\coordinate (c3) at (1,0);
		\coordinate (cib) at (2,0);
		\coordinate (cic) at (-2,0.866);
		\coordinate (c4) at (-1,0.866);
		\coordinate (c5) at (0,0.866);
		\coordinate (c6) at (1,0.866);
		\coordinate (cid) at (2,0.866);

		\draw[edge] (cia) -- (c1) -- (c2) -- (c3) -- (cib);
		\draw[edget] (c1) -- (c5) -- (c2);
		\draw[edget] (c3) -- (c5);
		
		\node[vertex] at (c1) {1};
		\node[vertex] at (c2) {2};
		\node[vertex] at (c3) {3};
		\node[vertexvirtual] at (c4) {4};
		\node[vertex] at (c5) {5};
		\node[vertexwhite] at (cia) {};
		\node[vertexwhite] at (cib) {};
		\node[vertexwhite] at (cic) {};
		\node[vertexwhite] at (cid) {};
		\node[vertexvirtual] at (c6) {6};
		\node[] at (3.5,0) {$ n = 0 $};
		\node[] at (3.5,0.866) {$ n = 1 $};
		
		\end{tikzpicture}
		\vspace{0 mm}
		\caption{Diagram of the time-step lattice of Example \ref{ex:2}. It is made of three identical type 2-1 triangles. Vertices 4 and 6 are virtual. Dashed edges are timelike.}
		\label{fig:lat2}
		\vspace{4 mm}
	\end{figure}
	
	We have $ q = 3 $, $ t = 1 $ and $ N = 6 $ as before. We read out the dynamical matrix
	\begin{equation}\label{key}
	K = \begin{pmatrix}
	-1 & -1/2 & -1/2 & 0 & 2 & 0 \\
	& -1 & -1/2 & 0 & 2 & 0 \\
	& & -1 & 0 & 2 & 0 \\
	& & & 0 & 0 & 0 \\
	& & & & -6 & 0 \\
	& & & & & 0 \\
	\end{pmatrix}
	\end{equation}
	and the matrices governing the evolution
	\begin{equation}\label{key}
	L_{0} = - \frac{1}{2} \begin{pmatrix}
	2 & 1 & 1 \\
	1 & 2 & 1 \\
	1 & 1 & 2
	\end{pmatrix}, \qquad R_{0} = \begin{pmatrix}
	0 & 2 & 0 \\
	0 & 2 & 0 \\
	0 & 2 & 0
	\end{pmatrix}, \qquad \bar{R}_{1} = \begin{pmatrix}
	0 & 0 & 0 \\
	0 & 6 & 0 \\
	0 & 0 & 0
	\end{pmatrix}
	\end{equation}
	As we presumed, $ R_{0} $ is singular. We see that $ r_{0} = 1 $ and $ s_{0} = 2 $. According to \eqref{preconst}, there is a pre-constraint $ C_{0} y_{0} = 0 $ which must be satisfied if we want to evolve $ y_{0} $ to the next time-slice. To express the pre-constraint, we may take advantage of the singular value decomposition $ R_{0} = U \Sigma V^{T} $ with
	\begin{equation}\label{key}
	U = \frac{1}{\sqrt{6}} \begin{pmatrix}
	\sqrt{2} & -\sqrt{3} & -1 \\
	\sqrt{2} & 0 & 2 \\
	\sqrt{2} & \sqrt{3} & -1
	\end{pmatrix}, \qquad \Sigma = \begin{pmatrix}
	2 \sqrt{3} & 0 & 0 \\
	0 & 0 & 0 \\
	0 & 0 & 0
	\end{pmatrix}, \qquad V = \begin{pmatrix}
	0 & 0 & 1 \\
	1 & 0 & 0 \\
	0 & 1 & 0
	\end{pmatrix}
	\end{equation}
	We use $ P_{\mathcal{N}(R_{n}^{T})} = U_{2} U_{2}^{T} $ where
	\begin{equation}\label{key}
	U_{2} = \frac{1}{\sqrt{6}} \begin{pmatrix}
	-\sqrt{3} & -1 \\
	0 & 2 \\
	\sqrt{3} & -1
	\end{pmatrix}, \qquad P_{\mathcal{N}(R_{n}^{T})} = - \frac{1}{3} \begin{pmatrix}
	-2 & 1 & 1 \\
	1 & -2 & 1 \\
	1 & 1 & -2
	\end{pmatrix}
	\end{equation}
	and compute
	\begin{equation}\label{key}
	C_{0} = - \frac{1}{6} \begin{pmatrix}
	-2 & 1 & 1 & -4 & 2 & 2 \\
	1 & -2 & 1 & 2 & -4 & 2 \\
	1 & 1 & -2 & 2 & 2 & -4
	\end{pmatrix}
	\end{equation}
	The pre-constraint surface $ \mathcal{C}_{0}^{-} $ is identified as the null space of this matrix; it has dimension four. This is because the total dimension of the phase space is six and the pre-constraint has dimension two. The latter corresponds to the number of virtual vertices. A state $ y_{0} $ can be evolved to time-slice 1 if and only if it belongs to $ \mathcal{C}_{0}^{-} $. The evolved state is never unique, since it is given by $ y_{1} = E_{0} y_{0} + F_{1} \lambda_{1} $ with an arbitrary $ \lambda_{1} \in \mathbb{R}^{2} $. A quick calculation reveals
	\begin{equation}\label{E0F0ex2}
	E_{0} = \frac{1}{6} \begin{pmatrix}
	0 & 0 & 0 & 0 & 0 & 0 \\
	2 & 2 & 2 & 1 & 1 & 1 \\
	0 & 0 & 0 & 0 & 0 & 0 \\
	0 & 0 & 0 & 0 & 0 & 0 \\
	0 & 0 & 0 & 6 & 6 & 6 \\
	0 & 0 & 0 & 0 & 0 & 0
	\end{pmatrix}, \qquad F_{1} = \begin{pmatrix}
	0 & 1 \\
	0 & 0 \\
	1 & 0 \\
	0 & 0 \\
	0 & 0 \\
	0 & 0
	\end{pmatrix}
	\end{equation}
	One can observe that $ F_{1} \lambda_{1} $ adds an arbitrary contribution to the field values at virtual vertices 4 and 6. This makes perfect sense, because the virtual vertices have no physical meaning, so it would be strange if their associated field values were in any way determined. On the other hand, the momenta of the virtual vertices are fixed to zero, with no contribution from $ F_{1} \lambda_{1} $.\\
	
	Let us switch to the adapted coordinates. We can straightforwardly calculate the transformation matrices given as in \eqref{Wmatrices}, obtaining
	\begin{equation}\label{dotW0ex2}
	\dot{W}_{0} = \frac{1}{\sqrt{6}} \begin{pmatrix}
	2 \sqrt{2} & 2 \sqrt{2} & 2 \sqrt{2} & \sqrt{2} & \sqrt{2} & \sqrt{2}\\
	-\sqrt{3}/2 & 0 & \sqrt{3}/2 & -\sqrt{3} &  0 & \sqrt{3}\\
	-1/2 & 1 & -1/2 & -1 & 2 & -1 \\
	- \sqrt{2} &  - \sqrt{2} & - \sqrt{2} & 0 & 0 & 0\\
	\sqrt{3} &  0 & - \sqrt{3} & 0 & 0 & 0\\
	1 &  -2 & 1 & 0 & 0 & 0\\
	\end{pmatrix}
	\end{equation}
	and
	\begin{equation}\label{ddotW1ex2}
	\ddot{W}_{1} = \frac{1}{2 \sqrt{3}} \begin{pmatrix}
	0 & 12 & 0 & 0 & 0 & 0 \\
	0 & 0 & 2 \sqrt{3} & 0 & 0 & 0 \\
	2 \sqrt{3} & 0 & 0 & 0 & 0 & 0 \\
	0 & -6 & 0 & 0 & 1 & 0  \\
	0 & 0 & 0 & 0 & 0 & 2 \sqrt{3} \\
	0 & 0 & 0 & 2 \sqrt{3} & 0 & 0
	\end{pmatrix}
	\end{equation}
	If one needs the inversed versions of these matrices, which give explicitly the adapted bases (in language of the canonical ones), one can easily take the inverse by using \eqref{W-1}. Recalling Observations \ref{ob:ad1} and \ref{ob:ad2}, we can classify state-space vectors on both time-slices based on their adapted coordinates. Thus we know that $ \dot{y}_{0} $ belongs to the pre-constraint surface $ \mathcal{C}_{0}^{-} $ if and only if its second and third component are zero. The first and fourth component of $ \dot{y}_{0} $ represent field values and momenta, respectively, of vectors in the representative space $ \dot{\mathcal{C}}_{0}^{-} $. The fifth and sixth component of $ \dot{y}_{0} $ parametrize the null space $ \mathcal{N}_{\omega}(\mathcal{C}_{0}^{-}) $, which is a subspace of the pre-constraint surface $ \mathcal{C}_{0}^{-} $.\\
	
	For example, the vector $ y_{0} = \begin{pmatrix} 2 & 0 & 0 & -1 & 0 & 0 \end{pmatrix}^{T} $ clearly satisfies $ C_{0} y_{0} = 0 $, and thus belongs to the pre-constraint surface. Its adapted coordinates are $ \dot{y}_{0} = \dot{W}_{0} y_{0} = \frac{1}{\sqrt{6}} \begin{pmatrix} 3 \sqrt{2} & 0 & 0 & -2 \sqrt{2} & 2 \sqrt{3} & 2 \end{pmatrix}^{T} $. We see that this form confirms that $ y_{0} \in \mathcal{C}_{0}^{-} $. Moreover, it tells us that the vector has nonzero intersection with the null space $ \mathcal{N}_{\omega}(\mathcal{C}_{0}^{-}) $, and therefore does not belong to the representative space $ \dot{\mathcal{C}}_{0}^{-} $. Nevertheless, we can easily evolve it to time-slice 1 by using \eqref{uddotX}---\eqref{uddotMq} with the result $ \ddot{y}_{1} = \frac{1}{\sqrt{6}} \begin{pmatrix} 3 \sqrt{2} & \lambda_{1 1} & \lambda_{1 2} & -2 \sqrt{2} & 0 & 0 \end{pmatrix}^{T} $. Note that because of our choice of the adapted coordinates, the null space $ \mathcal{N}_{\omega}(\mathcal{C}_{1}^{+}) $ is parametrized by the second and third coordinate of $ \ddot{y}_{1} $ (which take the free parameters $ \lambda_{1 1}, \lambda_{1 2} $), while the complement $ \mathcal{P}_{1} \smallsetminus \mathcal{C}_{1}^{+} $ of the post-constraint surface to the full phase space is parametrized by the fifth and the sixth component. The vector $ y_{1} $ automatically belongs to the post-constraint surface $ \mathcal{C}_{1}^{+} $, thus the two zeros at these positions. The representative-space components are conserved. One can easily check that $ \ddot{y}_{1} = \ddot{W}_{1} y_{1} $ with $ y_{1} $ computed from \eqref{E0F0ex2}.

\end{example}

\vspace{\baselineskip}

\begin{example}\label{ex:3}
	Now we provide the third promised instance of a triangular one-step lattice, which is the time-reversed version of that from Example \ref{ex:2}. Its depiction is given in Fig. \ref{fig:lat3}. We again expect to find the system irregular, but since degrees of freedom are added, the irregularity should give rise to a nontrivial space of free parameters.
	
	\begin{figure}[h!]
		\centering
		\begin{tikzpicture}[scale=1]
		\tikzset{
			vertex/.style={
				shape=circle,fill=lightgray!100,minimum size=3mm,inner sep=0.2mm, label={[fill=none,label distance=1mm]90:#1}
			},
			vertexwhite/.style={
				shape=circle,fill=white!100,minimum size=10mm,inner sep=0.2mm, label={[fill=none,label distance=1mm]90:#1}
			},
			vertexvirtual/.style={
				shape=circle,draw=lightgray!100,fill=white!100,line width=0.3mm,minimum size=2.7mm,inner sep=0.2mm, 	label={[fill=none,label distance=1mm]90:#1}
			},
			edge/.style={
				draw,-,color=lightgray!100,line width=0.3mm
			},
			edget/.style={
				draw,dashed,color=lightgray!100,line width=0.3mm
			},
		}
		
		\coordinate (cia) at (-2,0);
		\coordinate (c1) at (-1,0);
		\coordinate (c2) at (0,0);
		\coordinate (c3) at (1,0);
		\coordinate (cib) at (2,0);
		\coordinate (cic) at (-2,0.866);
		\coordinate (c4) at (-1,0.866);
		\coordinate (c5) at (0,0.866);
		\coordinate (c6) at (1,0.866);
		\coordinate (cid) at (2,0.866);
		
		\draw[edget] (c4) -- (c2) -- (c6);
		\draw[edget] (c5) -- (c2);
		\draw[edge] (cic) -- (c4) -- (c5) -- (c6) -- (cid);
		
		\node[vertexvirtual] at (c1) {1};
		\node[vertex] at (c2) {2};
		\node[vertexvirtual] at (c3) {3};
		\node[vertex] at (c4) {4};
		\node[vertex] at (c5) {5};
		\node[vertexwhite] at (cia) {};
		\node[vertexwhite] at (cib) {};
		\node[vertexwhite] at (cic) {};
		\node[vertexwhite] at (cid) {};
		\node[vertex] at (c6) {6};
		\node[] at (3.5,0) {$ n = 0 $};
		\node[] at (3.5,0.866) {$ n = 1 $};
		\end{tikzpicture}
		\vspace{0 mm}
		\caption{Diagram of the time-step lattice of Example \ref{ex:3}. It is again made of three identical triangles, this time of type 1-2. Vertices 1 and 3 are virtual.}
		\label{fig:lat3}
		\vspace{4 mm}
	\end{figure}

	We have
	\begin{equation}\label{key}
	K = \begin{pmatrix}
	0 & 0 & 0 & 0 & 0 & 0 \\
	& -6 & 0 & 2 & 2 & 2 \\
	& & 0 & 0 & 0 & 0 \\
	& & & -1 & -1/2 & -1/2 \\
	& & & & -1 & -1/2 \\
	& & & & & -1 \\
	\end{pmatrix}
	\end{equation}
	and thus
	\begin{equation}\label{key}
	L_{0} = \begin{pmatrix}
	0 & 0 & 0 \\
	0 & -6 & 0 \\
	0 & 0 & 0
	\end{pmatrix}, \qquad R_{0} = \begin{pmatrix}
	0 & 0 & 0 \\
	2 & 2 & 2 \\
	0 & 0 & 0
	\end{pmatrix}, \qquad \bar{R}_{1} = \frac{1}{2} \begin{pmatrix}
	2 & 1 & 1 \\
	1 & 2 & 1 \\
	1 & 1 & 2
	\end{pmatrix}
	\end{equation}
	The computation is fully analogical to the preceding case. We find
	\begin{equation}\label{key}
	U = \begin{pmatrix}
	0 & 0 & 1 \\
	1 & 0 & 0 \\
	0 & 1 & 0
	\end{pmatrix}, \qquad \Sigma = \begin{pmatrix}
	2 \sqrt{3} & 0 & 0 \\
	0 & 0 & 0 \\
	0 & 0 & 0
	\end{pmatrix}, \qquad V = \frac{1}{\sqrt{6}} \begin{pmatrix}
	\sqrt{2} & -\sqrt{3} & -1 \\
	\sqrt{2} & 0 & 2 \\
	\sqrt{2} & \sqrt{3} & -1
	\end{pmatrix}
	\end{equation}
	which gives
	\begin{equation}\label{key}
	C_{0} = \begin{pmatrix}
	0 & 0 & 0 & 1 & 0 & 0 \\
	0 & 0 & 0 & 0 & 0 & 0 \\
	0 & 0 & 0 & 0 & 0 & 1
	\end{pmatrix}
	\end{equation}
	There is a nontrivial pre-constraint, even though the number of physical degrees of freedom increases. The pre-constraint surface $ \mathcal{C}_{0}^{-} $ has dimension four---the same as in the previous example. However, observe that the pre-constraint is only concerned with the momenta at virtual vertices. It reflects the fact that the field at virtual vertices, however it looks, does not propagate to the future, therefore its momenta must be zero. The evolution of vectors $ y_{0} \in \mathcal{C}_{0}^{-} $ is described by
	\begin{equation}\label{E0F0ex3}
	E_{0} = \frac{1}{6} \begin{pmatrix}
	0 & 6 & 0 & 0 & 1 & 0 \\
	0 & 6 & 0 & 0 & 1 & 0 \\
	0 & 6 & 0 & 0 & 1 & 0 \\
	0 & 0 & 0 & 0 & 2 & 0 \\
	0 & 0 & 0 & 0 & 2 & 0 \\
	0 & 0 & 0 & 0 & 2 & 0
	\end{pmatrix}, \qquad F_{1} = \frac{1}{2 \sqrt{6}} \begin{pmatrix}
	-2 \sqrt{3} & -2 \\
	0 & 4 \\
	2 \sqrt{3} & -2 \\
	- \sqrt{3} & -1 \\
	0 & 2 \\
	\sqrt{3} & -1
	\end{pmatrix}
	\end{equation}
	As one expects, the evolution only takes into account the variables at vertex 2, the field values and momenta at virtual vertices 1 and 3 are irrelevant. This time, the free parameters $ \lambda_{1 1}, \lambda_{1 2} $ will influence all resulting field values and momenta.\\
	
	We can go on to the adapted bases. The transformation matrices are
	\begin{equation}\label{dotW0ex3}
	\dot{W}_{0} = \begin{pmatrix}
	0 & 6 & 0 & 0 & 1 & 0 \\
	0 & 0 & 0 & 0 & 0 & 1 \\
	0 & 0 & 0 & 1 & 0 & 0 \\
	0 &  - 1 & 0 & 0 & 0 & 0\\
	0 &  0 & - 1 & 0 & 0 & 0\\
	-1 &  0 & 0 & 0 & 0 & 0\\
	\end{pmatrix}
	\end{equation}
	and
	\begin{equation}\label{ddotW1ex3}
	\ddot{W}_{1} = \frac{1}{6 \sqrt{6}} \begin{pmatrix}
	12 \sqrt{6} & 12 \sqrt{6} & 12 \sqrt{6} & 0 & 0 & 0 \\
	- 6 \sqrt{3} & 0 & 6 \sqrt{3} & 0 & 0 & 0 \\
	-6 & 12 & -6 & 0 & 0 & 0 \\
	-2 \sqrt{6} & -2 \sqrt{6} & -2 \sqrt{6} & \sqrt{6} & \sqrt{6} & \sqrt{6}  \\
	3 \sqrt{3} & 0 & - 3 \sqrt{3} & -6 \sqrt{3} & 0 & 6 \sqrt{3} \\
	3 & -6 & 3 & -6 & 12 & -6
	\end{pmatrix}
	\end{equation}
	The classification of vectors in adapted coordinates is the same as before. To try it out, take for example the initial vector $ y_{0} = \begin{pmatrix} 0 & 1 & 0 & 0 & 1 & 0 \end{pmatrix}^{T} $. It clearly satisfies the pre-constraint $ C_{0} y_{0} = 0 $. In the adapted coordinates, it looks like $ \dot{y}_{0} = \dot{W}_{0} y_{0} = \begin{pmatrix} 7 & 0 & 0 & -1 & 0 & 0 \end{pmatrix}^{T} $. The zeros at positions two and three confirm that $ y_{0} \in \mathcal{C}_{0}^{-} $. The zeros at positions five and six are the result of our arbitrary choice, and mean that $ y_{0} $ has zero intersection with the null space $ \mathcal{N}_{\omega}(\mathcal{C}_{0}^{-}) $ and so $ y_{0} \in \dot{\mathcal{C}}_{0}^{-} $. Let us evolve this vector to time-slice 1. According to our trivial evolution prescription, we write $ \ddot{y}_{1} = \begin{pmatrix} 7 & \lambda_{1 1} & \lambda_{1 2} & -1 & 0 & 0 \end{pmatrix}^{T} $. As always, it holds $ \ddot{y}_{1} = \ddot{W}_{1} y_{1} $. The last two components of $ \ddot{y}_{1} $ tell us that we are in the post-constraint surface $ \mathcal{C}_{1}^{+} $. There are two free parameters entering the evolution just as in Example \ref{ex:2}. However, unlike in Example \ref{ex:2}, the present free parameters have physical significance, since they contribute to the field values and momenta at real vertices 4, 5 and 6 of the lattice.

\end{example}

\vspace{\baselineskip}

\begin{example}\label{ex:4}
	Eventually, let us consider a one-step lattice analogical to the lattice of Example \ref{ex:1}, but with only two vertices per time-slice. The diagram is given in Figure \ref{fig:lat4}. Because of the smaller number of vertices, the spacelike edges at both time-slices are doubled (we keep two edges between the two vertices of each time-slice in order to satisfy our assumption that each time-slice is a closed loop), and every vertex shares at least one edge with every other. This makes the lattice slightly unusual; nevertheless, it still formally complies to our assumptions.\\

	\begin{figure}[h!]
		\centering
		\begin{tikzpicture}[scale=1]
		\tikzset{
			vertex/.style={
				shape=circle,fill=lightgray!100,minimum size=3mm,inner sep=0.2mm, label={[fill=none,label distance=1mm]90:#1}
			},
			vertexwhite/.style={
				shape=circle,fill=white!100,minimum size=10mm,inner sep=0.2mm, label={[fill=none,label distance=1mm]90:#1}
			},
			edge/.style={
				draw,-,color=lightgray!100,line width=0.3mm
			},
			edget/.style={
				draw,dashed,color=lightgray!100,line width=0.3mm
			},
		}
		
		\coordinate (cia) at (-2,0);
		\coordinate (c1) at (-1,0);
		\coordinate (c2) at (0,0);
		\coordinate (cib) at (1,0);
		\coordinate (cic) at (-1.5,0.866);
		\coordinate (c4) at (-0.5,0.866);
		\coordinate (c5) at (0.5,0.866);
		\coordinate (cid) at (1.5,0.866);

		\draw[edge] (cia) -- (c1) -- (c2) -- (cib);
		\draw[edget] (cic) -- (c1) -- (c4) -- (c2) -- (c5) -- (cib);
		\draw[edge] (cic) -- (c4) -- (c5) -- (cid);
		
		\node[vertex] at (c1) {1};
		\node[vertex] at (c2) {2};
		\node[vertex] at (c4) {3};
		\node[vertex] at (c5) {4};
		\node[vertexwhite] at (cia) {};
		\node[vertexwhite] at (cib) {};
		\node[vertexwhite] at (cic) {};
		\node[vertexwhite] at (cid) {};
		\node[] at (2.5,0) {$ n = 0 $};
		\node[] at (2.5,0.866) {$ n = 1 $};

		
		\end{tikzpicture}
		\vspace{0 mm}
		\caption{Diagram of the time-step lattice of Example \ref{ex:4}. It is formed by four triangles (one doubled type 2-1 triangle and one doubled type 1-2 triangle).}
		\label{fig:lat4}
		\vspace{4 mm}
	\end{figure}
	
	Let us work out the corresponding matrices. We can take $ t = 1 $, $ q = 2 $, and so $ N = 4 $. The dynamical matrix is
	\begin{equation}\label{key}
	K = \begin{pmatrix}
	-3 & -1 & 2 & 2 \\
	& -3 & 2 & 2  \\
	& & -3 & -1 \\
	& & & -3
	\end{pmatrix}
	\end{equation}
	We have implemented the double edges simply by summing up the weights. It holds
	\begin{equation}\label{key}
	L_{0} = - \begin{pmatrix}
	3 & 1  \\
	1 & 3
	\end{pmatrix}, \qquad R_{0} = \begin{pmatrix}
	2 & 2 \\
	2 & 2
	\end{pmatrix}, \qquad \bar{R}_{1} = \begin{pmatrix}
	3 & 1  \\
	1 & 3
	\end{pmatrix}
	\end{equation}
	Now the catch is clear: the matrix $ R_{0} $ is not regular, instead $ r_{0} = 1 $ and $ s_{0} = 1 $. The singular value decomposition of $ R_{0} $ results in
	\begin{equation}\label{key}
	U = \frac{1}{\sqrt{2}} \begin{pmatrix}
	1 & -1 \\
	1 & 1 
	\end{pmatrix}, \qquad \Sigma = \begin{pmatrix}
	4 & 0 \\
	0 & 0 
	\end{pmatrix}, \qquad V = \frac{1}{\sqrt{2}} \begin{pmatrix}
	1 & -1 \\
	1 & 1
	\end{pmatrix}
	\end{equation}
	(note that $ U = V $, this is because $ R_{0} $ is symmetric) which gives
	\begin{equation}\label{key}
	C_{0} = \frac{1}{2} \begin{pmatrix}
	2 & -2 & 1 & -1 \\
	-2 & 2 & -1 & 1
	\end{pmatrix}
	\end{equation}
	The pre-constraint surface $ \mathcal{C}_{0}^{-} $ is not the whole $ \mathcal{P}_{0} $, it has dimension three. For vectors $ y_{0} $ in $ \mathcal{C}_{0}^{-} $, the evolution is fixed by
	\begin{equation}\label{E0F0ex4}
	E_{0} = \frac{1}{8} \begin{pmatrix}
	4 & 4 & 1 & 1 \\
	4 & 4 & 1 & 1 \\
	0 & 0 & 4 & 4 \\
	0 & 0 & 4 & 4 
	\end{pmatrix}, \qquad F_{1} = \frac{1}{\sqrt{2}} \begin{pmatrix}
	-1 \\
	1 \\
	-2 \\
	2
	\end{pmatrix}
	\end{equation}
	Looking at $ F_{1} $, one can see that the dimension of the null space $ \mathcal{N}_{\omega}(\mathcal{C}_{1}^{+}) $ is one, i.e., there is one free parameter $ \lambda_{1 1} $ of the evolution. The transformation matrices to the adapted bases are
	\begin{equation}\label{dotW0ex4ddotW1ex4}
	\dot{W}_{0} = \frac{1}{\sqrt{2}} \begin{pmatrix}
	4 & 4 & 1 & 1  \\
	-2 & -2 & -1 & 1  \\
	-1 & -1 & 0 & 0  \\
	1 & - 1 & 0 & 0
	\end{pmatrix}, \qquad \ddot{W}_{1} = \frac{1}{\sqrt{2}} \begin{pmatrix}
	4 & 4 & 0 & 0  \\
	-1 & -1 & 0 & 0  \\
	-1 & -1 & 1/4 & 1/4  \\
	2 & -2 & -1 & 1
	\end{pmatrix}
	\end{equation}
	Take for example the vector $ y_{0} = \begin{pmatrix} 1 & 0 & 0 & 2 \end{pmatrix}^{T} $, which satisfies the pre-constraint. Its adapted coordinates are $ \dot{y}_{0} = \frac{1}{\sqrt{2}} \begin{pmatrix} 5 & 0 & -1 & 1 \end{pmatrix}^{T} $. The zero at position two signifies that we are on the pre-constraint surface. The other three components parametrize the pre-constraint surface. In particular, the last component parametrizes $ \mathcal{N}_{\omega}(\mathcal{C}_{0}^{-}) $. Evolving to time-slice 1, we write $ \ddot{y}_{1} = \frac{1}{\sqrt{2}} \begin{pmatrix} 5 & \lambda_{1 1} & -1 & 0 \end{pmatrix}^{T} $. All this is a standard use of the formalism. It shows us that in spite of a constant number of degrees of freedom and high connectivity, the system we obtain is irregular. We interpret this behavior by saying that the lattice is \textit{overconnected}.

\end{example}

\vspace{\baselineskip}

\subsection{Multiple Time-Steps}

Eventually let us briefly comment on lattices with multiple time-steps. There is really nothing new to these, since they are but individual time-step lattices stacked on top of each other, forming a system arbitrarily extended in time. The evolution of such system is given simply as a series of the individual one-step evolution moves. Things can get more complicated if one asks questions about global properties of the system, e.g. when one wants to find initial data $ y_{0} $ which give rise to a global solution. In that case, one needs to trace back all the pre-constraints arising anywhere in the lattice. This is why we say that \textit{constraints propagate in time}, both to future and past. On the other hand, given a vector $ y_{0} \in \mathcal{P}_{n} $, one may evolve it by a series of local one-step evolution moves and in this way find its later versions. The solution (if it exists) may branch out with an increasing number of free parameters or tail off (meaning that it is restricted or even ceases to exist) due to pre-constraints. At all cases, we know well that the symplectic structure of solutions will be conserved in time.\\

To illustrate some of the possible behavior of multistep systems, we offer two examples. Both are built up from the time-step lattices of Examples \ref{ex:1}---\ref{ex:3} and extend over three time-steps. In other aspects, they are quite different.\\

\begin{example}
	The first case is depicted in Fig. \ref{fig:multilat1}. It starts and ends with a single vertex, but widens in between. Nevertheless, this widening has little effect on the propagating degrees of freedom, since the free parameters arising during time-step between 0 and 1 will be eventually diminished by the pre-constraint at time-slice 2.\\

	\begin{figure}[h!]
		\centering
		\begin{tikzpicture}[scale=1]
		\tikzset{
			vertex/.style={
				shape=circle,fill=lightgray!100,minimum size=3mm,inner sep=0.2mm, label={[fill=none,label distance=1mm]90:#1}
			},
			vertexwhite/.style={
				shape=circle,fill=white!100,minimum size=10mm,inner sep=0.2mm, label={[fill=none,label distance=1mm]90:#1}
			},
			vertexvirtual/.style={
				shape=circle,draw=lightgray!100,fill=white!100,line width=0.3mm,minimum size=2.7mm,inner sep=0.2mm, 	label={[fill=none,label distance=1mm]90:#1}
			},
			edge/.style={
				draw,-,color=lightgray!100,line width=0.3mm
			},
			edget/.style={
				draw,dashed,color=lightgray!100,line width=0.3mm
			},
		}
		
		\coordinate (cia) at (-2,0);
		\coordinate (c1) at (-1,0);
		\coordinate (c2) at (0,0);
		\coordinate (c3) at (1,0);
		\coordinate (cib) at (2,0);
		
		\coordinate (cic) at (-2,0.866);
		\coordinate (c4) at (-1,0.866);
		\coordinate (c5) at (0,0.866);
		\coordinate (c6) at (1,0.866);
		\coordinate (cid) at (2,0.866);
		
		\coordinate (cie) at (-1.5,1.732);
		\coordinate (c7) at (-0.5,1.732);
		\coordinate (c8) at (0.5,1.732);
		\coordinate (c9) at (1.5,1.732);
		\coordinate (cif) at (2.5,1.732);	
		
		\coordinate (cig) at (-1.5,2.598);
		\coordinate (c10) at (-0.5,2.598);
		\coordinate (c11) at (0.5,2.598);
		\coordinate (c12) at (1.5,2.598);
		\coordinate (cih) at (2.5,2.598);

		\draw[edget] (c4) -- (c2) -- (c6);
		\draw[edget] (c5) -- (c2);
		\draw[edge] (cic) -- (c4) -- (c5) -- (c6) -- (cid);
		\draw[edget] (cie) -- (c4) -- (c7) -- (c5) -- (c8) -- (c6) -- (c9) -- (cid);
		\draw[edge] (cie) -- (c7) -- (c8) -- (c9) -- (cif);
		\draw[edget] (c7) -- (c11) -- (c8);
		\draw[edget] (c11) -- (c9);
		
		\node[vertexvirtual] at (c1) {1};
		\node[vertex] at (c2) {2};
		\node[vertexvirtual] at (c3) {3};
		\node[vertex] at (c4) {4};
		\node[vertex] at (c5) {5};
		\node[vertex] at (c6) {6};
		\node[vertex] at (c7) {7};
		\node[vertex] at (c8) {8};
		\node[vertex] at (c9) {9};
		\node[vertexvirtual] at (c10) {10};
		\node[vertex] at (c11) {11};
		\node[vertexvirtual] at (c12) {12};
		
		\node[vertexwhite] at (cia) {};
		\node[vertexwhite] at (cib) {};
		\node[vertexwhite] at (cic) {};
		\node[vertexwhite] at (cid) {};
		\node[vertexwhite] at (cie) {};
		\node[vertexwhite] at (cif) {};
		\node[vertexwhite] at (cig) {};
		\node[vertexwhite] at (cih) {};
		
		\node[] at (3.5,0) {$ n = 0 $};
		\node[] at (3.5,0.866) {$ n = 1 $};
		\node[] at (3.5,1.732) {$ n = 2 $};
		\node[] at (3.5,2.598) {$ n = 3 $};
		
		\end{tikzpicture}
		\vspace{0 mm}
		\caption{Diagram of a lattice with a temporal widening.}
		\label{fig:multilat1}
		\vspace{4 mm}
	\end{figure}

\end{example}

\begin{example}
	The second case of a multistep system is depicted in Fig. \ref{fig:multilat2}. This time the lattice begins and ends with three vertices which carry three degrees of freedom. The evolution between time-slices 0 and 1 is regular, but at time-slice 2 the lattice narrows down to a single vertex, thus obstructing the propagation of degrees of freedom. In result, the number of propagating degrees of freedom is restricted to one.\\
	
	\begin{figure}[h!]
		\centering
		\begin{tikzpicture}[scale=1]
		\tikzset{
			vertex/.style={
				shape=circle,fill=lightgray!100,minimum size=3mm,inner sep=0.2mm, label={[fill=none,label distance=1mm]90:#1}
			},
			vertexwhite/.style={
				shape=circle,fill=white!100,minimum size=10mm,inner sep=0.2mm, label={[fill=none,label distance=1mm]90:#1}
			},
			vertexvirtual/.style={
				shape=circle,draw=lightgray!100,fill=white!100,line width=0.3mm,minimum size=2.7mm,inner sep=0.2mm, 	label={[fill=none,label distance=1mm]90:#1}
			},
			edge/.style={
				draw,-,color=lightgray!100,line width=0.3mm
			},
			edget/.style={
				draw,dashed,color=lightgray!100,line width=0.3mm
			},
		}
		
		\coordinate (cia) at (-2,0);
		\coordinate (c1) at (-1,0);
		\coordinate (c2) at (0,0);
		\coordinate (c3) at (1,0);
		\coordinate (cib) at (2,0);
		
		\coordinate (cic) at (-1.5,0.866);
		\coordinate (c4) at (-0.5,0.866);
		\coordinate (c5) at (0.5,0.866);
		\coordinate (c6) at (1.5,0.866);
		\coordinate (cid) at (2.5,0.866);
		
		\coordinate (cie) at (-1.5,1.732);
		\coordinate (c7) at (-0.5,1.732);
		\coordinate (c8) at (0.5,1.732);
		\coordinate (c9) at (1.5,1.732);
		\coordinate (cif) at (2.5,1.732);	
		
		\coordinate (cig) at (-1.5,2.598);
		\coordinate (c10) at (-0.5,2.598);
		\coordinate (c11) at (0.5,2.598);
		\coordinate (c12) at (1.5,2.598);
		\coordinate (cih) at (2.5,2.598);

		\draw[edge] (cia) -- (c1) -- (c2) -- (c3) -- (cib);
		\draw[edget] (cic) -- (c1) -- (c4) -- (c2) -- (c5) -- (c3) -- (c6) -- (cib);
		\draw[edge] (cic) -- (c4) -- (c5) -- (c6) -- (cid);
		\draw[edget] (c4) -- (c8) -- (c5);
		\draw[edget] (c8) -- (c6);
		\draw[edget] (c10) -- (c8) -- (c11);
		\draw[edget] (c8) -- (c12);
		\draw[edge] (cig) -- (c10) -- (c11) -- (c12) -- (cih);

		\node[vertex] at (c1) {1};
		\node[vertex] at (c2) {2};
		\node[vertex] at (c3) {3};
		\node[vertex] at (c4) {4};
		\node[vertex] at (c5) {5};
		\node[vertex] at (c6) {6};
		\node[vertexvirtual] at (c7) {7};
		\node[vertex] at (c8) {8};
		\node[vertexvirtual] at (c9) {9};
		\node[vertex] at (c10) {10};
		\node[vertex] at (c11) {11};
		\node[vertex] at (c12) {12};
		
		\node[vertexwhite] at (cia) {};
		\node[vertexwhite] at (cib) {};
		\node[vertexwhite] at (cic) {};
		\node[vertexwhite] at (cid) {};
		\node[vertexwhite] at (cie) {};
		\node[vertexwhite] at (cif) {};
		\node[vertexwhite] at (cig) {};
		\node[vertexwhite] at (cih) {};
		
		\node[] at (3.5,0) {$ n = 0 $};
		\node[] at (3.5,0.866) {$ n = 1 $};
		\node[] at (3.5,1.732) {$ n = 2 $};
		\node[] at (3.5,2.598) {$ n = 3 $};
		
		\end{tikzpicture}
		\vspace{0 mm}
		\caption{Diagram of a lattice with a temporal narrowing.}
		\label{fig:multilat2}
		\vspace{4 mm}
	\end{figure}

\end{example}

\vspace{\baselineskip}

\section{Conclusion}
In this work, the existing formalism of discrete canonical evolution was revisited and applied to the case of linear dynamical system, i.e., a system with vector configuration space and quadratic action. Thanks to the very strong assumption of linearity, we could rewrite the one-step evolution into a simple matrix form. The key object in this formulation is the matrix $ R_{n} $ describing interaction of variables between time-steps $ n $ and $ n+1 $. One can easily obtain the explicit Hamiltonian evolution map, all it takes is a singular value decomposition of $ R_{n} $. For an irregular system, the evolution map is only defined on a subset $ \mathcal{C}_{n}^{-} $ of the phase space $ \mathcal{P}_{n} $ called the pre-constraint surface and is neither unique nor symplectic.\\

In order to understand the symplectic structure of the model, we performed an analysis of the constraint surfaces in relation to the symplectic form. Then we constructed two special bases of the phase space $ \mathcal{P}_{n} $ at each discrete time $ n $ which explicitly separate the constraint surfaces, the null spaces and the subspaces of propagating degrees of freedom. The corresponding transformations were given in terms of two symplectic matrices \eqref{dotW} and \eqref{ddotWorig}. Thanks to this construction, we were able to introduce a reduced evolution map $ \mathbb{H}_{n+1}(0): \dot{\mathcal{C}}_{n}^{-} \rightarrow \dot{\mathcal{C}}_{n+1}^{+} $ which \textit{is symplectic}. Moreover, it was shown that in the \textit{adapted coordinates} given by the new bases, the general one-step evolution map assumes a trivial form. We also gave some theoretical background for considering global solutions.\\

In comparison with a previously published article \cite{Hoehn2014} on the topic, we made a~number of decisions that lead to a significantly different approach. We gave much more attention to the symplectic structure since we consider it to be the most important object of the model. We limited our analysis of the Hamiltonian evolution to a~single time-step, which resulted in a less complex and arguably more straightforward treatment. Unlike in \cite{Hoehn2014}, the construction is performed explicitly in terms of matrices present in the action or their singular value decomposition. This allows for a smooth implementation and an easy application of the formalism to any problem dealing with a linear discretely evolving system.\\

In the last section we provided a fully worked-out example of discrete linear evolution of massless scalar field on a fixed two-dimensional spacetime lattice. Although a toy model, it has a sound physical base, and one can easily think of generalizations. One can observe how the scalar field is shaped by the lattice, shaped by its geometry and causal structure. The example is closely related to the intended application of the present work, which is the description of matter or gauge fields on a fixed spacetime lattice in the manner similar to (quantum) field theory on curved spacetime. With this example we also demonstrated in simple fashion the most important features of the irregular linear evolution as well as its overall utility, and illustrated the previously introduced formalism.\\

The present analysis is supposed to serve one more purpose, namely to provide the necessary tools for a subsequent treatment of a quantum version of the considered system. This is an interesting and relevant problem, addressed before in Sec. 10 of \cite{Hoehn2014} and more generally in the preceding work \cite{Hoehn2014a, Hoehn2014b}. The pursuit craves for special preparation since the standard quantization procedure typically requires a one-to-one symplectic evolution map on the phase space which can be used to induce the corresponding unitary evolution map on the Hilbert space describing the system. However, the case of \textit{discrete linear canonical evolution} does not meet this requirement. There are of course ways of surpassing this problem, yet they are easier to follow with an appropriate set of tools and good understanding of the classical design. Within this paper, we have spent effort to increase our understanding and prepare grounds for the following work concerning the quantum analogue. It is currently under preparation and likely to appear sometime in the near future.\\

\vspace{\baselineskip}

\section*{Acknowledgments}

This work was supported by Charles University Grant Agency [Project No. 906419].

\vspace{3\baselineskip}

\bibliographystyle{unsrt}
\renewcommand{\bibname}{Bibliography}
\bibliography{bibliography}

\begin{thebibliography}{10}

\bibitem{Zhang2020}
Kuize Zhang, Lijun Zhang, and Lihua Xie.
\newblock {\em Discrete-Time and Discrete-Space Dynamical Systems}.
\newblock Communications and Control Engineering. Springer Nature Switzerland
  AG, 2020.

\bibitem{Zhang2006}
Wei-Bin Zhang.
\newblock {\em Discrete Dynamical Systems, Bifurcations and Chaos in
  Economics}.
\newblock Elsevier B. V., 2006.

\bibitem{Allen2004}
Linda J.~S. Allen, Bernd Aulbach, Saber Elaydi, and Robert Sacker, editors.
\newblock {\em Difference Equations and Discrete Dynamical Systems}.
\newblock World Scientific, 2004.
\newblock Proceedings of the 9th International Conference.

\bibitem{Zienkiewicz2013}
O.~C. Zienkiewicz, R.~L. Taylor, and J.~Z. Zhu.
\newblock {\em The Finite Element Method: Its Basis and Fundamentals}.
\newblock Elsevier, seventh edition edition, 2013.

\bibitem{Galor2007}
Oded Galor.
\newblock {\em Discrete Dynamical Systems}.
\newblock Springer-Verlag, 2007.

\bibitem{Hamber2009}
Herbert~W. Hamber.
\newblock {\em Quantum Gravitation – The Feynman Path Integral Approach}.
\newblock Springer-Verlag Berlin Heidelberg, 2009.

\bibitem{McDonald2010}
Jonathan~R. McDonald and Warner~A. Miller.
\newblock Coupling non-gravitational fields with simplicial spacetimes.
\newblock {\em Classical and Quantum Gravity}, 27:095011, 2010.
\newblock arXiv:1002.5001.

\bibitem{Sorkin1975}
Rafael Sorkin.
\newblock The electromagnetic field on a simplicial net.
\newblock {\em Journal of Mathematical Physics}, 16(2432), 1975.

\bibitem{Foster2004}
Brendan~Z. Foster and Ted Jacobson.
\newblock Quantum field theory on a growing lattice.
\newblock {\em Journal of High Energy Physics}, (8), 2004.
\newblock arXiv:hep-th/0407019.

\bibitem{Brower2016}
Richard~C. Brower, George~T. Fleming, Andrew~D. Gasbarro, Timothy~G. Raben,
  Chung-I Tan, and Evan~S. Weinberg.
\newblock Quantum finite elements for lattice field theory.
\newblock {\em Proceedings of Science, The 33rd International Symposium on
  Lattice Field Theory}, 2016.

\bibitem{Brower2018}
Richard~C. Brower, Michael Cheng, George~T. Fleming, Andrew~D. Gasbarro,
  Timothy~G. Raben, Chung-I Tan, and Evan~S. Weinberg.
\newblock Lattice phi4 field theory on riemann manifolds: Numerical tests for
  the 2-d ising cft on s2.
\newblock 2018.
\newblock arXiv:1803.08512v1.

\bibitem{Oriti2009}
Daniele Oriti, editor.
\newblock {\em Approaches to Quantum Gravity: Toward a New Understanding of
  Space, Time and Matter}.
\newblock Cambridge University Press, 2009.

\bibitem{Jha2018}
Raghav~G. Jha, Scott Bassler, J.~Laiho, and Judah Unmuth-Yockey.
\newblock Lattice quantum gravity with scalar fields.
\newblock {\em PoS Lattice 2018}, (043), 2018.
\newblock arXiv:1810.09946.

\bibitem{Mikovic2018}
Aleksandar Mikovic and Marko Vojinovic.
\newblock Quantum gravity for piecewise flat spacetimes.
\newblock {\em Proceedings of the MPHYS9 conference}, 2018.
\newblock arXiv:1804.02560.

\bibitem{Loll2019}
R.~Loll.
\newblock Quantum gravity from causal dynamical triangulations: A review.
\newblock 2019.
\newblock arXiv:1905.08669.

\bibitem{Dittrich2011}
Bianca Dittrich and Philipp~A. Höhn.
\newblock From covariant to canonical formulations of discrete gravity.
\newblock {\em Classical and Quantum Gravity}, 27(15), 2011.
\newblock arXiv:0912.1817.

\bibitem{Dittrich2012}
Bianca Dittrich and Philipp~A. Höhn.
\newblock Canonical simplicial gravity.
\newblock {\em Classical and Quantum Gravity}, 29(11), 2012.
\newblock arXiv:1108.1974.

\bibitem{Dittrich2013}
Bianca Dittrich and Philipp~A. Höhn.
\newblock Constraint analysis for variational discrete systems.
\newblock {\em Journal of Mathematical Physics}, 54:093505, 2013.
\newblock arXiv:1303.4294.

\bibitem{Dirac2001}
Paul A.~M. Dirac.
\newblock {\em Lectures on quantum mechanics}.
\newblock Dover Publications, 2001.

\bibitem{Hoehn2014}
Philipp~A. Höhn.
\newblock Classification of constraints and degrees of freedom for quadratic
  discrete actions.
\newblock {\em Journal of Mathematical Physics}, 55(113506), 2014.

\bibitem{Regge1961}
Tullio~E. Regge.
\newblock General relativity without coordinates.
\newblock {\em Nuovo Cimento}, 19:558--71, 1961.

\bibitem{Marsden2010}
Jerrold~E. Marsden and Tudor~S. Ratiu.
\newblock {\em Introduction to Mechanics and Symmetry}.
\newblock Springer, 2010.

\bibitem{Wald1994}
Robert~M. Wald, editor.
\newblock {\em Quantum Field Theory in Curved Spacetime and Black Hole
  Thermodynamics}.
\newblock Chicago Lectures in Physics. The University of Chicago Press, 1994.

\bibitem{Kaninsky2020}
Jakub Káninský.
\newblock Quantum mechanical observables under a symplectic transformation of
  coordinates.
\newblock 2020.
\newblock arXiv:2007.10858.

\bibitem{Golub2012}
Gene~H. Golub and Charles F.~Van Loan.
\newblock {\em Matrix Computations}.
\newblock Johns Hopkins Studies in the Mathematical Sciences. Johns Hopkins
  University Press, 2012.

\bibitem{Gosson2006}
Maurice de~Gosson.
\newblock {\em Symplectic Geometry and Quantum Mechanics}.
\newblock Birkhäuser Verlag, 2006.

\bibitem{Young1908}
W.~H. Young.
\newblock On the conditions for the reversibility of the order of partial
  differentiation.
\newblock {\em Proceedings Royale Society of Edinburgh}, 29:136--164, 1908.

\bibitem{Strobl2000}
Thomas Strobl.
\newblock Gravity in two spacetime dimensions.
\newblock 2000.
\newblock arXiv:hep-th/0011240.

\bibitem{Hoehn2014a}
Philipp~A. Höhn.
\newblock Quantization of systems with temporally varying discretization i:
  Evolving hilbert spaces.
\newblock {\em Journal of Mathematical Physic}, 55(083508), 2014.
\newblock arXiv:1401.6062.

\bibitem{Hoehn2014b}
Philipp~A. Höhn.
\newblock Quantization of systems with temporally varying discretization. ii:
  Local evolution moves.
\newblock {\em Journal of Mathematical Physics}, 55(103507), 2014.
\newblock arXiv:1401.7731.

\end{thebibliography}

\end{document}